\def\isarxiv{1} 
\definecolor{mydarkblue}{rgb}{0,0.08,0.45}
\theoremstyle{plain}
\newtheorem{theorem}{Theorem}[section]
\newtheorem{lemma}[theorem]{Lemma}
\newtheorem{definition}[theorem]{Definition}
\newtheorem{corollary}[theorem]{Corollary}
\newtheorem{fact}[theorem]{Fact}
\newtheorem{remark}[theorem]{Remark}
\newcommand{\ov}{\overline}
\newcommand{\R}{\mathbb{R}}
\newcommand{\F}{\mathbb{F}}
\DeclareMathOperator*{\E}{{\mathbb{E}}}
\DeclareMathOperator*{\Z}{\mathbb{Z}}
\DeclareMathOperator{\poly}{poly}
\newcommand*{\RN}[1]{\expandafter\@slowromancap\romannumeral #1@}
\begin{document}

\ifdefined\isarxiv

\date{}

\title{The Computational Limits of State-Space Models and Mamba via the Lens of Circuit Complexity}
\author{
Yifang Chen\thanks{\texttt{
yifangc@uchicago.edu}. The University of Chicago.}
\and
Xiaoyu Li\thanks{\texttt{
xiaoyu.li2@student.unsw.edu.au}. University of New South Wales.}
\and
Yingyu Liang\thanks{\texttt{
yingyul@hku.hk}. The University of Hong Kong. \texttt{
yliang@cs.wisc.edu}. University of Wisconsin-Madison.} 
\and
Zhenmei Shi\thanks{\texttt{
zhmeishi@cs.wisc.edu}. University of Wisconsin-Madison.}
\and 
Zhao Song\thanks{\texttt{ magic.linuxkde@gmail.com}. The Simons Institute for the Theory of Computing at UC Berkeley.}
}

\else

\title{The Computational Limits of State-Space Models and Mamba via the Lens of Circuit Complexity}

\author{%
  Yifang Chen\textsuperscript{1},~~~Xiaoyu Li\textsuperscript{2},~~~Yingyu Liang\textsuperscript{3,4},~~~Zhenmei Shi\textsuperscript{3},~~~Zhao Song\textsuperscript{5}\\
  \textsuperscript{1}The University of Chicago, ~~\textsuperscript{2}University of New South Wales, \\
  \textsuperscript{3}University of Wisconsin-Madison, ~~
  \textsuperscript{4}The University of Hong Kong, \\
  \textsuperscript{5}The Simons Institute for the Theory of Computing at UC Berkeley\\
  \texttt{yifangc@uchicago.edu, ~xiaoyu.li2@student.unsw.edu.au,
  ~yingyul@hku.hk,
  yliang@cs.wisc.edu,
  ~zhmeishi@cs.wisc.edu,
  ~magic.linuxkde@gmail.com
  }
}

\fi

\ifdefined\isarxiv
\begin{titlepage}
  \maketitle
  \begin{abstract}
In this paper, we analyze the computational limitations of Mamba and State-space Models (SSMs) by using the circuit complexity framework. Despite Mamba's stateful design and recent attention as a strong candidate to outperform Transformers, we have demonstrated that both Mamba and SSMs with $\mathrm{poly}(n)$-precision and constant-depth layers reside within the $\mathsf{DLOGTIME}$-uniform $\mathsf{TC}^0$ complexity class. This result indicates Mamba has the same computational capabilities as Transformer theoretically, and it cannot solve problems like arithmetic formula problems, boolean formula value problems, and permutation composition problems if $\mathsf{TC}^0 \neq \mathsf{NC}^1$. Therefore, it challenges the assumption Mamba is more computationally expressive than Transformers. Our contributions include rigorous proofs showing that Selective SSM and Mamba architectures can be simulated by $\mathsf{DLOGTIME}$-uniform $\mathsf{TC}^0$ circuits, and they cannot solve problems outside $\mathsf{TC}^0$.

  \end{abstract}
  \thispagestyle{empty}
\end{titlepage}

{\hypersetup{linkcolor=black}
\tableofcontents
}
\newpage

\else
\maketitle
\begin{abstract}

\end{abstract}

\fi

\section{Introduction}

Sequential neural networks like RNNs, including their variants such as LSTMs and GRUs~\cite{h97,c14}, have good performance in capturing temporal dependencies and processing input step-by-step~\cite{cgcb14}. These advantages make them effective in tasks including time-series prediction~\cite{ap22} and speech recognition~\cite{s14}. Traditional RNNs~\cite{h82} and their enhanced variance, LSTMs perform well in testing because of their sequential nature, but their training times tend to be slow and suffer from vanishing or exploding gradient issues, which limit their capabilities to capture long-term dependencies~\cite{ylg19}. Transformers~\cite{v17}, equipped with a self-attention mechanism, provides an efficient solution to the slow training problem by enabling parallelized computations. 
Large Language Models (LLMs) based on the Transformer architecture, such as GPT-4~\cite{openai23}, GPT-4o~\cite{gpt4o}, OpenAI's o1~\cite{gpto1}, Llama 3.1~\cite{llama}, Claude~\cite{claude}, and Gemini~\cite{gemini}, 
have become ubiquitous nowadays, and their integrations into modern technology reshaped our expectations of the limits of their capabilities. 
Transformers are capable of training efficiently on large datasets, but their quadratic memory and time complexity with respect to sequence length make them expensive in resources, both in terms of memory and processing power, during training and inference. Specifically, self-attention mechanisms grows $O(n^2)$ in terms of computational complexity~\cite{lls+24}.

State-space models (SSMs) recently received significant attention as a potential alternative to Transformer-based architecture on inherently sequential tasks~\cite{ggr21}. Mamba~\cite{gd23,dg24}, built on SSMs, combines the benefits from both RNNs and Transformers architectures. Mamba incorporates the efficient inference and state-tracking capabilities of RNNs and leverages the scalability and parallelizable computations of Transformers. Equipped with long-term memory embedding, Mamba balances the trade-off between training efficiency and inference performance~\cite{gd23}. 

As these architectures continue to express the state of modern AI, it is crucial to explore what types of problems they can solve and their limitations. 
Recent studies using the circuit complexity framework explain the computational capabilities of Mamba. 
\cite{mps24} demonstrates that a threshold circuit with constant depth and $c\log n$-precision can simulate depth $d$ SSM and Mamba. Moreover, an $\mathsf{L}$-uniform threshold circuit of constant depth can simulate such SSM and Mamba models.
Another work~\cite{chi24} shows Transformers are in $\mathsf{DLOGTIME}$-uniform $\mathsf{TC}^0$ with $\poly n$-precision, and they present a new set of metrics to evaluate the circuit complexity of LLMs with $\poly n$-precision. 
Understanding Mamba's computational limits with high precision is crucial because we need to know what problems it can theoretically solve and to compare Mamba with Transformers and other architectures. Without such understanding, assumptions about Mamba's potential to surpass Transformers in terms of sequential reasoning or state tracking remain questionable.

\begin{table}[!ht]
    \centering
    \caption{Circuit Complexity of SSM/Mamba. Previous work~\cite{mps24} claims a $\mathsf{L}$-uniform threshold circuit of constant depth can simulate SSM/Mamba with $c\log n$-precision, whereas Theorem~\ref{thm:select_tc0} and~\ref{thm:mamba_tc0} improve the precision and uniformity by proving a $\mathsf{DLOGTIME}$-uniform $\mathsf{TC}^0$ threshold circuit of constant depth can simulate SSM/Mamba with $\poly(n)$-precision.}
    \begin{tabular}{|c|c|c|}
         \hline
         Reference & Precision & Circuit Complexity \\
         \hline
         Theorem 4.4 of \cite{mps24} & $c\log(n)$-precision & $\mathsf{L}$-uniform $\mathsf{TC}^0$ \\
         \hline
         Our Theorems~\ref{thm:select_tc0} and~\ref{thm:mamba_tc0} & $\poly(n)$-precision & $\mathsf{DLOGTIME}$-uniform $\mathsf{TC}^0$\\
         \hline
    \end{tabular}
    
    \label{tab:ssm_compare}
\end{table}

However, from Table~\ref{tab:ssm_compare}, prior work~\cite{mps24} primarily focused on low-precision implementations or alternative uniformity conditions, leaving a gap in understanding Mamba's expressiveness with $\poly(n)$-precision under $\mathsf{DLOGTIME}$-uniformity. This gap is significant because proving Mamba in $\mathsf{TC}^0$ with $\poly(n)$-precision reflects real-world scenarios, where higher precision is often necessary. 
Moreover, $\mathsf{DLOGTIME}$-uniformity is widely considered as a more realistic condition in practice. Unlike $\mathsf{L}$-uniform circuits, which may allow unrealistically complex preprocessing, $\mathsf{DLOGTIME}$-uniform circuits require the structure of the circuit to be computable by highly efficient machines, so $\mathsf{DLOGTIME}$-uniformity reflects practical constraints on constructing and applying the circuits. Therefore, it is natural to raise the question:
    {\it Can Mamba, implemented with $\poly(n)$-precision, be proved to reside within $\mathsf{DLOGTIME}$-uniform $\mathsf{TC}^0$?}

 In this paper, we break down the fantasized superiority in Mamba by demonstrating that it falls within the same circuit complexity class $\mathsf{DLOGTIME}$-uniform $\mathsf{TC}^0$ with $\poly n$-precision. This result shows SSM and Mamba have the same computational capabilities as Transformers have~\cite{chi24}, indicating that SSM and Mamba, despite their stateful design, cannot solve problems outside $\mathsf{TC}^0$, such as arithmetic formula problem, boolean formula value problem, and permutation composition problems if $\mathsf{TC}^0 \neq \mathsf{NC}^1$. 

Beyond~\cite{mps24} and~\cite{chi24}, our contributions are summarized as follows: If $\mathsf{TC}^0 \neq \mathsf{NC}^1$, assume we have the $\poly(n)$-bits precision float point number, constant-depth layers, and $O(n)$ size hidden dimension, then we  have
\begin{itemize}
    \item A $\mathsf{DLOGTIME}$-uniform $\mathsf{TC}^0$ circuit family can simulate Selective SSM (Theorem~\ref{thm:select_tc0}).
    \item A $\mathsf{DLOGTIME}$-uniform $\mathsf{TC}^0$ circuit family (Theorem~\ref{thm:mamba_tc0}) can simulate Mamba.
    \item Selective SSM and Mamba are not capable of resolving the arithmetic formula problems, Boolean formula value problems, and permutation composition problems (Theorem~\ref{thm:main_hardness_informal}).
    
\end{itemize}

Knowing the true computational capabilities of SSM and Mamba in $\mathsf{DLOGTIME}$-uniform $\mathsf{TC}^0$ can inform researchers who attempt to use Mamba to solve problems outside $\mathsf{TC}^0$. By identifying the constraints of the current design, our work pushed the exploration of the expressiveness of neural network models.

\paragraph{Roadmap.}

Section~\ref{sec:related_work} introduces the works related to our paper. Section~\ref{sec:prelim} introduces key computational concepts and Mamba definitions that form the basis for subsequent sections. Then, we present the circuit complexity results for Selective SSM and Mamba in Section~\ref{sec:main}. Section~\ref{sec:hardness} details our hardness results. Finally, Section~\ref{sec:conclusion} gives a conclusion.

\section{Related Work}\label{sec:related_work}

\paragraph{Complexity and Neural Network.} 

Circuit Complexity, a crucial set of metrics in computational complexity theory, studies the computational power of circuit families. It has valuable applications in comprehending the capabilities of machine learning models~\cite{pmb19,h20,haf22,mss22,ms23,fzg+24,llzm24,zhr23,cyd22,zsss24,lll+24_tc,cll+24_rope}. The complexity classes include $\mathsf{AC^0}$ represents problems that are highly parallelizable equipped with standard logic gates, which can be solved by constant-depth circuits with unbounded fan-in $\mathsf{AND}$, $\mathsf{OR}$, and $\mathsf{NOT}$ gates; $\mathsf{TC^0}$ class extends from $\mathsf{AC^0}$ with additional majority gates;  $\mathsf{NC^1}$ problems can be solved by $O(\log n)$-depth circuits with bounded fan-in. These circuit complexity classes form a hierarchy: $\mathsf{AC}^0 \subset \mathsf{TC}^0 \subseteq \mathsf{NC}^1$ \cite{mss22}. The question of whether $\mathsf{TC}^0 \neq \mathsf{NC}^1$ remains an open topic of discussion. \cite{lag+22} demonstrates that while Transformers can simulate nonsolvable semi-automata, their depth is influenced by the length of the input sequence. Building on this, \cite{llzm24} investigates the expressive power of Transformers augmented with Chain-of-Thought (CoT) reasoning in the context of circuit complexity. They propose the following relationships: 
\begin{itemize}
    \item $\mathsf{T}[\mathrm{poly}(n),1,1]$ is the subset of $\mathsf{CoT}[\log n, \mathrm{poly}(n), 1, 1]$ which is a subset of $\mathsf{AC}^0$.
    \item $\mathsf{T}[\mathrm{poly}(n), \log n, 1]$ is the subset of $\mathsf{CoT}[\log n, \mathrm{poly}(n), \log n, 0]$ which is a subset of $\mathsf{TC}^0$.
\end{itemize}

Here, $\mathsf{T}[d(n), s(n), e(n)]$ refers to a constant-depth Transformer with an embedding size of $d(n)$, precision $s(n)$ bits, and exponent size $e(n)$ for input length $n$. Meanwhile, $\mathsf{CoT}[T(n), d(n), s(n), e(n)]$ denotes a $T(n)$-step Chain-of-Thought process using a constant-depth Transformer $\mathsf{T}[d(n), s(n), e(n)]$. They use their framework to show that Transformers equipped with CoT are capable of tackling more complex problems. Therefore, circuit complexity has shown its effectiveness in representing the computational capabilities of neural networks.

\paragraph{Limits on Transformers Model.}
Transformers have shown outstanding performance on tasks from natural language processing, but they present limited effectiveness in mathematical computations. A series of research highlights the reasoning limitations of Transformer Model~\cite{acy23, ms23, ccp23,wms+24,lss+24_relu,chi24,kls+25}.~\cite{chi24} shows that average-hard attention transformers (AHATs) and softmax-attention transformers (SMATs) are in $\mathsf{DLOGTIME}$-uniform $\mathsf{TC}^0$ with $O(\poly(n))$-bit float number precision, indicating that they are equivalent to constant-depth threshold circuits with polynomial size, and their ability is limited when handling more complex reasoning tasks which require higher-depth or nonuniform computations. As a result, Transformers with SMATs or AHATs are inherently unable to solve problems outside $\mathsf{TC}^0$, especially those that involve many inherently sequential computations. What about Transformers with CoT? 
Even though Transformers with CoT can address relatively more problems than CoT, Transformers still fail to solve problems requiring reasoning beyond $\mathsf{TC}^0$.

\paragraph{Architecture of State-Space Models (SSM).}

SSMs have emerged as an alternative model to the popular LLMs, such as RNNs and Transformers. SSM presents ideal performance in tasks involving long-term dependencies and sequential reasoning~\cite{ggr21}. The foundation of SSMs uses linear dynamical systems (LDS) or discrete-time state-space equations~\cite{ggr21,gd23} to represent the system's internal state and its evolution over time. Using these mechanisms, SSMs are able to capture the sequential nature of data by updating the state iteratively, which has efficient inference and state-tracking~\cite{kbw15,gag21}. Compared to RNNs, SSMs have better scalability and stability when handling long sequences, and SSMs are capable of resolving the gradient-related issues inherent to RNNs~\cite{ggr21} and have recently garnered attention for their versatility across various tasks such as sequential recommendation~\cite{llw+24, lll+24_rec} and image deblurring~\cite{llx+24}.

Mamba is a recent advancement in SSM architecture, and it combines the efficient parallelizable computation from Transformers. SSMs in Mamba use kernel methods and spectral techniques to enable convolution and facilitate parallelizable computation~\cite{gd23,ggr21}. Mamba incorporates efficient memory embedding and long-term state representation into its architecture, making itself a strong opponent to the popular LLMs today, such as Transformers. However, despite the theoretical expectations of SSM and Mamba, it is crucial for us to understand the computational limits to conclude whether its capabilities outperform Transformers.

\section{Preliminaries}\label{sec:prelim}

In Section~\ref{sec:app_complexity}, we introduce the circuit complexity classes. 
In Section~\ref{sec:float}, we introduce the float point number. In Section~\ref{sec:mamba}, we introduce the Mamba block.

\paragraph{Notation.}

For $n \in \Z_+$, we define $[n] := \{1, 2, \dots, n\}$. We use ${\rm Pr}[\cdot]$ to denote the probability. We use $\E[\cdot]$ to denote the expectation. We use ${\rm Var}[\cdot]$ to denote the variance. 
We define ${\bf 1}_n \in \R^n$ as $({\bf 1}_n)_i := 1$, for all $i \in [n]$. Let $X_{i,j} \in \R$ be the $(i, j)$-th entry of an arbitrary matrix $X$. Let $\|X\|_\infty \in \R$ be the largest entry of the matrix $X$. 
We denote $x_i = \{0,1\}^*$ to be the binary sequence, where its length is not determined.

\subsection{Circuit Complexity}\label{sec:app_complexity}

In this section, we provide an introduction to the fundamental concepts of circuit complexity classes.
We define the Boolean circuit below:

\begin{definition}[Boolean circuit, from Definition 6.1, On page 102 in~\cite{ab09}]

Let $n \in \Z_+$. A Boolean circuit with $n$ variables is represented on a directed acyclic graph and defined as a function $C_n: \{0,1\}^n \to \{0,1\}$. The graph's nodes represent logic gates, where input nodes (with in-degree 0) correspond to the $n$ Boolean variables. Each non-input gate computes its value based on the outputs provided by other connected gates.
\end{definition}

\begin{definition}[Circuit family recognizes languages, from Definition 6.2, On page 103 in~\cite{ab09}]

Let $x$ be an arbitrary element in $\{0,1\}^*$.
Let $L$ be a subset of $\{0,1\}^*$ called a language.

If there is $C_{|x|} \in \mathcal{C}$ (a Boolean circuit) satisfying $C_{|x|}(x) = 1$ iff $x \in L$, then we say $L$ is recognized by a family $\mathcal{C}$ of Boolean circuits.

\end{definition}

We now introduce $\mathsf{NC}^i$ class.

\begin{definition}[$\mathsf{NC}^i$ \cite{ab09}]\label{def:nc_def}
    $\mathsf{NC}^i$ consists of languages that can be decided by Boolean circuits with a size of $O(\poly(n))$, depth $O((\log n)^i)$, and utilizing ${\sf OR}$, ${\sf AND}$, and ${\sf NOT}$ gates with bounded fan-in.
\end{definition}

When Boolean circuits are allowed to use ${\sf AND}$ and ${\sf OR}$ gates with unbounded fan-in, they become capable of recognizing a broader class of languages. The $\mathsf{AC}^i$ class is defined as follows.

\begin{definition}[$\mathsf{AC}^i$ \cite{ab09}]\label{def:ac_def}
    $\mathsf{AC}^i$ refers to the set of languages that Boolean circuits can recognize with size $O(\poly(n))$, depth $O((\log n)^i)$, and utilizing ${\sf AND}$, ${\sf OR}$, and ${\sf NOT}$ gates with unbounded fan-in.
\end{definition}

 Since these three gates may be simulated by ${\sf MAJORITY}$ gates, we arrive at a broader complexity class, $\mathsf{TC}^i$.

\begin{definition}[$\mathsf{TC}^i$ \cite{vol99}]\label{def:tc_def}
    $\mathsf{TC}^i$ includes languages that can be recognized by Boolean circuits with size $O(\poly(n))$, depth $O((\log n)^i)$, and unbounded fan-in gates for ${\sf OR}$, ${\sf AND}$, ${\sf NOT}$, and ${\sf MAJORITY}$. A ${\sf MAJORITY}$ gate outputs $1$ if more than half of its inputs are $1$.
\end{definition}

\begin{remark}
    In Definition~\ref{def:tc_def}, ${\sf THRESHOLD}$ gates or ${\sf MOD}$ gates configured for prime values can replace ${\sf MAJORITY}$ gates. A Boolean circuit that includes any of these gates is referred to as a threshold circuit.
\end{remark}

\begin{definition}[$\mathsf{P}$ \cite{ab09}]
    A deterministic Turing machine in polynomial time with respect to the size of the input can recognize the languages in class $\mathsf{P}$.
\end{definition}

\begin{fact}[Hierarchy Folklore,~\cite{ab09}, From Corollary 4.35, On page 110 in~\cite{ab09}, in~\cite{vol99}]\label{fact:hierarchy}
    For all $i \in \mathbb{N}$, $\mathsf{NC}^i \subseteq \mathsf{AC}^i \subseteq \mathsf{TC}^i \subseteq \mathsf{NC}^{i+1} \subseteq \mathsf{P}$.
\end{fact}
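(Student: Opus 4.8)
The statement is a standard fact from complexity theory, and the plan is simply to verify the four inclusions in the chain one at a time, blowing up size by at most a polynomial factor and depth by at most a logarithmic factor at each step so that everything stays inside the advertised classes. The first inclusion $\mathsf{NC}^i \subseteq \mathsf{AC}^i$ needs no work: a bounded fan-in circuit is literally a special case of an unbounded fan-in one, so every $\mathsf{NC}^i$ circuit is already an $\mathsf{AC}^i$ circuit. For $\mathsf{AC}^i \subseteq \mathsf{TC}^i$ I would simulate each unbounded fan-in ${\sf AND}$, ${\sf OR}$, ${\sf NOT}$ gate using $O(1)$ ${\sf MAJORITY}$ gates and $O(1)$ extra depth: ${\sf NOT}$ is free, an ${\sf OR}$ of $m$ inputs is ${\sf THRESHOLD}_{\geq 1}$ and an ${\sf AND}$ of $m$ inputs is ${\sf THRESHOLD}_{\geq m}$, and any ${\sf THRESHOLD}_{\geq t}$ on $m$ bits becomes a single ${\sf MAJORITY}$ gate after padding the input list with the appropriate number of constant $0$'s or $1$'s (alternatively, by the Remark following Definition~\ref{def:tc_def}, one may keep ${\sf THRESHOLD}$ gates directly). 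Performing this replacement gate by gate preserves the $O((\log n)^i)$ depth and keeps the size polynomial, so the result is a $\mathsf{TC}^i$ family.

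The substantive step is $\mathsf{TC}^i \subseteq \mathsf{NC}^{i+1}$, where every gate of the $\mathsf{TC}^i$ circuit --- ${\sf AND}$, ${\sf OR}$, or ${\sf MAJORITY}$, each of fan-in $m \leq \poly(n)$ --- must be rebuilt from bounded fan-in gates within depth $O(\log m) = O(\log n)$. For ${\sf AND}$ and ${\sf OR}$ one uses a balanced binary tree of two-input gates. For ${\sf MAJORITY}$ one computes the integer sum $\sum_{j=1}^m y_j$ of the input bits $y_1,\dots,y_m$; this is iterated addition of $m$ one-bit numbers, computable by a polynomial-size bounded fan-in circuit of depth $O(\log m)$ via carry-save (Wallace-tree) addition, which reduces the $m$ summands to two numbers in depth $O(\log m)$, followed by one carry-lookahead addition of depth $O(\log\log m)$; the resulting $O(\log m)$-bit count is then compared against $\lceil m/2 \rceil$ in depth $O(\log\log m)$. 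Substituting these subcircuits throughout multiplies the depth $O((\log n)^i)$ of the original circuit by $O(\log n)$, yielding depth $O((\log n)^{i+1})$ with polynomial size, i.e.\ an $\mathsf{NC}^{i+1}$ circuit. I expect this to be the main obstacle, since it rests on the nontrivial fact that iterated addition (hence counting) is in $\mathsf{NC}^1$ --- the carry-save addition trick --- which is precisely what lets ${\sf MAJORITY}$ be simulated in logarithmic bounded fan-in depth.

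Finally, for $\mathsf{NC}^{i+1} \subseteq \mathsf{P}$ I would argue that a deterministic polynomial-time machine, on input $x$ with $|x| = n$, first writes down a description of $C_n$ --- possible in time $\poly(n)$ exactly because of the uniformity condition attached to the class --- and then evaluates the $\poly(n)$ gates of $C_n$ in topological order, each in time polynomial in its fan-in, for a total of $\poly(n)$ time; hence the language is in $\mathsf{P}$. (Uniformity is essential here, as nonuniform polynomial-size circuit families can decide undecidable languages.) Chaining the four steps gives $\mathsf{NC}^i \subseteq \mathsf{AC}^i \subseteq \mathsf{TC}^i \subseteq \mathsf{NC}^{i+1} \subseteq \mathsf{P}$ for every $i \in \mathbb{N}$, as claimed.
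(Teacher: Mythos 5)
Your proposal is correct; the paper offers no proof of this statement, citing it as folklore from \cite{ab09} and \cite{vol99}, and your argument is exactly the standard one those references give (trivial containment for the first inclusion, gate-by-gate simulation of ${\sf MAJORITY}$ via carry-save counting in depth $O(\log n)$ for the third, and polynomial-time circuit evaluation for the fourth). You also correctly flag the one subtlety the paper glosses over, namely that the final inclusion $\mathsf{NC}^{i+1} \subseteq \mathsf{P}$ requires the uniformity condition, and you may note that under the paper's Definition~\ref{def:tc_def} (which lets $\mathsf{TC}^i$ circuits keep unbounded fan-in ${\sf AND}$, ${\sf OR}$, ${\sf NOT}$ gates alongside ${\sf MAJORITY}$) the second inclusion is immediate by definition, so your simulation there is not even needed.
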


\begin{remark}
    For $i=0$, it is established that $\mathsf{NC}^0 \subsetneq \mathsf{AC}^0 \subsetneq \mathsf{TC}^0$. However, determining whether $\mathsf{TC}^0 \subsetneq \mathsf{NC}^1$ remains an open question in circuit complexity. Additionally, the question of whether $\mathsf{NC} := \cup_{i \in \mathbb{N}} \mathsf{NC}^i \subsetneq \mathsf{P}$ is also unresolved. For further discussion, see \cite{ab09,vol99}.
\end{remark}

\begin{definition}[$\mathsf{L}$-uniformity \cite{ab09}]\label{def:uniformity}
    $\mathsf{C}$ represents a language recognized by a circuit family $\mathcal{C}$, where $\mathcal{C}$ could be $\mathsf{NC}^i$, $\mathsf{AC}^i$, or $\mathsf{TC}^i$. Suppose we have a Turing machine that is satisfying for any arbitrary $n \in \mathbb{N}$, computes a circuit in $\mathcal{C}$ for $n$ variables from the input $1^n$ using $O(\log n)$ space, such that the circuit $C_n$ recognizes $L$, then a language $L$, which is the subset of $\{0,1\}^*$, is said to be in $\mathsf{L}$-uniform $\mathsf{C}$.
\end{definition}

We define $\mathsf{DLOGTIME}$-uniformity and discuss the relationships between this definition and $\mathsf{L}$-uniformity as follows.

\begin{definition}[$\mathsf{DLOGTIME}$-uniformity in~\cite{bi94}]
    $\mathsf{C}$ is defined as in Definition~\ref{def:uniformity}. Suppose we have a Turing machine that satisfying for any arbitrary $n \in \mathbb{N}$, computes $C_n$ in $\mathcal{C}$ for $n$ variables from the input $1^n$ within time $O(\log n)$, where $C_n$ recognizes $L$, then a language $L$, which is the subset of $\{0,1\}^*$, is said to be in $\mathsf{DLOGTIME}$-uniform $\mathsf{C}$.
\end{definition}

\subsection{Float Point Numbers}\label{sec:float}

To compute SSM and Mamba correctly and effectively, we establish the computational framework by providing the definitions of the basic concepts of floating-point numbers and their related operations.  

Notably, the operations provided below are not limited to purely theoretical work; in fact, they can be effectively realized in hardware.

\begin{lemma}[Efficient floating-point operations in $\mathsf{TC}^0$, Lemma 10, 11 in~\cite{chi24}]\label{lem:tc0_floatop}
    
    Let $p \in \Z_+$. We have
    \begin{enumerate}
        \item We can use the uniform threshold circuit, which has the size of $\poly(n)$ and has a constant depth, to compute all $+, \cdot$, and comparison of two $p$-bit floating-point numbers, as defined in Definition~\ref{def:float_operation}.
        \item Using the same depth uniform threshold circuit as above, we can compute the iterative multiplication of $m$ numbers of floating-point numbers with $q$ bits.
        \item Using the same depth uniform threshold circuit as above, we can compute the iterative addition of $m$ numbers of floating-point numbers with $q$ bits.
    \end{enumerate}
    We use $d_{\mathrm{std}}$, $d_{\otimes}$, and $d_{\oplus}$ to denote the constant depth of the above three situations, respectively.
\end{lemma}

\begin{corollary}[Floor operation in $\mathsf{TC}^0$]

 Consider $p \in \Z_+$ being less than or equal to $\poly(n)$. We can implement the floor operation for a floating-point number with $q$ bits using the uniform threshold circuit, which has the size of $\poly(n)$ and has a constant depth $d_{\mathrm{std}}$. 
\end{corollary}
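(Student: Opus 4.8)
The plan is to reduce the floor operation to the operations already shown to be in $\mathsf{TC}^0$ by Lemma~\ref{lem:tc0_floatop}, together with elementary bit manipulations. Recall a $p$-bit floating-point number is (presumably, per Definition~\ref{def:float_operation}) a pair $(m, e)$ with a significand $m$ of $O(p)$ bits and an exponent $e$ of $O(p)$ bits, representing the value $m \cdot 2^{e}$. First I would split into the two easy regimes by a single comparison circuit: if $e \geq 0$, then $m \cdot 2^e$ is already an integer, so $\lfloor m \cdot 2^e \rfloor$ equals the input and we output it unchanged; if the magnitude is less than $1$ (detectable from the sign of $m$ together with $e$ and the bit-length of $m$), the floor is $0$ for nonnegative inputs and $-1$ for negative inputs, again a constant-depth decision. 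The only substantive case is $e < 0$ with magnitude at least $1$, where we must actually drop the fractional part.

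In that case, writing $k := -e > 0$, the floor amounts to an arithmetic right shift of the integer $m$ by $k$ bit positions (with the usual rounding-toward-$-\infty$ convention for negative $m$, i.e.\ subtract one more if any of the discarded low bits were set). Since $p \le \poly(n)$, the shift amount $k$ is bounded by $\poly(n)$, and selecting, for each output bit position $i$, the input bit at position $i+k$ is a projection indexed by the value $k$: a $\mathsf{TC}^0$ (indeed $\mathsf{AC}^0$) circuit can, for each $i$, compare $k$ against each candidate offset and route the appropriate bit through, because equality tests and unbounded-fan-in selection over $\poly(n)$ options are constant-depth. Detecting whether any discarded low-order bit is nonzero is an $\mathsf{OR}$ over a $\poly(n)$-size, $k$-dependent window, again constant depth; combined with the sign bit of $m$ this tells us whether to decrement, and a single addition of $-1$ (available by part~1 of Lemma~\ref{lem:tc0_floatop}, or directly as integer addition) finishes it. Re-encoding the resulting integer back into the floating-point format is a normalization step — find the leading nonzero bit (a constant-depth operation) and set the exponent accordingly — which stays within depth $d_{\mathrm{std}}$, possibly after absorbing the constantly-many composed sub-circuits into the stated constant depth.

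The main obstacle I anticipate is bookkeeping rather than conceptual: being careful about the exact encoding conventions of Definition~\ref{def:float_operation} (sign representation, whether the significand is normalized, how the exponent range interacts with $p \le \poly(n)$) so that the shift-and-decrement description is literally correct, and verifying that the total depth of the composed pieces — the regime-splitting comparison, the bit-routing selector, the sticky-bit $\mathsf{OR}$, the conditional decrement, and the renormalization — collapses to the single constant $d_{\mathrm{std}}$ claimed, which is justified because a constant number of constant-depth stages compose to constant depth. No genuinely new circuit primitive is needed beyond what Lemma~\ref{lem:tc0_floatop} supplies and standard $\mathsf{AC}^0$/$\mathsf{TC}^0$ facts about indexed bit selection.
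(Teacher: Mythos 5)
The paper itself gives no proof body for this corollary: it is stated as an immediate consequence of Definition~\ref{def:float_operation}, which lists ``floor'' among the primitive floating-point operations ($\lfloor\langle m,e\rangle\rfloor := \langle m2^e,0\rangle$ for $e\ge 0$ and $\mathrm{round}_p(\langle m/2^{-e},0\rangle)$ for $e<0$), together with Lemma~\ref{lem:tc0_floatop}, whose depth-$d_{\mathrm{std}}$ circuits for comparison, multiplication/division by a power of two, and rounding cover exactly these two branches. Your proposal instead rebuilds the operation from first principles --- case split on the sign of $e$, a $k$-indexed bit-routing selector for the right shift, a sticky-bit $\mathsf{OR}$ with a conditional decrement for negative inputs, and renormalization. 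That construction is sound (indexed selection and windowed $\mathsf{OR}$s over $\poly(n)$ candidates are standard $\mathsf{AC}^0\subseteq\mathsf{TC}^0$ facts, and a constant number of constant-depth stages composes to constant depth), and it is more self-contained than the paper's implicit appeal to the primitives of~\cite{chi24}; what it costs you is that the final depth is some constant you must then declare equal to $d_{\mathrm{std}}$ by fiat, whereas the paper gets $d_{\mathrm{std}}$ for free because floor is already one of the Definition~\ref{def:float_operation} operations that the cited lemma handles. One small semantic mismatch worth noting: your sticky-bit-and-decrement logic implements the true mathematical floor (round toward $-\infty$), while the paper's Definition~\ref{def:float_operation} defines the $e<0$ branch via $\mathrm{round}_p$ (round to nearest), so strictly speaking you prove the complexity bound for a slightly different operation than the one the corollary refers to; the circuit-complexity conclusion is unaffected either way.
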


\begin{lemma}[Approximation of $\exp$ in $\mathsf{TC}^0$, Lemma 12 in~\cite{chi24}]\label{lem:exp_tc0}
    For any positive integer $p$ such that $p \leq \poly(n)$, there exists a uniform threshold circuit with size $\poly(n)$ and constant-depth that approximates $\exp(x)$ for any $p$-bit floating-point number $x$, with a relative error not exceeding $2^{-p}$. The depth required for this computation is denoted as $d_{\exp}$.
\end{lemma}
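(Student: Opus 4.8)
The plan is to reduce $\exp$ to a base-$2$ exponential, split into an integer part handled by an exponent shift (which is free in the floating-point representation) and a fractional part handled by a short Taylor polynomial. Fix a working precision $q=\Theta(p)=\poly(n)$ (concretely $q=2p+O(1)$; the reason for the factor $2$ appears below). Given a $p$-bit float $x$, first compute $y:=x\cdot\log_2 e$ using the multiplication primitive of Lemma~\ref{lem:tc0_floatop}; the constant $\log_2 e$ is hard-wired to $q$ bits, which is consistent with the uniformity of the circuit since it is a fixed efficiently computable real. Set $a:=\lfloor y\rfloor$, an integer of $O(q)$ bits, obtained by the floor-operation corollary above, and $b:=y-a\in[0,1)$, obtained by one floating-point subtraction. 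Since $\exp(x)=2^{y}=2^{a}\cdot 2^{b}$ and multiplying a normalized float by $2^{a}$ merely adds $a$ to its exponent field, the only substantive part is evaluating $2^{b}$ for $b\in[0,1)$.

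For that I would use the truncated series $2^{b}=\exp(b\ln 2)=\sum_{i\ge 0}(\ln 2)^{i}b^{i}/i!$ cut off at degree $K$. Because $b\ln 2<1$ and every term is nonnegative with the $i=0$ term equal to $1$ (so $2^{b}\ge 1$), the relative truncation error is at most $\sum_{i>K}1/i!\le 3/(K+1)!$; taking $K=O(p/\log p)$ already forces this below $2^{-p-1}$, and $K=O(p)$ is more than enough. To evaluate this polynomial in $\mathsf{TC}^{0}$: compute the powers $b^{0},b^{1},\dots,b^{K}$ in parallel, each as an iterated product of copies of $b$ via part~2 of Lemma~\ref{lem:tc0_floatop}; multiply $b^{i}$ by the coefficient $c_{i}:=(\ln 2)^{i}/i!$, which is formed from $\ln 2$ (hard-wired to $q$ bits) and $i!$ (an iterated product of $1,\dots,i$); then add the $K+1$ terms with the iterated-addition primitive, part~3. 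Each block is constant depth and $\poly(n)$ size, and together with the reduction step and the exact exponent shift they compose into a circuit of fixed constant depth, which we name $d_{\exp}$ (a fixed function of $d_{\mathrm{std}},d_{\otimes},d_{\oplus}$).

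It remains to control the rounding error. The computation is a composition of $O(K)=\poly(n)$ floating-point primitives, each introducing relative error $O(2^{-q})$, so the accumulated rounding error is $O(K\,2^{-q})$, which is well below $2^{-p-1}$ for our choice of $q$; combined with the truncation bound this gives total relative error at most $2^{-p}$. I would dispatch the degenerate inputs separately: $x=0$ returns $1$ exactly, and when $y$ is so large (resp.\ so negative) that $2^{y}$ exceeds (resp.\ falls below) the representable range, the output saturates as in the floating-point model — such inputs lie outside the regime where $\exp(x)$ is $p$-bit representable or are covered by the model's overflow/underflow convention.

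The one delicate point — essentially the only obstacle beyond routine assembly — is the precision bookkeeping in the argument reduction. When $x$ has a large exponent, $y=x\log_2 e$ is a large number known only to absolute accuracy $\sim |y|\,2^{-q}\sim 2^{p-q}$; forming $a:=\lfloor y\rfloor$ and $b:=y-a$ is (Sterbenz-)exact, but $b\in[0,1)$ inherits that same absolute error $\sim 2^{p-q}$. Since $2^{b}$ has derivative $2^{b}\ln 2$ with $2^{b}\in[1,2)$, this contributes relative error $\sim \ln 2\cdot 2^{p-q}$ to $\exp(x)$, which is what forces $q\gtrsim 2p$. Choosing $q=2p+O(1)=\poly(n)$ resolves it, but this is the step where absolute versus relative errors must be tracked with care; everything else follows by plugging together the primitives of Lemma~\ref{lem:tc0_floatop} and the floor-operation corollary.
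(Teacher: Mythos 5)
The paper does not prove this lemma at all: it is imported verbatim from \cite{chi24} (Lemma 12 there), so there is no in-paper proof to compare against. Your argument-reduction-plus-truncated-Taylor-series construction is correct and is essentially the same template the paper itself uses for the companion logarithm bound (Lemma~\ref{lem:log_tc0_formal}): normalize so the series argument lies in a bounded interval, evaluate $O(p)$ terms via the iterated-product and iterated-addition primitives of Lemma~\ref{lem:tc0_floatop}, and recombine with an exact exponent shift. Your identification of the only genuinely delicate step --- that the fractional part $b$ inherits absolute error $\sim 2^{p-q}$ from the large intermediate $y=x\log_2 e$, forcing the working precision up to $q=2p+O(1)$ --- is accurate and is precisely the kind of bookkeeping such proofs must do; the remaining points (hard-wired constants versus uniformity, overflow saturation) are handled at the same level of rigor as the paper's own arguments. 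No gaps.
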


\begin{lemma}[Approximation of square root in $\mathsf{TC}^0$, Lemma 12 in~\cite{chi24}]
    Let $p$ be a positive integer satisfying $p \leq \poly(n)$. For any $p$-bit floating-point number $x$, a uniform threshold circuit with size $\poly(n)$ and constant-depth can compute $\sqrt{x}$ with a relative error of at most $2^{-p}$. The depth required for this computation is denoted as $d_{\mathrm{sqrt}}$.
\end{lemma}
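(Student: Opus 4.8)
The plan is to follow the same template as Lemma~\ref{lem:exp_tc0}: rewrite $\sqrt{x}$ as a power of two times $\sqrt{1+u}$ for some $u$ bounded away from the radius of convergence, replace $\sqrt{1+u}$ by a truncated binomial series, and then observe that evaluating a degree-$\Theta(p)$ polynomial is a constant-depth operation once iterated addition and iterated multiplication are available (Lemma~\ref{lem:tc0_floatop}). Why this, rather than Newton's iteration, is the right tool is explained at the end.

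First I would do the range reduction. Given a $p$-bit float $x = m\cdot 2^{e}$ with $x \ge 0$ (if $x<0$ the output is undefined; if $x=0$ return $0$), force the exponent to be even: when $e$ is odd, replace $(m,e)$ by $(2m,\,e-1)$, so that $x = m'\cdot 2^{2e'}$ with $e' = e/2 \in \Z$ and $m' \in [1,4)$; halving an integer and testing its parity are in $\mathsf{TC}^0$. Then $\sqrt{x} = 2^{e'+1}\sqrt{m'/4}$, and setting $u := m'/4 - 1 \in [-3/4,\,0)$ we get $\sqrt{x} = 2^{e'+1}\sqrt{1+u}$ with $|u|\le 3/4$; here $u$ is obtained exactly, since dividing $m'$ by $4$ only shifts an exponent field, and the whole reduction costs $O(1)$ circuit layers.

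Next I would truncate the series $\sqrt{1+u} = \sum_{i\ge 0}\binom{1/2}{i}u^i$ at $N = \Theta(p)$ terms. Since $|\binom{1/2}{i}|\le 1$, the tail is at most $\sum_{i>N}(3/4)^i = 4\,(3/4)^{N+1}$, which is below $2^{-p-2}$ for a suitable $N = \Theta(p)$; as $\sqrt{1+u}\in[1/2,1)$, this is relative error at most $2^{-p-1}$. Each coefficient $\binom{1/2}{i} = \prod_{k=0}^{i-1}\frac{1-2k}{2(k+1)}$ is a ratio of two integers, each a product of at most $i\le N$ integers of magnitude $O(p)$, hence an integer on $\poly(n)$ bits; all $N+1$ of them are produced in parallel by one iterated integer multiplication plus one division (Lemma~\ref{lem:tc0_floatop}, parts 1--2). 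Likewise all powers $u^{0},\dots,u^{N}$ are computed in parallel, each as an iterated multiplication (part 2). Then multiply $\binom{1/2}{i}$ by $u^{i}$ termwise (part 1), add the $N+1$ products by a single iterated addition (part 3), and shift the exponent by $e'+1$; because every term with $i\ge 1$ has the same sign and they sum to magnitude at most $1/2$, there is no catastrophic cancellation, so running all arithmetic at working precision $p' = p + \lceil\log_2 p\rceil + O(1) = \poly(n)$ keeps the accumulated rounding error over the $O(p)$ operations below $2^{-p-1}$ relative. Altogether the relative error is at most $2^{-p}$, the circuit has $\poly(n)$ size and depth a fixed sum of $d_{\oplus}$, $d_{\otimes}$, $d_{\mathrm{std}}$ and $O(1)$ bookkeeping layers (which we name $d_{\mathrm{sqrt}}$), and uniformity is inherited from Lemma~\ref{lem:tc0_floatop}.

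The main obstacle is keeping the depth constant rather than $O(\log p) = O(\log n)$. The textbook approach --- the Newton iteration $s\mapsto \tfrac12 s(3 - x s^2)$ for $1/\sqrt{x}$ --- doubles the number of correct bits per step and so needs $\Theta(\log p)$ rounds to reach $p$-bit accuracy, giving only $\mathsf{NC}$-style depth. The point of the argument is to replace that iterative scheme by a one-shot evaluation of a single polynomial of degree $\Theta(p)$, which is legitimately constant depth precisely because the monomials $u^{i}$, the integer products defining $\binom{1/2}{i}$, and the final summation are all handled by the $\mathsf{TC}^0$ iterated-product and iterated-sum primitives; after that, the only remaining work is the routine bookkeeping of the range reduction and the precision budget.
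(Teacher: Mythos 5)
The paper does not actually prove this lemma---it is imported verbatim from \cite{chi24}---but your argument is correct and follows exactly the template the paper itself uses for the analogous logarithm approximation (Lemma~\ref{lem:log_tc0_formal}): exponent-parity range reduction, a degree-$\Theta(p)$ truncated series with a geometric tail bound, and one-shot evaluation of the polynomial via the iterated-addition and iterated-multiplication primitives of Lemma~\ref{lem:tc0_floatop}. Your observation that all terms $\binom{1/2}{i}u^i$ with $i\ge 1$ share a sign (so no cancellation inflates the relative error) and your closing remark on why Newton iteration would only yield $O(\log p)$ depth are both correct and go beyond the level of detail the paper records for this family of lemmas.
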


\begin{lemma}[Matrix multiplication, Lemma 4.2 in~\cite{chi24}]\label{lem:mul_tc0}
    Consider two matrices $A \in \mathbb{F}_p^{n_1 \times d}$ and $B \in \mathbb{F}_p^{d \times n_2}$. If $p, n_1, n_2, d \leq \poly(n)$, then 
    we can use the uniform threshold circuit, which has the size of $\poly(n)$ and has a constant depth $(d_{\mathrm{std}} + d_{\oplus})$, to compute the product of $A$ and $B$.
\end{lemma}

\subsection{Mamba Blocks}\label{sec:mamba}

Having established the necessary mathematical foundation, this section introduces the main components of the Mamba architecture, as illustrated in Figure~\ref{fig:mamba}. We start by discussing the input projection within the Mamba framework.

\begin{figure}[!ht]
    \centering
    \includegraphics[width= 0.9\linewidth]{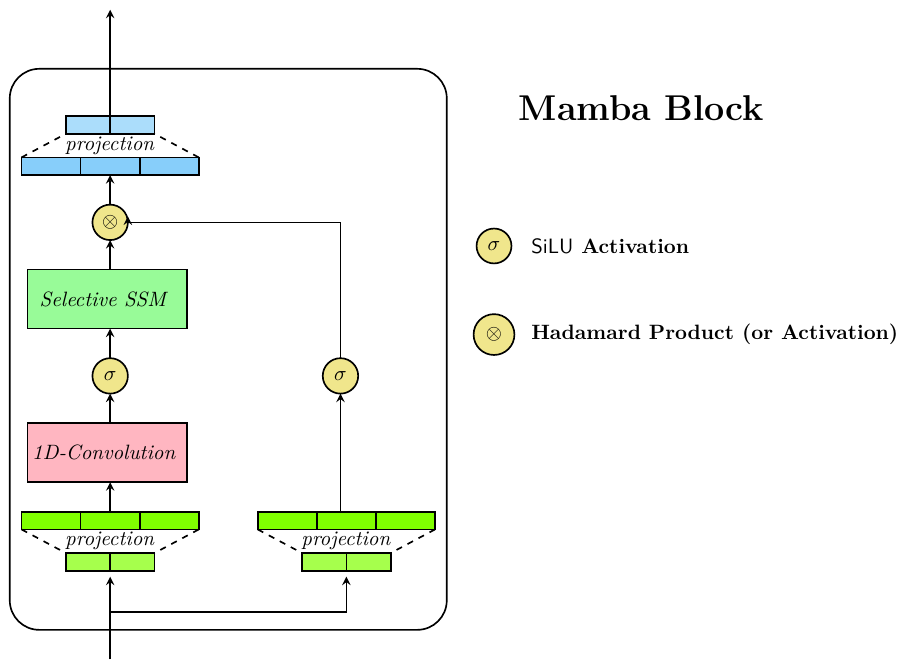}
    \caption{Mamba Block Architecture. The input is first processed through two input projections. One branch flows through an input projection, followed by a 1-D convolution, a SiLU activation, and a Selective SSM block before reaching the Hadamard product (or activation). The other branch passes through an input projection directly to a SiLU activation and then converges at the same Hadamard product (or activation). Finally, the output of the Hadamard product is passed through the output projection.}
    \label{fig:mamba}
\end{figure}
\begin{definition}[Mamba Input Projection]\label{def:mamba_linear_proj}
    Let $X \in \F_p^{L \times D}$ denote the input sequence, where $L$ is the sequence length, and $D$ is the feature dimension. 
    We define the Mamba input projection function $\mathcal{L} : \F_p^{L \times D} \to \F_p^{L \times D'}$ as:    
    $
        \mathcal{L}(X) := X \cdot W_x + \mathbf{1}_L b_x^\top,
    $
    where $W_x \in \F_p^{D \times D'}$ is the learned weight matrix, $b_x \in \F_p^{D'}$ is a learned bias vector, and $\mathbf{1}_L \in \F_p^{L \times 1}$ broadcasts $b_x$ across all rows.
\end{definition}

After the input projection, Mamba used a 1-D convolution layer to capture local temporal patterns by convolving the input features with a learned kernel. 

\begin{definition}[1-D Convolution]\label{def:1d-conv}
    Let $X \in \F_p^{L \times D'}$ denote the output of Definition~\ref{def:mamba_linear_proj}, where $L$ is the sequence length and $D'$ is the projected feature dimension. Let $W \in \F_p^{K \times D' \times N}$ denote a convolutional kernel of size $K$, where $N$ is the number of output channels. 
    We define the 1-D convolution layer function $\mathcal{C} : \F_p^{L \times D'} \to \F_p^{L \times N}$ as:
    \begin{align*}
        \mathcal{C}(X)_{t,n}:= \sum_{k=0}^{K-1} \sum_{d'=1}^{D'} W[k, d', n] \cdot X_{t-k, d'},
    \end{align*}
    for $t \in [L]$ and $n \in [N]$, where $X_{t-k, d'} = 0$ if $t-k < 0$, and zero-padding is applied for boundary cases; $W[k, d', n]$ selects the contribution of the $d'$-th feature at time step $t-k$ to the $n$-th output channel.
\end{definition}

Then, the convoluted input goes through a non-linear $\mathsf{SiLU}$ activation function in Mamba. 

\begin{definition}[$\mathsf{SiLU}$ Activation]\label{def:silu}
    Let $X \in \F_p^{L \times D} \cup \F_p^{L \times N}$ be the output from Definition~\ref{def:mamba_linear_proj} or Definition~\ref{def:1d-conv}, where $B$ is the batch size, $L$ is the sequence length, and $D$ is the feature dimension. We define the entry wise $\mathsf{SiLU}$ function $\mathcal{Z} : \F_p^{L \times D} \cup \F_p^{L \times N} \to \F_p^{L \times D} \cup \F_p^{L \times N}$ as
        $\mathcal{Z}(X)_{t, d} := X_{t, d} \cdot \sigma(X_{t, d})$,
    where the sigmoid function $\sigma(X_{t, d}) : \F_p \to \F_p$ is defined as:
        $\sigma(X_{t, d}) := \frac{1}{1 + e^{-X_{t, d}}}$.
    Here, $t \in [L]$ and $d \in [D]$ index the sequence and feature dimensions.
\end{definition}

Now, we introduce the softplus activation used in Mamba selection mechanisms as $\tau_\Delta$. 

\begin{definition}[$\mathsf{Softplus}$ Activation]\label{def:softplus}
    We define $\mathsf{Softplus}: \F_p \to \F_p$ as
        $\mathsf{Softplus}(z) := \log(1+e^z)$.
\end{definition}

Following this, the selection functions dynamically adapt the state-space parameters based on the input sequence, refining the model's ability to represent sequential dependencies by modulating the state-space matrices $B$, $C$, and $\Delta$ based on learned projection.

\begin{definition}[Selection Functions]\label{def:selec_funcs}
    Let $X \in \F_p^{L \times D}$ denote the input sequence. Let $\tau_\Delta = \mathsf{Softplus}(w_\Delta)$, where $w_\Delta \in \F_p$ is a learned scalar, and $\mathsf{Softplus}$ is given in Definition~\ref{def:softplus}. The selection functions $s_{B}: \F_p^{L \times D} \to \F_p^{n \times N}$, $s_{C}: \F_p^{L \times D} \to \F_p^{D' \times N}$, $s_{\Delta}: \F_p^{L \times D} \to \F_p$ are defined as:
    \begin{align*}
        s_{B}(X) := W^B X P^B, ~~~~~
        s_{C}(X) := W^C X P^C, ~~~~~\text{and }
        s_{\Delta}(X) := \tau_\Delta \cdot \mathsf{Broadcast}_D(W^\Delta X P^\Delta),
    \end{align*}
    where $W^B \in \F_p^{n \times L}$, $W^C \in \F_p^{D' \times L}$, and $W^\Delta \in \F_p^{1 \times L}$ are learned selection weight matrices, $P^B \in \F_p^{D \times N}$, $P^C \in \F_p^{D \times N}$, $P^\Delta \in \F_p^D$ are projection matrices, and the function $\mathsf{Broadcast}_D: \F_p \to \F_p$ replicates the result of $W^\Delta X P^\Delta$ across all feature dimensions.
\end{definition}

With the selection functions implemented, we now introduce the Selective SSM in Mamba.

\begin{definition}[Selective SSM in Mamba]\label{def:selec_ssm}
    Let $X \in \F_p^{L \times N}$ be the output of Definition~\ref{def:1d-conv}. Given a diagonal matrix $A \in \F_p^{n\times n}$, we define the Selective SSM function $\mathsf{SSM}_{\mathrm{select}} : \F_p^{L \times N} \to \F_p^{L \times D'}$ as $\mathsf{SSM}_{\mathrm{select}}(X) := \mathsf{SSM}_{\mathrm{recur}}(X, A, s_{B}(X), s_{C}(X), s_{\Delta}(X))$,
    where $\mathsf{SSM}_{\mathrm{recur}}(X) \in \F_p^{L \times D'}$ is the recurrent SSM output from Definition~\ref{def:output_recur}, and $s_{B}(X), s_{C}(X), s_{\Delta}(X)$ are selection mechanisms from Definition~\ref{def:selec_funcs}.
\end{definition}

Finally, we introduce the Mamba output projection, which maps the processed sequence back to the original feature dimension.

\begin{definition}[Mamba Output Projection]\label{def:mamba_output_proj}
    Let $X \in \F_p^{L \times D'}$ denote the output from Definition~\ref{def:selec_ssm}, where $L$ is the sequence length and $D'$ is the feature dimension. 
    We define the Mamba output projection function $\mathcal{O} : \F_p^{L \times D'} \to \F_p^{L \times D}$ as:    
    \begin{align*}
        \mathcal{O}(X) := X \cdot W_x + \mathbf{1}_L b_x^\top,
    \end{align*}
    where $W_x \in \F_p^{D' \times D}$ is the learned weight matrix, $b_x \in \F_p^{D}$ is a learned bias vector, and $\mathbf{1}_L \in \F_p^{L \times 1}$ broadcasts $b_x$ across all rows.
\end{definition}

Through this progression, we can now define Mamba as a series of composite functions.

\begin{definition}[Mamba]\label{def:mamba}
    Let $X \in \F_p^{L \times D}$ denote the input sequence, where $L$ is the sequence length, and $D$ is the feature dimension. We define the Mamba architecture function $\mathcal{M} : \F_p^{L \times D} \to \F_p^{L \times D}$ as:
    \begin{align*}
        \mathcal{M}(X) = \mathcal{O}((\mathsf{SSM}_{\mathrm{select}} \circ \mathcal{Z} \circ \mathcal{C} \circ \mathcal{L}(X)) \otimes ( \mathcal{Z} \circ \mathcal{L}(X)),
    \end{align*}
    where $\circ$ is function composition, $\mathcal{L}$ is Mamba Input Projection (see Definition~\ref{def:mamba_linear_proj}), $\mathcal{C}$ is 1-D Convolution Layer (see Definition~\ref{def:1d-conv}), $\mathcal{Z}$ is $\mathsf{SiLU}$ Activation (see Definition~\ref{def:silu}), $\mathsf{SSM}_{\mathrm{select}}$ is Selective SSM (see Definition~\ref{def:selec_ssm}), $\otimes$ is Hadamard Product or Activation, and $\mathcal{O}$ is Mamba Output Projection (see Definition~\ref{def:mamba_output_proj}).
\end{definition}

\section{Complexity of SSM and Mamba}\label{sec:main}

In Section~\ref{sec:log_tc0}, we provide an approximation of the logarithm function within $\mathsf{TC}^0$.  
In Section~\ref{sec:recur_tc0}, we analyze the complexity of computing Recurrent SSM.  
In Section~\ref{sec:conv_tc0}, we investigate the complexity of computing Convolution SSM.  
In Section~\ref{sec:selective_tc0}, we establish circuit complexity bounds for selective SSM.  
In Section~\ref{sec:mamba_tc0}, we present the circuit complexity bounds for Mamba computations.

\subsection{Approximating Logarithm in \texorpdfstring{$\mathsf{TC}^0$}{}}\label{sec:log_tc0}

In this section, we show the approximation of logarithm can be done in $\mathsf{TC}^0$ circuit. The logarithm function is a key component of the $\mathsf{Softplus}$ activation function, which plays a central role in the selection mechanisms of the Selective SSM within the Mamba architecture. Therefore, the ability to compute logarithm in $\mathsf{TC}^0$ is crucial for ensuring Selective SSM and Mamba operate within constant depth $\mathsf{TC}^0$.

\begin{lemma}[Approximating Logarithm in $\mathsf{TC}^0$, informal version of Lemma~\ref{lem:log_tc0_formal}]\label{lem:log_tc0_informal}
    For any $p$-bit floating-point number $x \in \mathbb{F}_p$, we can use a uniform threshold circuit, where the depth is $d_{\log}$ and the size is $\poly(n)$, the logarithm $\log(x)$, where the relative error is less than or equal to $2^{-p}$. 
\end{lemma}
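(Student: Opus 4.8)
The plan is to reduce the computation of $\log(x)$ for a $p$-bit floating-point number to a combination of operations already known to lie in $\mathsf{TC}^0$. First I would write $x$ in its floating-point representation, so that $x = m \cdot 2^{e}$ where $m$ is the mantissa (normalized so that $m \in [1,2)$, say) and $e$ is the exponent, both extractable by constant-depth threshold circuits since they are just bit-selections and comparisons on the encoding of $x$. Then $\log(x) = \log(m) + e \cdot \log 2$. The term $e \cdot \log 2$ is a single floating-point multiplication of the integer exponent $e$ (which is at most $\poly(n)$ in magnitude) by the fixed constant $\log 2$, so by Lemma~\ref{lem:tc0_floatop} this costs depth $d_{\mathrm{std}}$; the remaining work is to approximate $\log(m)$ for $m$ in a bounded interval near $1$.

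For the bounded-argument piece I would use a truncated power series (or an equivalent polynomial/rational approximation). Writing $m = 1 + u$ with $u \in [0,1)$, the series $\log(1+u) = \sum_{k=1}^{\infty} (-1)^{k+1} u^k / k$ converges, but slowly near $u = 1$; to get relative error $2^{-p}$ one needs about $\poly(p) = \poly(n)$ terms there. The cleaner route is to use the identity $\log(1+u) = 2\,\mathrm{artanh}\!\big(\tfrac{u}{2+u}\big) = 2\sum_{j\ge 0} \frac{1}{2j+1}\big(\tfrac{u}{2+u}\big)^{2j+1}$, whose ratio $\tfrac{u}{2+u}$ lies in $[0,1/3)$, so $O(p)$ terms suffice for $2^{-p}$ accuracy. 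Each term is an iterated product of a fixed $p$-bit number (computed by Lemma~\ref{lem:tc0_floatop}, part 2, at depth $d_{\otimes}$), and summing the $O(p) = \poly(n)$ terms is an iterated addition (part 3, depth $d_{\oplus}$). Since all the intermediate quantities are $\poly(n)$-bit floating-point numbers and there are only $\poly(n)$ of them, the whole construction is a constant-depth, $\poly(n)$-size uniform threshold circuit; I would set $d_{\log}$ to be the resulting constant (something like $d_{\mathrm{std}} + O(1)(d_{\otimes} + d_{\oplus})$).

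The main obstacle is the error analysis: I need to track how the truncation error of the series, the rounding error in each floating-point operation (each of which is a relative-error-$2^{-p}$ step), and the error in the precomputed constant $\log 2$ all compose, and show the total stays within relative error $2^{-p}$ (possibly after an innocuous adjustment of working precision to $p' = p + O(\log p)$ bits, which is still $\poly(n)$). Care is needed because $\log(m)$ can be very close to $0$ when $m$ is close to $1$, so controlling \emph{relative} error there requires that the series be evaluated exactly enough that cancellation does not blow up the relative error — the $\mathrm{artanh}$ form helps here since the leading term $\tfrac{2u}{2+u}$ already captures the right order of magnitude. A secondary point is handling the sign and domain: $\log$ is only defined for $x > 0$, so I would state the lemma for positive $x$ (as the informal version implicitly does) and note that the exponent-times-$\log 2$ decomposition handles arbitrarily large or small magnitudes uniformly. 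I would defer the detailed constant-chasing to the formal version (Lemma~\ref{lem:log_tc0_formal}).
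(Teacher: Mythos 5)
Your proposal follows the same overall strategy as the paper's proof: extract the mantissa and exponent, reduce to $\log(x)=\log(\text{mantissa})+k\cdot\log 2$, approximate the bounded-argument logarithm and the constant $\log 2$ by truncated series, and account for depth via Lemma~\ref{lem:tc0_floatop} (the paper arrives at $d_{\log}=2d_\oplus+2d_\otimes+3d_{\mathrm{std}}$). The one substantive difference is the choice of series. The paper normalizes $r$ into $[\tfrac12,1)$ or $[1,2)$ according to the parity of the exponent and uses the direct Taylor expansion $\sum_{i}(-1)^{i+1}(r-1)^i/i$, asserting that since $|r-1|<1$ some $N\in O(p)$ gives error $2^{-p-1}$; as you correctly point out, this justification is thin on the $r\in[1,2)$ branch, where $|r-1|$ can be arbitrarily close to $1$ and the truncation error $|r-1|^N/N$ decays too slowly for $N=O(p)$ to suffice uniformly. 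Your substitution of the $\mathrm{artanh}$ identity, whose argument is bounded away from $1$ (in $[0,1/3)$), repairs exactly this weakness and makes the $O(p)$-term claim genuinely uniform over the mantissa range; you also flag the relative-error issue when $\log(m)$ is near $0$, which the paper does not discuss. So your route is the paper's route with a more robust series choice, at the modest cost of a slightly more involved term computation; the depth bookkeeping is unaffected, and your proposal is correct modulo the deferred constant-chasing.
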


\begin{proof}[Sketch of the proof]
    To approximate $\log(x)$, we normalize $x = \langle m, e \rangle$ into $r \in [\frac{1}{2},1]$ or $r \in [1,2]$, depending on whether $e$ is even or odd. This normalization adjusts the exponent to $k$ and can be computed by $\mathsf{TC}^0$ circuit in constant depth. 

    We use Taylor series expansion around 1 to approximate $\log(r)$, and we can get an approximation of $\log(r)$ with relative error bounded by $2^{-p-1}$. 
    Using the same technique, we can approximate $\log(2)$. Lastly, we compute $\log(x)$ as $\log(x) = \log(r) + k\cdot \log(2)$.
    The $\mathsf{TC}^0$ circuit in constant depth can compute all operations.
\end{proof}

\subsection{Recurrent SSMs are in \texorpdfstring{$\mathsf{TC}^0$}{}}\label{sec:recur_tc0}

In this section, we show recurrent SSM is in $\mathsf{TC}^0$. We provide more details about recurrent SSM in Appendix~\ref{sec:app_recur_ssm}.

\begin{lemma}[Recurrent SSM in $\mathsf{TC}^0$]\label{lem:recur_tc0}
    Let $C \in \F_p^{D' \times n}$, $\mathcal{H}(X,A,B,\Delta) \in \F_p^{L \times n}$, and $X \in \F_p^{L \times N}$ denote the input matrix and intermediate computations, where $p, L, N, n, D' \leq \poly(n)$. We can use a uniform threshold circuit, where the depth is $d_{\mathrm{recur}}$ and the size is $\poly(n)$, to compute the Recurrent SSM function $\mathsf{SSM}_{\mathrm{recur}}(X, A, B, C, \Delta) \in \F_p^{L \times D'}$, as defined in Definition~\ref{def:output_recur}.
\end{lemma}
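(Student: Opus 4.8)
The plan is to exploit the diagonal structure of $A$ to replace the inherently sequential state recurrence with a closed-form expression that can be evaluated in parallel, and then to assemble that expression from the constant-depth primitives of Lemma~\ref{lem:tc0_floatop}, Lemma~\ref{lem:exp_tc0}, and Lemma~\ref{lem:mul_tc0}.

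\textbf{Step 1 (unfold the recurrence).} Recall that $\mathsf{SSM}_{\mathrm{recur}}$ discretizes the dynamics, setting $\bar A_t := \exp(\Delta_t A)$ and a discretized $\bar B_t$ (in whichever form Definition~\ref{def:output_recur} uses), initializes $h_0 = 0$, iterates $h_t = \bar A_t h_{t-1} + \bar B_t x_t$, and outputs $y_t = C_t h_t$. Since $A = \diag(a_1,\dots,a_n)$, each coordinate $i\in[n]$ of the hidden state decouples into the scalar recurrence $h_{t,i} = \exp(\Delta_t a_i)\, h_{t-1,i} + (\bar B_t x_t)_i$.

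\textbf{Step 2 (closed form).} Unrolling the scalar recurrence yields
\[
h_{t,i} \;=\; \sum_{j=1}^{t} \exp\!\Big(\sum_{k=j+1}^{t} \Delta_k a_i\Big)\,(\bar B_j x_j)_i .
\]
Each summand requires only: one iterated addition of at most $L$ floating-point numbers $\Delta_k a_i$ (Lemma~\ref{lem:tc0_floatop}, item 3, depth $d_\oplus$); one application of $\exp$ (Lemma~\ref{lem:exp_tc0}, depth $d_{\exp}$); and $O(1)$ floating-point multiplications to form $(\bar B_j x_j)_i$ and combine the factors (Lemma~\ref{lem:tc0_floatop}, item 1, depth $d_{\mathrm{std}}$). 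There are $L\cdot n\cdot L \le \poly(n)$ summands in total, all evaluated by parallel copies of these sub-circuits.

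\textbf{Step 3 (aggregate and output).} For each pair $(t,i)$ we sum the $t\le L$ summands with one more iterated addition (depth $d_\oplus$), obtaining $\mathcal{H}(X,A,B,\Delta)\in\F_p^{L\times n}$; then $\mathsf{SSM}_{\mathrm{recur}}(X,A,B,C,\Delta)$ is read off row by row as $y_t = C_t h_t$, a product of matrices with $\poly(n)$ dimensions and entries, computable by Lemma~\ref{lem:mul_tc0} in depth $d_{\mathrm{std}}+d_\oplus$. The whole construction is a composition of a constant number of constant-depth, $\poly(n)$-size uniform threshold circuits, so it has constant depth $d_{\mathrm{recur}} = O(d_\oplus + d_{\exp} + d_{\mathrm{std}})$ and $\poly(n)$ size, and uniformity is inherited since each primitive is uniform and the inter-layer wiring is index-computable. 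The one real point of care is that the naive sequential evaluation of $h_t = \bar A_t h_{t-1}+\bar B_t x_t$ has depth $\Theta(L)$, so the argument hinges entirely on the diagonal-$A$ closed form above; a secondary point is error control — one must verify that composing the $2^{-p}$-relative-error $\exp$ approximation with $\poly(n)$-fold iterated floating-point additions and multiplications at precision $p=\poly(n)$ keeps the overall relative error within tolerance, which follows from the standard relative-error accounting already used in~\cite{chi24}.
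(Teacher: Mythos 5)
Your proof is correct, but it takes a genuinely different route from the paper's. The paper decomposes $\mathsf{SSM}_{\mathrm{recur}}$ into (i) computing the hidden states $\mathcal{H}(X,A,B,\Delta)$ via its Lemma~\ref{lem:h_tc0} and (ii) the output contraction $\sum_i \ov{C}_{d,i}\mathcal{H}(\cdot)_{t,i}$, giving $d_{\mathrm{recur}}=d_h+(d_{\mathrm{std}}+d_\oplus)$; crucially, Lemma~\ref{lem:h_tc0} as proved only shows that \emph{one} update $H_{t,\cdot}=\ov{A}H_{t-1,\cdot}+\ov{B}X_{t,\cdot}$ costs constant depth given $H_{t-1,\cdot}$, and does not address the $\Theta(L)$-long dependency chain across timesteps. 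You attack exactly that difficulty head-on: you unroll the recurrence into the closed form $h_{t,i}=\sum_{j\le t}\exp(\sum_{k=j+1}^{t}\Delta_k a_i)(\ov{B}_j x_j)_i$, using diagonality of $A$ to turn the iterated product of transition matrices into a single $\exp$ of an iterated sum, so that all $\poly(n)$ summands are produced by parallel constant-depth subcircuits and aggregated with one more iterated addition. This is essentially the same idea the paper deploys separately for its \emph{convolutional} mode (the kernel $\ov{K}$ built from $\ov{A}^k$ in Definition~\ref{def:kernel} and Lemma~\ref{lem:k_tc0}), imported into the recurrent case. What each buys: the paper's version is shorter and modular but, read literally, leaves the constant-depth claim for the full sequence $\mathcal{H}\in\F_p^{L\times n}$ resting on a lemma that only handles a single step; your version is self-contained, makes the parallelization explicit, and correctly flags the remaining error-accounting obligation for composing the approximate $\exp$ with iterated floating-point operations. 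Your depth constant differs from the paper's bookkeeping ($O(d_\oplus+d_{\exp}+d_{\mathrm{std}})$ versus $d_h+d_{\mathrm{std}}+d_\oplus$ with $d_h=2d_{\mathrm{std}}+d_\oplus$), but both are constants, so the conclusion is unaffected.
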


\begin{proof}
    From Definition~\ref{def:output_recur}, the Recurrent SSM computation is given by:
    \begin{align*}
        \mathsf{SSM}_{\mathrm{recur}}(X, A, B, C, \Delta)_{t, d} := \sum_{i=1}^n \ov{C}_{d, i} \cdot \mathcal{H}(X,A,B,\Delta)_{t, i},
    \end{align*}
    The computation of $\mathsf{SSM}_{\mathrm{recur}}(X)$ involves two primary steps: computing the hidden state updates $\mathcal{H}(X,A,B,\Delta)$ and iterative addition with multiplication. We use a threshold circuit whose depth is 
    \begin{itemize}
        \item $d_h$ to compute $\mathcal{H}(X,A,B,\Delta)$ (Lemma~\ref{lem:h_tc0}),
        \item $d_{\mathrm{std}}$ to compute $\ov{C}_{d, i} \cdot \mathcal{H}(X,A,B,\Delta)_{t, i}$ (Lemma~\ref{lem:tc0_floatop}),
        \item $d_{\oplus}$ to compute $\sum_{i=1}^n \ov{C}_{d, i} \cdot \mathcal{H}(X,A,B,\Delta)_{t, i}$ (Lemma~\ref{lem:tc0_floatop})
    \end{itemize}

    Finally, we can show:
        $d_{\mathrm{recur}} = d_h + (d_{\mathrm{std}} + d_{\oplus}).$
      Therefore, we get our desired result.
\end{proof}

\subsection{Convolution SSMs are in \texorpdfstring{$\mathsf{TC}^0$}{}}\label{sec:conv_tc0}

In this section, we show convolution SSM is in $\mathsf{TC}^0$. We provide more details about recurrent SSM in Appendix~\ref{sec:app_conv_ssm}.

\begin{lemma}[Convolution SSM in $\mathsf{TC}^0$]\label{lem:conv_tc0}
    Let $\ov{K} \in \F_p^{D' \times D \times M}$, $X \in \F_p^{L \times N}$, where $p, L, N, D', M \leq \poly(n)$. 
    We can use a threshold circuit, where the depth is $d_{\mathrm{conv}}$ and the size is $\poly(n)$, to compute the convolution SSM $\mathsf{SSM}_{\mathrm{conv}}:  \F_p^{L \times N} \times \F_p^{n \times n} \times \F_p^{n \times D} \times \F_p^{D' \times n} \times \F_p  \to \F_p^{L \times D'}$, as defined in Definition~\ref{def:conv_output}.
\end{lemma}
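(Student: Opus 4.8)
The plan is to follow the same template as the proof of Lemma~\ref{lem:recur_tc0} for the recurrent case, reducing the convolution SSM to a bounded composition of the $\mathsf{TC}^0$ primitives already established in Section~\ref{sec:float}. First I would unfold Definition~\ref{def:conv_output}: the convolution SSM output $\mathsf{SSM}_{\mathrm{conv}}(X,A,B,C,\Delta)_{t,d}$ is a finite sum over the kernel length of terms of the form $\ov{K}[\cdot,\cdot,m]\cdot X_{t-m,\cdot}$ (with zero-padding for $t-m<0$), where the convolution kernel $\ov{K}\in\F_p^{D'\times D\times M}$ is itself obtained from $A$, $B$, $C$, and $\Delta$. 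So the computation splits into two stages: (i) forming the kernel $\ov{K}$, and (ii) performing the convolution against the (padded) input $X$.

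For stage (ii), I would observe that for each fixed $t\in[L]$, $d\in[D']$ the quantity is $\sum_{m=0}^{M-1}\sum_{d'=1}^{D}\ov{K}[d,d',m]\cdot X_{t-m,d'}$, which is an iterative addition of at most $MD\le\poly(n)$ products of two $p$-bit floating-point numbers. By Lemma~\ref{lem:tc0_floatop}, each product is computable in depth $d_{\mathrm{std}}$ and the iterative sum in depth $d_{\oplus}$; since all these sums (over all $t,d$) are computed in parallel and there are only $\poly(n)$ of them, the whole convolution stage costs depth $d_{\mathrm{std}}+d_{\oplus}$ with $\poly(n)$ size. (Equivalently one can phrase each output channel as a matrix product and invoke Lemma~\ref{lem:mul_tc0}.) The zero-padding boundary condition $X_{t-m,d'}=0$ for $t-m<0$ is just a selection of which indices to include and can be hard-wired into the $\mathsf{DLOGTIME}$-uniform circuit description, contributing no extra depth.

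For stage (i), I would bound the depth needed to produce $\ov{K}$ from the discretized parameters $\ov{A}$, $\ov{B}$; the entries of $\ov{K}$ are products $\ov{C}\,\ov{A}^{m}\,\ov{B}$ (powers of the diagonal matrix $\ov{A}$), and since $A$ is diagonal, $\ov{A}^m$ is entrywise the $m$-th power of scalars, computable by iterated multiplication of $m\le M\le\poly(n)$ copies — exactly the iterated-multiplication primitive of Lemma~\ref{lem:tc0_floatop}, depth $d_{\otimes}$ — followed by the $\ov{C}\cdot(\cdot)\cdot\ov{B}$ multiplications in depth $d_{\mathrm{std}}$ (or $d_{\mathrm{std}}+d_{\oplus}$ via Lemma~\ref{lem:mul_tc0}); the discretization of $A,B$ via $\ov{A}=\exp(\Delta A)$ etc.\ uses Lemma~\ref{lem:exp_tc0} (depth $d_{\exp}$) together with standard floating-point operations. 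Composing the two stages gives a total constant depth $d_{\mathrm{conv}}$ equal to the sum of these finitely many constants, and size $\poly(n)$, which is exactly the claim.

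I expect the only real subtlety — not an obstacle so much as a bookkeeping point — to be correctly accounting for how the kernel $\ov{K}$ is defined in Definition~\ref{def:conv_output} (whether powers of $\ov{A}$ appear explicitly or the kernel is supplied as already-discretized input), since that determines whether stage (i) needs the iterated-multiplication bound $d_{\otimes}$ and the $\exp$ bound $d_{\exp}$ at all; in the version where $\ov K$ is taken as given, the proof collapses to stage (ii) alone and $d_{\mathrm{conv}}=d_{\mathrm{std}}+d_{\oplus}$. Either way, the argument is a finite composition of $\mathsf{TC}^0$-computable operations, so no depth blow-up occurs and the result follows.
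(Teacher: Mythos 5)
Your proposal is correct and matches the paper's proof: the lemma statement takes $\ov{K}$ as given, so the argument is exactly your ``collapsed'' stage (ii) — unfold Definition~\ref{def:conv_output}, compute the inner sum of products via Lemma~\ref{lem:mul_tc0} and the outer sum via iterated addition from Lemma~\ref{lem:tc0_floatop}, all in parallel over $(t,d)$ — while your stage (i) (building $\ov{K}$ from $\ov{A},\ov{B},\ov{C}$) is handled by the paper in the separate Lemma~\ref{lem:k_tc0}. The only cosmetic difference is bookkeeping: the paper nests the two sums and reports $d_{\mathrm{conv}}=d_{\mathrm{std}}+2d_\oplus$, whereas your single-iterated-addition count of $d_{\mathrm{std}}+d_\oplus$ is equally valid and slightly tighter.
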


\begin{proof}
    From Definition~\ref{def:conv_output}, the convolution output sequence is given by:
    \begin{align*}     
        \mathsf{SSM}^{\mathrm{conv}}_{t, d}(X,A,B,C,\Delta) = \sum_{k=0}^{L-1} \sum_{d=1}^D \ov{K}[d', d, k] \cdot X_{t-k, d}.
    \end{align*}
    
    It can be computed as follows. 
    Using a threshold circuit, we can perform
    \begin{itemize}
        \item matrix multiplication to compute $\sum_{d=1}^D \ov{K}[d', d, k] \cdot X_{t-k, d}$ (Lemma~\ref{lem:mul_tc0}) and
        \item iterated addition to compute $\sum_{k=0}^{L-1} \sum_{d=1}^D \ov{K}[d', d, k] \cdot X_{t-k, d}$ (Lemma~\ref{lem:tc0_floatop}),
    \end{itemize}
    whose depths are $d_{\mathrm{std}}+d_\oplus$ and $d_\oplus$, respectively. 
    Finally, we can conclude that: $d_{\mathrm{conv}} = d_{\mathrm{std}} + 2d_\oplus$.
    Thus, we get the desired result.
\end{proof}

\subsection{Circuit Complexity Bound for Selective SSM}\label{sec:selective_tc0}

In this section, we formulate the circuit complexity bound for Selective SSM.

\begin{theorem}[Selective SSM in $\mathsf{TC}^0$]\label{thm:select_tc0}
    Let $X \in \mathbb{F}_p^{L \times N}$ represent the output sequence from $\mathsf{SiLU}$ activated 1-D convolution layer (see Definition~\ref{def:1d-conv}), where $L$ is the sequence length and $N$ is the number of output channels, with $L, N \leq \poly(n)$. We may use a uniform threshold circuit, whose depth is $d_{\mathrm{SSM}}$ and size is $\poly(n)$, to compute the Selective SSM (Definition~\ref{def:selec_ssm}).
\end{theorem}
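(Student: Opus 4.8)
The plan is to unfold the definition of $\mathsf{SSM}_{\mathrm{select}}$ (Definition~\ref{def:selec_ssm}) and trace the computation through its constituent pieces, bounding the circuit depth at each stage by the depths already established in the preceding lemmas and in Lemma~\ref{lem:tc0_floatop}. Recall that $\mathsf{SSM}_{\mathrm{select}}(X) = \mathsf{SSM}_{\mathrm{recur}}(X, A, s_B(X), s_C(X), s_\Delta(X))$, so the computation splits into two phases: first computing the selection matrices $s_B(X)$, $s_C(X)$, $s_\Delta(X)$, and then feeding them into the recurrent SSM.

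For the first phase, I would observe from Definition~\ref{def:selec_funcs} that $s_B(X) = W^B X P^B$ and $s_C(X) = W^C X P^C$ are each products of three $\poly(n)$-sized matrices over $\F_p$; by Lemma~\ref{lem:mul_tc0} each matrix product is computable by a $\poly(n)$-size threshold circuit of depth $d_{\mathrm{std}} + d_\oplus$, so a two-fold product costs depth $2(d_{\mathrm{std}} + d_\oplus)$. For $s_\Delta(X) = \tau_\Delta \cdot \mathsf{Broadcast}_D(W^\Delta X P^\Delta)$, the inner product is again handled by Lemma~\ref{lem:mul_tc0}; the scalar $\tau_\Delta = \mathsf{Softplus}(w_\Delta) = \log(1 + e^{w_\Delta})$ requires one $\exp$ (Lemma~\ref{lem:exp_tc0}, depth $d_{\exp}$), one addition, and one $\log$ (Lemma~\ref{lem:log_tc0_informal}, depth $d_{\log}$), followed by a multiplication and a broadcast, which is a no-op at the circuit level. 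All three selection matrices can be computed in parallel, so this phase has depth on the order of $d_{\mathrm{std}} + d_\oplus + d_{\exp} + d_{\log} + O(1)$.

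For the second phase, I would invoke Lemma~\ref{lem:recur_tc0} directly: once $A$, $s_B(X)$, $s_C(X)$, $s_\Delta(X)$ are in hand, computing $\mathsf{SSM}_{\mathrm{recur}}$ is a $\poly(n)$-size threshold circuit of depth $d_{\mathrm{recur}} = d_h + d_{\mathrm{std}} + d_\oplus$. One subtlety to note here is that Definition~\ref{def:selec_ssm} requires the discretization of $A$ and $B$ (the $\ov{A}, \ov{B}$ appearing inside $\mathcal{H}$), which involves an $\exp$ applied to $\Delta A$; this is absorbed into the $d_h$ bound of Lemma~\ref{lem:h_tc0}, so no additional accounting is needed beyond citing that lemma. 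Composing the two phases and collecting constants, I would conclude that the total depth $d_{\mathrm{SSM}}$ is a sum of finitely many constants — something like $d_{\mathrm{SSM}} = 2(d_{\mathrm{std}} + d_\oplus) + d_{\exp} + d_{\log} + d_{\mathrm{recur}} + O(1)$ — hence constant, and that the overall circuit has $\poly(n)$ size and is uniform, which is exactly the claim.

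The main obstacle is essentially bookkeeping rather than a conceptual difficulty: one must make sure every intermediate quantity has bit-length and dimension bounded by $\poly(n)$ so that Lemma~\ref{lem:tc0_floatop}, Lemma~\ref{lem:mul_tc0}, and the $\exp$/$\log$ lemmas all apply, and one must verify that the relative errors introduced by the approximate $\exp$ and $\log$ (each at most $2^{-p}$) do not compound into something larger than the model tolerates — this is where I would be slightly careful, appealing to the fact that a constant number of compositions of operations each with $2^{-p}$ relative error still yields $2^{-\Omega(p)}$ relative error, which for $p = \poly(n)$ is negligible. Everything else is a direct concatenation of the already-proven building blocks.
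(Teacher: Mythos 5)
Your proposal follows essentially the same route as the paper: decompose $\mathsf{SSM}_{\mathrm{select}}$ into the selection functions, the discretization, and the recurrent SSM, bound each by the corresponding lemma, and sum a constant number of constant depths (the paper cites Lemma~\ref{lem:select_func_tc0} for the first phase where you inline the matrix-product and $\mathsf{Softplus}$ accounting, which is the same content). One bookkeeping claim in your second phase is wrong, though: the discretization $\ov{A} = \exp(\Delta A)$, $\ov{B} = (\Delta A)^{-1}(\exp(\Delta A)-I)\Delta B$ is \emph{not} absorbed into the $d_h$ bound of Lemma~\ref{lem:h_tc0} --- that lemma takes $\ov{A},\ov{B}$ as already computed and its depth $d_h = 2d_{\mathrm{std}} + d_\oplus$ contains no $d_{\exp}$ term. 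The paper accounts for this separately via Lemma~\ref{lem:discre_tc0}, adding $d_{\mathrm{disc}} = 5d_{\mathrm{std}} + 2d_{\exp} + d_\oplus$ to the total, giving $d_{\mathrm{SSM}} = d_{\mathrm{select}} + d_{\mathrm{disc}} + d_{\mathrm{recur}}$. Since discretization is indeed constant-depth, your conclusion survives, but you should cite Lemma~\ref{lem:discre_tc0} as a third explicit summand rather than attributing its cost to Lemma~\ref{lem:h_tc0}.
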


\begin{proof}
    The Selective SSM combines the selection functions, discretization, and state-space dynamics, which we have already proved to be in $\mathsf{TC}^0$.

    To compute Selective SSM, we can follow the following. 
    Using a threshold circuit, we can compute
    \begin{itemize}
        \item selection functions (Lemma~\ref{lem:select_func_tc0}),
        \item discretization (Lemma~\ref{lem:discre_tc0})
        \item recurrent SSM (Lemma~\ref{lem:recur_tc0}), or
        \item convolution SSM (Lemma~\ref{lem:conv_tc0})
    \end{itemize}
    whose depths are $d_{\mathrm{select}}$, $d_{\mathrm{disc}}$, $d_{\mathrm{recur}}$, and $d_{\mathrm{conv}}$ respectively. 
    Finally, we can show: 
    \begin{align*}
        d_{\mathrm{SSM}} = & ~ d_{\mathrm{select}} + d_{\mathrm{disc}} + d_{\mathrm{recur}} \mathrm{~~for~recurrent~SSM},\\
        d_{\mathrm{SSM}} = & ~ d_{\mathrm{select}} + d_{\mathrm{disc}} + d_{\mathrm{conv}} \mathrm{~~for~convolution~SSM}.
    \end{align*}
    Therefore, we get our desired result.
\end{proof}

\subsection{Circuit Complexity Bound for Mamba}\label{sec:mamba_tc0}

In this section, we formulate the circuit complexity bound for Mamba.

\begin{theorem}[Main property for Mamba]\label{thm:mamba_tc0}
    Let $X \in \mathbb{F}_p^{L \times D}$ represent the input sequence, where $L$ is the sequence length and $D$ is the feature dimension, with $L, D \leq \mathrm{poly}(n)$. We may use a uniform threshold circuit, whose depth is $d_{\mathrm{mamba}}$ and size is $\poly(n)$, to compute the Mamba architecture.
    
\end{theorem}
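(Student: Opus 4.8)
The plan is to unfold Definition~\ref{def:mamba}, which exhibits $\mathcal{M}$ as the composition of a \emph{constant} number of layers, to show that each layer is computable by a $\mathsf{DLOGTIME}$-uniform $\mathsf{TC}^0$ circuit of $\poly(n)$ size, and then to invoke closure of $\mathsf{DLOGTIME}$-uniform $\mathsf{TC}^0$ under series composition of constantly many circuits: the depths add, the sizes stay polynomial, and the wiring that concatenates the sub-circuits is computable in logarithmic time. Throughout I would fix a precision $p = \poly(n)$ and carry the invariant that every intermediate tensor has $p$-bit floating-point entries approximating the exact Mamba computation to relative error $\poly(n)\cdot 2^{-p}$.

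For the individual layers: the input and output projections $\mathcal{L}$ and $\mathcal{O}$ (Definitions~\ref{def:mamba_linear_proj} and~\ref{def:mamba_output_proj}) are affine maps $X\mapsto XW+\mathbf{1}_L b^\top$, hence in $\mathsf{TC}^0$ by matrix multiplication (Lemma~\ref{lem:mul_tc0}) followed by an entrywise floating-point addition (Lemma~\ref{lem:tc0_floatop}); call the depths $d_{\mathcal{L}}$ and $d_{\mathcal{O}}$. The 1-D convolution $\mathcal{C}$ (Definition~\ref{def:1d-conv}) is, for each output coordinate, a double sum of $KD'$ products with a zero-padding index convention, i.e.\ an iterated multiplication composed with an iterated addition (Lemma~\ref{lem:tc0_floatop}) together with comparisons to implement the padding, exactly as in the proof of Lemma~\ref{lem:conv_tc0}; this gives depth $d_{\mathcal{C}}$. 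The $\mathsf{SiLU}$ activation $\mathcal{Z}$ (Definition~\ref{def:silu}) is the entrywise map $x\mapsto x\cdot\sigma(x)$ with $\sigma(x)=(1+e^{-x})^{-1}$: approximate $e^{-x}$ to relative error $2^{-p}$ by Lemma~\ref{lem:exp_tc0}, then perform the addition of $1$, the reciprocal, and the product as $\mathsf{TC}^0$ floating-point operations (Lemma~\ref{lem:tc0_floatop}), yielding depth $d_{\mathcal{Z}}$ (which contains $d_{\exp}$). The Selective SSM block $\mathsf{SSM}_{\mathrm{select}}$ (Definition~\ref{def:selec_ssm}) is handled directly by Theorem~\ref{thm:select_tc0}, with depth $d_{\mathrm{SSM}}$; internally this is where the logarithm approximation for $\mathsf{Softplus}$ (Lemma~\ref{lem:log_tc0_informal}), the recurrence or convolution forms (Lemmas~\ref{lem:recur_tc0} and~\ref{lem:conv_tc0}), and the square-root approximation lemma are used. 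Finally the Hadamard product / activation $\otimes$ is an entrywise product of two $L\times D$ matrices, or a SiLU-type activation, in $\mathsf{TC}^0$ with depth $d_{\mathrm{had}}$.

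Stacking these in the order dictated by Definition~\ref{def:mamba}, the two branches $u=\mathsf{SSM}_{\mathrm{select}}\circ\mathcal{Z}\circ\mathcal{C}\circ\mathcal{L}$ and $v=\mathcal{Z}\circ\mathcal{L}$ are each series compositions of constantly many $\mathsf{TC}^0$ sub-circuits; their outputs are combined by the $\otimes$ gate and $\mathcal{O}$ is applied on top. Since at each stage the size stays $\poly(n)$ and the depth is additive, the resulting circuit has constant depth
\begin{align*}
d_{\mathrm{mamba}} = d_{\mathcal{L}} + d_{\mathcal{C}} + d_{\mathcal{Z}} + d_{\mathrm{SSM}} + d_{\mathrm{had}} + d_{\mathcal{O}}
\end{align*}
and $\poly(n)$ size. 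For $\mathsf{DLOGTIME}$-uniformity, each gadget above is already stated for uniform threshold circuits, and the map that, given a gate index of the composite circuit, decides which sub-circuit and which local gate it belongs to is a decomposition by fixed per-layer offsets, computable in $O(\log n)$ time; hence the composite connection language is $\mathsf{DLOGTIME}$-decidable.

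The step I expect to be the actual work is not the depth accounting but controlling error propagation along the composition. Each of the $\exp$, reciprocal, $\log$ (via $\mathsf{Softplus}$), and $\sqrt{\cdot}$ approximations contributes a relative error of order $2^{-p}$, and these errors are then magnified by the subsequent $\poly(n)$-dimensional matrix products and, inside the recurrence, by the length-$L$ iterated sums and products. I would isolate a short lemma showing that after composing the constantly many layers the accumulated relative error is still $\poly(n)\cdot 2^{-p}$, which is harmless since $p=\poly(n)$ (any shortfall being absorbed by increasing $p$ by an additive $O(\log n)$), and I would also check that no intermediate quantity over- or underflows the $p$-bit exponent range. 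Everything else is bookkeeping over the fixed layer list of Definition~\ref{def:mamba}.
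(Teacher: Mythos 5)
Your proposal follows essentially the same route as the paper's proof: unfold Definition~\ref{def:mamba} into its six constituent layers, invoke Lemma~\ref{lem:mul_tc0} for the projections, Lemma~\ref{lem:1d_conv} for the convolution, Lemma~\ref{lem:silu_tc0} for $\mathsf{SiLU}$, Theorem~\ref{thm:select_tc0} for the Selective SSM, and Lemma~\ref{lem:had_tc0} for the Hadamard product, then sum the constant depths. Your additional remarks on error propagation and on the $\mathsf{DLOGTIME}$-uniformity of the composite wiring are more careful than what the paper writes down, but they do not change the argument's structure.
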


\begin{proof}
    The Mamba from Definition~\ref{def:mamba} is given:
    \begin{align*}
       \mathcal{M}(X) = \mathcal{O}((\mathsf{SSM}_{\mathrm{select}} \circ \mathcal{Z} \circ \mathcal{C} \circ \mathcal{L}(X)) \otimes ( \mathcal{Z} \circ \mathcal{L}(X)),
    \end{align*}

    Using a threshold circuit, we can compute
    \begin{itemize}
        \item input projections (Lemma~\ref{lem:mul_tc0}) using matrix multiplication and addition,
        \item 1-D Convolution (Lemma~\ref{lem:1d_conv}),
        \item entrywise SiLU (Lemma~\ref{lem:silu_tc0}),
        \item Selective SSM (Theorem~\ref{thm:select_tc0}),
        \item Hadamard Product (Lemma~\ref{lem:had_tc0}),
        \item output projection (Lemma~\ref{lem:mul_tc0}) using matrix multiplications and additions,
    \end{itemize}
    whose depths are $d_{\mathrm{std}}+d_{\oplus}$, $d_{\mathrm{1dconv}}$, $d_{\exp}+d_{\mathrm{std}}$, $d_{\mathrm{select}}$, $d_{\mathrm{std}}$, and $d_{\mathrm{std}}+d_{\oplus}$, respectively.

    Finally, we can show
        $d_{\mathrm{mamba}} = d_{\mathrm{1dconv}} + d_{\exp} + d_{\mathrm{select}} + 4d_{\mathrm{std}} + d_{\oplus}$
    
    Therefore, we can get the desired result.
\end{proof}

Theorem~\ref{thm:mamba_tc0} demonstrates that a $\mathsf{DLOGTIME}$-uniform $\mathsf{TC}^0$ circuit family can simulate Mamba, showing the Mamba representation capacity limitations. 
In previous work, \cite{mps24} showed that SSM and Mamba can be simulated by $\mathsf{L}$-uniform $\mathsf{TC}^0$ with $c\log (n)$ precision. However, we improve the uniformity and precision in \cite{mps24} by proving that Mamba can be simulated by $\mathsf{DLOGTIME}$-uniform $\mathsf{TC}^0$ with $\mathrm{poly} (n)$ precision by new techniques introduced from \cite{chi24}. Our complexity bound is better than previous work. 
\section{Hardness}\label{sec:hardness}

In this section, we present the hardness result: Selective SSM and Mamba, which are constrained in $\mathsf{TC}^0$, cannot solve problems residing in $\mathsf{NC}^1$, such as arithmetic formula evaluation, Boolean formula value problems, and permutation composition. These results show the limitations of Selective SSM and Mamba in their expressive power. 

\begin{theorem}[Informal proof of Theorem~\ref{thm:main_hardness_formal}]\label{thm:main_hardness_informal}
    if $\mathsf{TC}^0 \neq \mathsf{NC}^1$, float point number is $\poly(n)$-bits precision, layers are constant-depth, and hidden dimension is $O(n)$ size, then we can have the Selective SSM and Mamba are not capable of resolving the arithmetic formula evaluation problems, boolean formula value problem, and permutation composition problems.
\end{theorem}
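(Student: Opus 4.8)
The plan is to argue by contradiction, combining the $\mathsf{TC}^0$ upper bounds already proved in Theorems~\ref{thm:select_tc0} and~\ref{thm:mamba_tc0} with the classical $\mathsf{NC}^1$-hardness of the three target problems. First I would recall the relevant completeness facts: the Boolean formula value problem is $\mathsf{NC}^1$-complete under $\mathsf{DLOGTIME}$-uniform $\mathsf{AC}^0$ many-one reductions (Buss); the arithmetic formula evaluation problem over a fixed finite structure is likewise $\mathsf{NC}^1$-complete (Buss--Cook--Gupta--Ramachandran); and the word problem for a fixed non-solvable group such as $S_5$ — that is, iterated composition of a sequence of permutations — is $\mathsf{NC}^1$-complete by Barrington's theorem. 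In each case the hardness reduction can be taken to be a $\mathsf{DLOGTIME}$-uniform $\mathsf{AC}^0$ reduction, hence in particular a $\mathsf{DLOGTIME}$-uniform $\mathsf{TC}^0$ reduction.

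Next I would fix an encoding of problem instances of size $n$ as Mamba inputs $X \in \F_p^{L \times D}$ with $L, D = \poly(n)$ and $p = \poly(n)$, chosen so that the encoding map and the decoding of the relevant output bit of $\mathcal{M}(X)$ are themselves computable by $\mathsf{DLOGTIME}$-uniform $\mathsf{TC}^0$ circuits; I would say that a constant-depth Mamba of width $O(n)$ and precision $\poly(n)$ \emph{solves} the problem if, with weights allowed to depend on $n$, its decoded output equals the correct answer on every instance of size $n$. Now suppose toward a contradiction that some such Mamba solves, say, the Boolean formula value problem. By Theorem~\ref{thm:mamba_tc0} this Mamba is simulated by a $\mathsf{DLOGTIME}$-uniform $\mathsf{TC}^0$ circuit family, so composing the encoder, that family, and the output decoder (all $\mathsf{DLOGTIME}$-uniform $\mathsf{TC}^0$, and $\mathsf{TC}^0$ is closed under such composition) places the Boolean formula value problem in $\mathsf{DLOGTIME}$-uniform $\mathsf{TC}^0$. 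Composing further with the $\mathsf{AC}^0$ hardness reduction from an arbitrary $L \in \mathsf{NC}^1$ gives $L \in \mathsf{TC}^0$, hence $\mathsf{NC}^1 \subseteq \mathsf{TC}^0$; since $\mathsf{TC}^0 \subseteq \mathsf{NC}^1$ always holds, this forces $\mathsf{TC}^0 = \mathsf{NC}^1$, contradicting the hypothesis. The identical argument applies to arithmetic formula evaluation and to permutation composition, and with Selective SSM in place of Mamba by invoking Theorem~\ref{thm:select_tc0} instead of Theorem~\ref{thm:mamba_tc0}.

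The substance here is not deep mathematics but careful bookkeeping, and that is where I expect the only real obstacle to lie: one must (i) pin down the input/output encoding so that instances of length $n$ genuinely land in the $L, D, p = \poly(n)$ regime and the encode/decode maps stay inside $\mathsf{TC}^0$, so that the composition of the hardness reduction with the Mamba-to-$\mathsf{TC}^0$ simulation is \emph{manifestly} a $\mathsf{DLOGTIME}$-uniform $\mathsf{TC}^0$ circuit family; (ii) verify that the $O(n)$ hidden dimension and constant depth assumed in the theorem are precisely the hypotheses under which Theorems~\ref{thm:select_tc0} and~\ref{thm:mamba_tc0} give the simulation; and (iii) state the notion of "solving" uniformly enough (non-uniform weights, uniform circuit family) that the closure and composition steps are legitimate. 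Once the uniformity and composition are set up correctly, the contradiction with $\mathsf{TC}^0 \neq \mathsf{NC}^1$ is immediate, and the formal statement Theorem~\ref{thm:main_hardness_formal} follows.
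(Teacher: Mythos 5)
Your proposal follows essentially the same route as the paper: combine the $\mathsf{TC}^0$ simulation results (Theorems~\ref{thm:select_tc0} and~\ref{thm:mamba_tc0}) with the $\mathsf{NC}^1$-hardness of the three problems and the inclusion $\mathsf{TC}^0 \subseteq \mathsf{NC}^1$ to contradict $\mathsf{TC}^0 \neq \mathsf{NC}^1$, exactly as the paper does via Lemmas~\ref{lem:ssm_arithm}--\ref{lem:mamba_word}. If anything you are more careful than the paper's one-line proofs, which for arithmetic formula evaluation cite only membership in $\mathsf{NC}^1$ (Remark~\ref{lem:arith_nc1}) rather than the $\mathsf{NC}^1$-completeness that your argument correctly identifies as the property actually needed, and which never spell out the input/output encoding or the closure-under-composition step you make explicit.
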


\begin{proof}[Proof Sketch]
    To show Selective SSM and Mamba cannot solve arithmetic formula evaluation problems, Boolean formula value problems, and permutation composition problems. We leverage the difference between the complexity classes $\mathsf{TC}^0$ and $\mathsf{NC}^1$, under the assumption $\mathsf{TC}^0 \neq \mathsf{NC}^1$.
    Arithmetic formula evaluation problems, Boolean formula value problems, and permutation composition problems are defined to be $\mathsf{NC}^1$ problems in Section~\ref{sec:arithm_prob},~\ref{sec:bool_prob}, and~\ref{sec:perm_prob}. 
    From previous proof, we show Selective SSM and Mamba are both in $\mathsf{TC}^0$. Therefore, they cannot solve those $\mathsf{NC}^1$ problems.
\end{proof}

To the best of our knowledge, there is no previous work proving that Mamba and SSM with $\mathrm{poly} (n)$ precision cannot solve arithmetic formula problems, boolean formula value problems, and permutation composition problems.

\section{Conclusion}\label{sec:conclusion}

In this paper, we conducted a rigorous mathematical analysis of the computational limits of SSM and Mamba. We use the framework of circuit complexity and demonstrate that Mamba and SSMs, despite their stateful designs, fall into $\mathsf{DLOGTIME}$-uniform $\mathsf{TC}^0$ with $\poly(n)$-precision. These results show that SSM and Mamba are fundamentally equivalent to Transformers in terms of computational expressiveness, as their architectures are all constrained by the complexity class $\mathsf{TC}^0$. As a result, Mamba cannot solve problems outside $\mathsf{TC}^0$, such as arithmetic formula evaluation and Boolean formula value problems, unless $\mathsf{TC}^0 = \mathsf{NC}^1 $. 

Our contributions include formal proofs of the circuit complexity bounds for Mamba and SSMs, and we show that their computational performances are equivalent to constant-depth uniform threshold circuits. Additionally, we provide hardness results. The hardness results show that these architectures cannot resolve sequential and state-dependent tasks that require higher computational depth. These new findings challenge the assumption that Mamba has higher computational capabilities than Transformers. 
By building the theoretical limits of Mamba and SSMs, our work contributes to the broader understanding of the computational power of modern neural network models. We emphasize the need for future innovations to solve problems beyond $\mathsf{TC}^0$ so they can solve more complex and inherently sequential problems. We hope our study can inspire more research on designing newer architectures that can balance efficiency, scalability, and enhanced expressiveness.

\ifdefined\isarxiv
\bibliographystyle{alpha}
\bibliography{ref}
\else
\bibliography{ref}

\fi

\newpage
\onecolumn
\appendix

\begin{center}
    \textbf{\LARGE Appendix }
\end{center}

{\bf Roadmap.} In Section~\ref{sec:app_preli}, we introduce more definitions related to our work, including circuit complexity definitions, float point operations, and definitions for recurrent and convolution SSM. 
In Section~\ref{sec:app_main}, we present more proofs of the components of our main Theorem~\ref{thm:select_tc0} and~\ref{thm:mamba_tc0}. 
In Section~\ref{sec:app_hardness}, we present the definitions for our hardness problems and the results with Selective SSM and Mamba.
In Section~\ref{sec:more_related}, we provide more related works. 

\section{Preliminaries}\label{sec:app_preli}

In this section, we introduce more definitions related to our work. 
In Section~\ref{sec:appendix_float}, we introduce more float point numbers and their operations. In Section~\ref{sec:app_recur_ssm}, we define the components of Recurrent SSM. In Section~\ref{sec:app_conv_ssm}, we define the components of Convolution SSM.

We begin by introducing the notations used in this paper.

\paragraph{Notation}

For $n \in \Z_+$, we define $[n] := \{1, 2, \dots, n\}$. We use ${\rm Pr}[\cdot]$ to denote the probability. We use $\E[\cdot]$ to denote the expectation. We use ${\rm Var}[\cdot]$ to denote the variance. 

We define ${\bf 1}_n \in \R^n$ as $({\bf 1}_n)_i := 1$, for all $i \in [n]$. Let $X_{i,j} \in \R$ be the $(i, j)$-th entry of an arbitrary matrix $X$. Let $\|X\|_\infty \in \R$ be the largest entry of the matrix $X$. 
We denote $x_i = \{0,1\}^*$ to be the binary sequence, where its length is not determined.

\subsection{Float Point Numbers}\label{sec:appendix_float}

In this section, we introduce the float point numbers.

\begin{definition}[Floating-point number, From Definition 9 in~\cite{chi24}]
    A $p$-bit floating-point number is defined as a pair $\langle m, e \rangle$, where $m$ (the significand) is an integer satisfying $m \in (-2^p, -2^{p-1}) \cup \{0\} \cup [2^{p-1}, 2^p)$, and $e$ (the exponent) is an integer within the range $e \in [-2^p, 2^p)$. The value of the floating-point number $\langle m, e \rangle$ corresponds to the real number $m \cdot 2^e$. The set of all $p$-bit floating-point numbers is denoted as $\mathbb{F}_p$.
\end{definition}

\begin{definition}[Rounding, From Definition 9 in~\cite{chi24}]
    $x$ is a floating point or in $\R$. Let $\mathrm{round}_p(x)$ be a floating-point number with $p$-bit closest to $x$ with an even significand in case of a tie.
\end{definition}

\begin{definition}[Floating-point number operations,~\cite{chi24}]\label{def:float_operation}
    Consider $a,b \in \Z$. Let the operation $a \sslash b$ be as follows. Suppose $a/b = C 1/4$, where $C \in \Z$, then $a \sslash b = a / b$. Or, $a \sslash b$ is equal to $a/b + 1/8$.
    
With floating points $\langle m_1, e_1\rangle$, $\langle m_2, e_2 \rangle$ having $p$-bits, we define the following operations:
\begin{itemize} 
    \item addition: 
    \begin{align*}
        \langle m_1, e_1\rangle + \langle m_2, e_2\rangle
        := \begin{cases}
            \mathrm{round}_p(\langle m_1+m_2 \sslash 2^{e_1-e_2}, e_1\rangle)   & \mathrm{if~} e_1\geq e_2,\\
            \mathrm{round}_p(\langle m_1 \sslash 2^{e_2-e_1}+m_2, e_2\rangle)   & \mathrm{if~} e_1\leq e_2,\\
        \end{cases}
    \end{align*}
    \item multiplication:
    \begin{align*}
        \langle m_1, e_1\rangle \times \langle m_2, e_2 \rangle := \mathrm{round}_p(\langle m_1m_2, e_1+e_2\rangle)
    \end{align*}
    \item division:
    \begin{align*}
        \langle m_1, e_1\rangle \div \langle m_2, e_2\rangle
        := \mathrm{round}_p(\langle m_12^{p-1} \sslash m_2, e_1-e_2-p+1\rangle)
    \end{align*}
    \item comparison:
    \begin{align*}
        \langle m_1, e_1\rangle \leq \langle m_2, e_2\rangle \leftrightarrow \begin{cases}
            m_1 \leq m_2 \sslash 2^{e_1-e_2}  & \mathrm{if~} e_1 \geq e_2, \\
            m_1 \sslash 2^{e_2-e_1} \leq m_2  & \mathrm{if~} e_1 \leq e_2. \\
        \end{cases}
    \end{align*}
    \item floor:
    if $e \geq 0$, then $\lfloor \langle m,e \rangle \rfloor := \langle m2^e, 0 \rangle$. If $e < 0$, then $\lfloor \langle m,e \rangle \rfloor := \mathrm{round}(\langle m/2^{-e},0 \rangle)$
\end{itemize}
\end{definition}

\subsection{Discretization: Recurrent SSM}\label{sec:app_recur_ssm}

In this section, we define and formalize the discretization of recurrent SSMs and their associated components. We provide a structured foundation for understanding their functionality and computation. We begin by introducing the discrete transformation technique that transforms the continuous state-space representations into discrete ones.

\begin{definition}[Discrete State Space Transformation]\label{def:discretization}
    Let $\Delta$ denote the discretization step size. The discrete parameters $\ov{A} \in \F_p^{n \times n}$, $\ov{B} \in \F_p^{n \times D}$, and $\ov{C} \in \F_p^{D' \times n}$ are defined as follows:
    \begin{align*}
        \ov{A} := &~ \exp(\Delta A), \\
        \ov{B} := &~ (\Delta A)^{-1}(\exp(\Delta A) - I) \cdot \Delta B, \\
        \ov{C} := &~ C,
    \end{align*}
    where $\exp(\Delta A)$ denotes the matrix exponential of $\Delta A$, $A \in \F_p^{n \times n}$ is the continuous state transition matrix, $B \in \F_p^{n \times D}$ is the continuous input influence matrix, $C \in \F_p^{D' \times n}$ is the output projection matrix, and $I \in \F_p^{n\times n}$ is the identity matrix.
\end{definition}

Transitioning from the discretization step, we proceed to the hidden state recurrence in recurrent SSM, which is the core update mechanism for hidden states across timesteps. 

\begin{definition}[Hidden State Recurrence]\label{def:hid_recur}
    Let $H \in \F_p^{L \times n}$ denote the hidden state, and $X \in \F_p^{L \times N}$ be the output of Definition~\ref{def:1d-conv}, where $L$ is the length of the sequence and $n$ denotes the hidden state dimensions. We define the hidden state update function $\mathcal{H}: \F_p^{L \times N} \times \F_p^{n \times n} \times \F_p^{n \times D} \times \F_p \to \F_p^{L \times n}$ as:
    \begin{align*}
        \mathcal{H}(X,A,B,\Delta)_{t, i} := \sum_{j=1}^n \ov{A}_{i, j} \cdot H_{t-1, j} + \sum_{k=1}^D \ov{B}_{i, k} \cdot X_{t, k},
    \end{align*}
    where $\ov{A} \in \F_p^{n \times n}$ and $\ov{B} \in \F_p^{n \times D}$ are the parameters from Definition~\ref{def:discretization},  $H_{t-1, j}$ denotes the hidden state at timestep $t-1$, initialized as $H_{0, i} = 0$, and $X_{t, k}$ denotes the input matrix at timestep $t$.
\end{definition}

Finally, we are able to formalize recurrent SSMs, which combine the hidden state update mechanism with the output projection step.

\begin{definition}[Recurrent SSM]\label{def:output_recur}
    Let $X \in \F_p^{L \times N}$ be the output of Definition~\ref{def:1d-conv}. We define the Recurrent SSM function $\mathsf{SSM}_{\mathrm{recur}} : \F_p^{L \times N} \times \F_p^{n \times n} \times \F_p^{n \times D} \times \F_p^{D' \times n} \times \F_p  \to \F_p^{L \times D'}$ as:
    \begin{align*}
        \mathsf{SSM}_{\mathrm{recur}}(X, A, B, C, \Delta)_{t, d} := \sum_{i=1}^n \ov{C}_{d, i} \cdot \mathcal{H}(X,A,B,\Delta)_{t, i},
    \end{align*}
    where $\mathcal{H}(X) \in \F_p^{L \times n}$ is the hidden state update function defined in Definition~\ref{def:hid_recur}, and $\ov{C} \in \F_p^{D' \times n}$ is the output projection matrix, mapping the hidden state to the output space.
\end{definition}

\subsection{Discretization: Convolutional SSM}\label{sec:app_conv_ssm}

In this section, we extend the formulation of SSM by presenting its convolutional implementations after discretization. These are the core mechanisms that enable its parallel computations. We first show the kernel computation.

\begin{definition}[Convolution Kernel]\label{def:kernel}
    Let $\ov{A} \in \F_p^{n \times n}$, $\ov{B} \in \F_p^{n \times D}$, and $\ov{C} \in \F_p^{D' \times n}$ denote the discrete state-space parameters. We define the convolution kernel $\ov{K} \in \F_p^{D' \times D \times M}$ for parallel computations as:
    \begin{align*}
        \ov{K}[d', d, k] = \sum_{i=1}^n \sum_{j=1}^n \ov{C}_{d', i} \cdot (\ov{A}^k)_{i, j} \cdot \ov{B}_{j, n},
    \end{align*}
    where $d' \in [D']$ is the output feature dimension index, $d \in [D]$ is the input feature dimension index, and $k \in [M]$ is the time offset index, and $M$ is the length of the kernel.
\end{definition}

By using this kernel $\ov{K}$, we can compute the final output sequence through convolution.

\begin{definition}[Convolution Output Sequence for SSM]\label{def:conv_output}
    Let $X \in \F_p^{L \times N}$ be the output from Definition~\ref{def:1d-conv}), where $t \in [L]$ is the index of the sequence, $d \in [D]$ is the index of input feature. Using the kernel $\ov{K} \in \F_p^{D' \times D \times M}$ from Definition~\ref{def:kernel}, we define the convolution SSM $\mathsf{SSM}_{\mathrm{conv}} :  \F_p^{L \times N} \times \F_p^{n \times n} \times \F_p^{n \times D} \times \F_p^{D' \times n} \times \F_p  \to \F_p^{L \times D'}$ as:
    \begin{align*}
        \mathsf{SSM}^{\mathrm{conv}}_{t, d}(X,A,B,C,\Delta) = \sum_{k=0}^{L-1} \sum_{d=1}^D \ov{K}[d', d, k] \cdot X_{t-k, d}
    \end{align*}
    for each $t = 0, 1, \dots, L-1$, Here $\mathsf{SSM}^{\mathrm{conv}}_{t, d}$ is the output for timestep $t$ and output feature $d$, $\ov{K}[d', d, k]$ is the kernel weight for output feature $d'$, input feature $d$, and time offset $k$, and $X_{t-k, d}$ is the input for timestep $t-k$, and input dimension $d$.
\end{definition}
\section{Complexity of SSM and Mamba}\label{sec:app_main}

In this section, we provide additional proofs to support our theorem.

In Section~\ref{sec:app_prod_tc0}, we show the Hadamard product is in $\mathsf{TC}^0$.
In Section~\ref{sec:app_disc_tc0}, we show the discretization in SSM is in $\mathsf{TC}^0$.
In Section~\ref{sec:app_log_tc0}, we show approximating logarithm can be done in $\mathsf{TC}^0$.
In Section~\ref{sec:app_sp_tc0}, we show the $\mathsf{Softplus}$ Activation is in $\mathsf{TC}^0$.
In Section~\ref{sec:app_silu_tc0}, we show the $\mathsf{SiLU}$ Activation is in $\mathsf{TC}^0$.
In Section~\ref{sec:app_h_tc0}, we show the hidden state update function is in $\mathsf{TC}^0$.
In Section~\ref{sec:app_kernel_tc0}, we show the computation of kernel in Convolution SSM is in $\mathsf{TC}^0$.
In Section~\ref{sec:app_index_tc0}, we show the convolution indexing is in $\mathsf{TC}^0$.
In Section~\ref{sec:app_1d_conv_tc0}, we show the 1-D convolution layer in Mamba is in $\mathsf{TC}^0$.
In Section~\ref{sec:app_selec_funcs_tc0}, we show the selective functions are in $\mathsf{TC}^0$.
 
\subsection{Computing Entry-wise Matrix Multiplication}\label{sec:app_prod_tc0}

Now, we present computing entrywise matrix multiplication.

\begin{lemma}[Hadamard Product in $\mathsf{TC}^0$]\label{lem:had_tc0}
    Let $A \in \mathbb{F}_p^{n \times d}$ and $B \in \mathbb{F}_p^{n \times d}$. If $p \leq \poly(n)$, $n \leq \poly(n)$, and $d \leq n$, then we can compute the Hadamard product $A \circ B$ using a uniform threshold circuit, whose depth is $d_{\mathrm{std}}$, and size is $\poly(n)$.
\end{lemma}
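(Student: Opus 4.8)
The plan is to reduce the Hadamard product to a collection of independent scalar floating-point multiplications, one per matrix entry, and then invoke the already-established fact (Lemma~\ref{lem:tc0_floatop}, part 1) that a single $p$-bit floating-point multiplication lies in $\mathsf{TC}^0$ with constant depth $d_{\mathrm{std}}$. Since $(A \circ B)_{i,j} = A_{i,j} \cdot B_{i,j}$, each output entry depends on exactly two input entries, so all $nd$ entrywise products can be computed in parallel by placing $nd$ disjoint copies of the multiplication subcircuit side by side.

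First I would recall the definition of the Hadamard product: for $A, B \in \mathbb{F}_p^{n \times d}$, the matrix $A \circ B \in \mathbb{F}_p^{n \times d}$ has entries $(A \circ B)_{i,j} := A_{i,j} \cdot B_{i,j}$ for $i \in [n]$ and $j \in [d]$. Next I would observe that each such product is a multiplication of two $p$-bit floating-point numbers, which by Lemma~\ref{lem:tc0_floatop} can be carried out by a uniform threshold circuit of size $\poly(n)$ and constant depth $d_{\mathrm{std}}$. Then I would note that the $nd$ products are mutually independent — no output entry feeds into another — so the overall circuit is simply the disjoint union of these $nd$ subcircuits, run in parallel. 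The total size is $nd \cdot \poly(n) \le \poly(n)$ using $p \le \poly(n)$, $n \le \poly(n)$, and $d \le n$; the depth remains $d_{\mathrm{std}}$ because parallel composition does not increase depth. Uniformity is preserved since the wiring between the $(i,j)$-th block of input wires and the $(i,j)$-th multiplication subcircuit is a trivially computable indexing map, so the whole family is still $\mathsf{DLOGTIME}$-uniform.

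There is essentially no main obstacle here — the lemma is a direct ``parallel lifting'' of the scalar multiplication result, and the only things to check are the routine size bound (which follows from the stated polynomial bounds on $p$, $n$, $d$) and that parallel composition keeps the depth and uniformity intact. If anything needs care, it is merely stating explicitly that combining $\poly(n)$ many constant-depth $\poly(n)$-size circuits in parallel yields a circuit that is still $\poly(n)$-size, constant-depth, and uniform; this is standard and requires no computation.
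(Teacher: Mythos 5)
Your proposal is correct and follows essentially the same route as the paper's own proof: reduce $(A \circ B)_{i,j} = A_{i,j}\cdot B_{i,j}$ to $nd$ independent scalar floating-point multiplications, apply part 1 of Lemma~\ref{lem:tc0_floatop} to each, and note that parallel composition preserves the constant depth $d_{\mathrm{std}}$ and polynomial size. Your explicit remark about uniformity of the wiring is a minor addition the paper leaves implicit, but the argument is the same.
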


\begin{proof}
    We have $(A \circ B)_{i,j} = A_{i,j} \cdot B_{i,j}$. By Lemma~\ref{lem:tc0_floatop}, a threshold circuit with constant depth $d_{\mathrm{std}}$ can compute every product $A_{i,j} \cdot B_{i,j}$. Since the computations of $A_{i,j} \cdot B_{i,j}$ for different pairs $(i, j)$ are independent, all such products can be computed in parallel with the same depth $d_{\mathrm{std}}$.

    The circuit's size stays polynomial in $n$ because both $n$ and $d$ are bounded by $\poly(n)$, and each multiplication is implemented using a circuit of poly size.
\end{proof}

\subsection{Computing Discretization}\label{sec:app_disc_tc0}

In this section, we prove computing discretization is in $\mathsf{TC}^0$.

\begin{lemma}[Discretization in $\mathsf{TC}^0$]\label{lem:discre_tc0}
    Let $A\in \F_p^{n\times n}$ be a diagonal matrix and $B \in \F_p^{n\times d}$, where $n \leq \poly(n)$, and $d \leq \poly(n)$. Then a uniform threshold circuit with size $\poly(n)$ and constant depth $d_{\mathrm{disc}}$ can compute the discrete parameters $\ov{A}$ and $\ov{B}$ from Definition~\ref{def:discretization}.
\end{lemma}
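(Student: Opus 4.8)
The plan is to exploit the hypothesis that $A$ is diagonal, which collapses every matrix operation appearing in Definition~\ref{def:discretization} into a family of independent scalar floating-point operations, each already known to lie in $\mathsf{TC}^0$ by Lemma~\ref{lem:tc0_floatop} and Lemma~\ref{lem:exp_tc0}. Since $\Delta \in \F_p$ is a scalar and $A$ is diagonal, $\Delta A$ is diagonal with diagonal entries $\Delta A_{i,i}$; hence $\exp(\Delta A)$ is the diagonal matrix with entries $\exp(\Delta A_{i,i})$ and $(\Delta A)^{-1}$ is the diagonal matrix with entries $(\Delta A_{i,i})^{-1}$. In particular the matrix exponential and matrix inverse never require general linear algebra, only entrywise scalar evaluation.

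First I would compute $\ov A$. For each $i \in [n]$ in parallel, form $\Delta A_{i,i}$ with one floating-point multiplication (depth $d_{\mathrm{std}}$, Lemma~\ref{lem:tc0_floatop}) and then set $\ov A_{i,i} = \exp(\Delta A_{i,i})$ with the $\exp$-approximation circuit of Lemma~\ref{lem:exp_tc0} (depth $d_{\exp}$, relative error at most $2^{-p}$); the off-diagonal entries of $\ov A$ are hardwired to $0$. This is $n$ disjoint subcircuits, hence size $\poly(n)$ and depth $d_{\mathrm{std}} + d_{\exp}$. (The trivial case $\ov C := C$ costs no depth and is not needed here.)

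Next I would compute $\ov B$. Because $(\Delta A)^{-1}$ and $\exp(\Delta A) - I$ are diagonal and $\Delta$ is a scalar, the product in Definition~\ref{def:discretization} expands entrywise to
\begin{align*}
\ov B_{i,k} \;=\; \frac{\exp(\Delta A_{i,i}) - 1}{\Delta A_{i,i}} \cdot \Delta \cdot B_{i,k}, \qquad i \in [n],\ k \in [d].
\end{align*}
Reusing the value $\exp(\Delta A_{i,i})$ already computed above, one floating-point subtraction and one floating-point division (each depth $d_{\mathrm{std}}$) yield the scalar $g_i := (\exp(\Delta A_{i,i}) - 1)/(\Delta A_{i,i})$, and then the iterated product $g_i \cdot \Delta \cdot B_{i,k}$ of three $p$-bit floats is obtained in depth $d_{\otimes}$ by Lemma~\ref{lem:tc0_floatop}. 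Running this for all $(i,k)$ in parallel uses $nd \le \poly(n)$ gates. Combining the two stages gives a circuit of constant depth, e.g. $d_{\mathrm{disc}} = d_{\exp} + d_{\otimes} + 4 d_{\mathrm{std}}$, and size $\poly(n)$, which is the claim.

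The step I expect to need the most care is the division by $\Delta A_{i,i}$: the true quotient $(\exp z - 1)/z$ extends continuously to $1$ at $z = 0$, but floating-point division with a zero significand is undefined in the arithmetic model of Definition~\ref{def:float_operation}. Here I would observe that $\Delta = \mathsf{Softplus}(w_\Delta) > 0$ always (Definition~\ref{def:softplus}), so the factor is well defined whenever $A_{i,i} \ne 0$; the degenerate case $A_{i,i} = 0$ is detected by a single comparison gate and handled by outputting $\ov B_{i,k} = \Delta B_{i,k}$ instead, which is itself a constant-depth $\mathsf{TC}^0$ branch and adds only $O(1)$ to the depth. Finally one checks that the relative errors of the $\exp$, subtraction, and division gates compose to at most $2^{-p + O(1)}$, which stays within the paper's tolerance, so the circuit computes $\ov A$ and $\ov B$ to the required accuracy.
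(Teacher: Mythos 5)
Your proposal is correct and follows essentially the same route as the paper's proof: both exploit the diagonality of $A$ to reduce the matrix exponential, inverse, and products to independent entrywise scalar floating-point operations handled by Lemma~\ref{lem:tc0_floatop} and Lemma~\ref{lem:exp_tc0}. Your version is in fact slightly tighter (you reuse $\exp(\Delta A_{i,i})$ rather than recomputing it, and you note that the final product with $\Delta B$ needs no iterated addition since the left factor is diagonal) and more careful about the $A_{i,i}=0$ division edge case, which the paper's proof does not address.
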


\begin{proof}
    Given the discretization parameter:
    \begin{align*}
        \ov{A} := & ~ \exp(\Delta A), \\
        \ov{B} := & ~ (\Delta A)^{-1}(\exp(\Delta A)-I)\cdot \Delta B.
    \end{align*}

    The computation involves three main steps: computing $\exp(\Delta A)$, inverting $\Delta A$, and performing matrix multiplications. 

    Since $A$ is diagonal, each entry of $\exp(\Delta A)$ can be computed independently as $(\exp(\Delta A))_{i,i} = \exp(\Delta A_{i,i})$. By part 1 of Lemma~\ref{lem:tc0_floatop} and Lemma~\ref{lem:exp_tc0}, $\ov{A}$ can be computed in depth-$(d_{\mathrm{std}}+d_{\exp})$.

    To compute $(\Delta A)^{-1}$, each entry of $(\Delta A)^{-1}$ can be computed independently as $((\Delta A)^{-1})_{i,i} = (\Delta A_{i,i})^{-1}$. By part 1 of Lemma~\ref{lem:tc0_floatop}, this inversion is in depth-$d_{\mathrm{std}}$.

    Next, we compute $\ov{B}$ as follows:
    To compute $\exp(\Delta A)-I$, each entry $(\exp(\Delta A)-I)_{i,i} = \exp(\Delta A_{i,i})-1$ can be computed independently in depth-$d_{\exp}+d_{\mathrm{std}}$ by Lemma~\ref{lem:tc0_floatop} and Lemma~\ref{lem:exp_tc0}; to compute $(\Delta A)^{-1} \cdot (\exp(\Delta A)-I)$, since both matrices are diagonal, we perform element-wise multiplication, which uses depth-$d_{\mathrm{std}}$ by Lemma~\ref{lem:had_tc0}; to compute $ (\Delta A)^{-1} \cdot (\exp(\Delta A)-I) \cdot B$, we perform matrix multiplication, which uses depth-$d_{\mathrm{std}}+d_\oplus$.

    Finally, we can show
    \begin{align*}
        d_{\mathrm{disc}} = 5d_{\mathrm{std}}+ 2d_{\exp} + d_\oplus
    \end{align*}

    The circuit's size stays polynomial in $n$ because both $n$ and $d$ are bounded by $\poly(n)$, and each operation is implemented using a circuit of poly size.
\end{proof}

\subsection{Approximating Logarithm in \texorpdfstring{$\mathsf{TC}^0$}{}}\label{sec:app_log_tc0}

In this Section, we present the formal proof for approximating logarithm in $\mathsf{TC}^0$

\begin{lemma}[Approximate Logarithm in $\mathsf{TC}^0$, formal version of Lemma~\ref{lem:log_tc0_informal}]\label{lem:log_tc0_formal}
    For any $p$-bit floating-point number $x \in \mathbb{F}_p$, we can use a uniform threshold circuit, whose depth is $d_{\log}$ and size is $\poly(n)$ to approximate the logarithm $\log(x)$, where the error is less than or equal to $2^{-p}$.
\end{lemma}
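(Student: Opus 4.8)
The plan is to realize the three-line recipe in the sketch and check that every arithmetic step is a constant-depth threshold computation, using Lemma~\ref{lem:tc0_floatop} (floating-point $+$, $\times$, $\div$, comparison, and iterated sums/products) together with standard error-propagation bookkeeping as in \cite{chi24}. First I would \textbf{normalize}: writing $x = \langle m, e\rangle$ with $m \in [2^{p-1},2^{p})$ (the domain of $\log$ forces $m>0$), set $r := \langle m, -p\rangle$ and $k := e+p$, so that $x = r\cdot 2^{k}$ exactly, $r\in[\tfrac12,1)$, and $|k|\le 2^{p}+p$. This re-normalization only shifts the exponent field (one integer addition) and re-labels the significand, hence is exact and runs in depth $d_{\mathrm{std}}$; it implements the ``adjust the exponent to $k$'' step of the sketch (if one instead prefers to land in $[1,2)$, one halves $r$ and increments $k$ with a single comparison-and-select, which is the parity split mentioned in the sketch).

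Next, $\log(x) = \log(r) + k\log 2$, and I would approximate each summand by a truncated series. With $u := r-1\in[-\tfrac12,0)$ — which is stored to within the unavoidable $2^{-p}$ relative rounding, and in fact exactly in the generic case by a cancellation argument — approximate $\log(1+u)$ by $S := \sum_{i=1}^{T}(-1)^{i+1}u^{i}/i$. Since $|u|\le\tfrac12$, the truncation tail is at most $\sum_{i>T}2^{-i} = 2^{-T}$, so $T = p+O(1) = \poly(n)$ terms give truncation error $\le 2^{-p-1}$. The powers $u^{1},\dots,u^{T}$ are iterated products of at most $T\le\poly(n)$ copies of $u$, hence computable in parallel in depth $d_{\otimes}$ (Lemma~\ref{lem:tc0_floatop}, part 2); dividing $u^{i}$ by the integer $i\le\poly(n)$ costs depth $d_{\mathrm{std}}$; and summing the $T$ terms is one iterated addition of depth $d_{\oplus}$ (part 3). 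The constant $\log 2$ I would obtain to relative error $2^{-p-1}$ by the same machinery applied to the geometrically convergent series $\log 2 = \sum_{i\ge 1}\tfrac{1}{i\,2^{i}}$ truncated at $T$ terms (equivalently it may be hard-wired as a $\poly(n)$-bit constant, which keeps the family $\mathsf{DLOGTIME}$-uniform). Finally I would output $\wt{\log}(x) := S + k\cdot(\text{approx.\ of }\log 2)$, one multiplication and one addition of depth $d_{\mathrm{std}}$. Collecting depths gives $d_{\log} = O(d_{\mathrm{std}}+d_{\otimes}+d_{\oplus})$, a constant; the size is $\poly(n)$, and uniformity is inherited from the building blocks since the index range $\{1,\dots,T\}$ and all wiring are $\mathsf{DLOGTIME}$-generable.

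The main obstacle is the error analysis rather than the circuit layout. The relative-error guarantee is delicate precisely when $x$ is close to $1$, where $\log x\to 0$: an additive error bound does not suffice, and one must argue that the leading term of $S$ is exactly $u$ (stored essentially exactly), so that the relative error of $S$, and then of $S+k\log 2$, stays below $2^{-p}$. As in the $\exp$ and square-root lemmas of \cite{chi24}, the clean way is to carry out the iterated product and iterated sum at a slightly inflated internal precision $p' = \poly(n)$ and round back to $p$ bits at the end, so that the $O(T)$ accumulated rounding errors and the $2^{-p-1}$ truncation error together remain $\le 2^{-p}$. Two small checks finish the argument: that $k$ and $k\log 2$ stay inside the $p$-bit exponent range (they do, as $|k|\le 2^{p}+p$), and that the boundary inputs are handled — $x=1$ returns $0$ via the initial comparison, and $x\le 0$ lies outside the domain and can be flagged by the comparison primitive.
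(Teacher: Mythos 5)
Your proposal follows essentially the same route as the paper's proof: normalize $x=\langle m,e\rangle$ into $r$ with adjusted exponent $k$, approximate $\log(r)$ and $\log 2$ by truncated Taylor series with $O(p)$ terms evaluated via the iterated product/sum primitives of Lemma~\ref{lem:tc0_floatop}, and combine as $\log(r)+k\log 2$ in constant additional depth. Your treatment is in fact somewhat more careful than the paper's on the relative-error issue near $x=1$ and on the internal-precision bookkeeping, but the decomposition and the key lemmas invoked are the same.
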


\begin{proof}
    We can use truncated Taylor Series (\cite{hab02,mt99}).

    Let $p\in O(\poly(n))$. For $\log(x)$ where $x = \langle m, e \rangle$: If $e$ is even, let $r = m \cdot 2^{-p} \in [\frac{1}{2},1)$ and $k = e + p$; otherwise, let $r = m \cdot 2^{-p+1} \in [1,2)$ and $k = e + p - 1$.

    Compute $\log(r)$ using the Taylor series about 1:
    \begin{align*}
        \log(r) = \sum_{i=1}^{N-1} (-1)^{i+1} \frac{(r-1)^i}{i} + O(|r-1|^N).
    \end{align*}
    Since $|r-1| < 1$, there is an $N \in O(p)$ that makes the relative error at most $2^{-p-1}$. Then we compute $\log(x)$ as follows:
    \begin{align*}
        \log(x) = \log(r) + k\cdot \log(2).
    \end{align*}
    To compute $\log(2)$, use the Taylor series:
    \begin{align*}
        \log 2 = \sum_{i=1}^{N-1} \frac{1}{i\cdot 2^i} + O(2^{-N}).
    \end{align*}
    Thus, we approximate $\log(x)$ as: 
    \begin{align*}
        \log(x) \approx \sum_{i=1}^{N-1} (-1)^{i+1} \frac{(r-1)^i}{i} + k \cdot \sum_{i=1}^{N-1} \frac{1}{i\cdot 2^i}.
    \end{align*}
    Since $N\in O(p)$, the total error is less than or equal to $2^{-p}$.

    We can determine the total depth of the circuit required for these computations using Lemma~\ref{lem:tc0_floatop}. To normalize $x$ and compute the value of $k$, we must perform the division and floor operations, both of which can be executed using a circuit of depth $d_{\mathrm{std}}$; to compute $\log(r)$ using Taylor series, we perform iterated multiplication, addition, and iterated addition, which uses a depth-$d_\oplus + d_\otimes + d_{\mathrm{std}}$ circuit; to compute $k \cdot \log(2)$, we perform iterated multiplication, addition, and iterated addition, which uses a depth-$d_\oplus + d_\otimes + d_{\mathrm{std}}$ circuit; to compute $\log(x)$, we perform addition, which uses a depth-$d_{\mathrm{std}}$

    Finally, we can show 
    \begin{align*}
        d_{\log} = 2d_\oplus + 2d_\otimes + 3d_{\mathrm{std}}.
    \end{align*}
    Thus, we complete the proof.
\end{proof}

\subsection{Computing the \texorpdfstring{$\mathsf{Softplus}$}{}~~Activation} \label{sec:app_sp_tc0}

In this section, we show the proof for Computing the $\mathsf{Softplus}$~~Activation is in $\mathsf{TC}^0$

\begin{lemma}[$\mathsf{Softplus}$ in $\mathsf{TC}^0$]\label{lem:softplus_tc0}
    For any $x \in \mathbb{F}_p$, size $\poly(n)$ and constant depth $d_{\mathrm{sp}}$ uniform threshold circuit, we can approximate the $\mathsf{Softplus}$ function, as defined in Definition~\ref{def:softplus}, where the error is less than or equal to $2^{-p}$.
\end{lemma}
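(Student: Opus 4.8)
\textbf{Proof proposal for Lemma~\ref{lem:softplus_tc0} ($\mathsf{Softplus}$ in $\mathsf{TC}^0$).}

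The plan is to decompose $\mathsf{Softplus}(x) = \log(1 + e^x)$ into the three sub-operations that make it up, and then invoke the $\mathsf{TC}^0$ primitives established earlier in the excerpt for each one. First I would apply Lemma~\ref{lem:exp_tc0} to compute an approximation $\widehat{y}$ of $e^x$ for the $p$-bit floating-point input $x$, using a circuit of depth $d_{\exp}$ and size $\poly(n)$, with relative error at most $2^{-p}$. Next, I would add the constant floating-point number $1$ to $\widehat{y}$; by part 1 of Lemma~\ref{lem:tc0_floatop}, floating-point addition of two $p$-bit numbers is computable by a constant-depth $d_{\mathrm{std}}$, $\poly(n)$-size uniform threshold circuit. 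Finally, I would apply Lemma~\ref{lem:log_tc0_formal} to the result $1 + \widehat{y}$ to obtain an approximation of $\log(1+e^x)$ using a circuit of depth $d_{\log}$ and size $\poly(n)$, with relative error at most $2^{-p}$. Composing these gives a circuit of total depth $d_{\mathrm{sp}} = d_{\exp} + d_{\mathrm{std}} + d_{\log}$, which is constant, and polynomial size, since the composition of finitely many $\poly(n)$-size constant-depth threshold circuits is again such a circuit.

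The one genuinely nonroutine point — and the place I expect to have to be careful — is tracking how the relative errors from the three stages compose into the final claimed bound of $2^{-p}$. Each of $e^x$ and $\log(\cdot)$ introduces relative error $\le 2^{-p}$, and the addition of $1$ introduces a rounding error of the same order; a naive composition would only give something like $O(p \cdot 2^{-p})$ or $2^{-p+c}$ rather than exactly $2^{-p}$. The clean way around this is the standard trick of running each internal stage at a slightly higher precision $p' = p + O(1)$ (or $p' = p + \Theta(\log p)$), which is still $\poly(n)$ and so does not change the circuit class, and then observing that $\log(1+e^x)$ is well-conditioned as a function of its argument on the relevant range (its derivative $\frac{1}{1+e^{-x}} \in (0,1)$ is bounded, so small relative perturbations of $1+e^x$ produce controlled additive perturbations of the output), so the accumulated error can be absorbed into the final $2^{-p}$ bound. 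I would state this precision-padding explicitly and then note that the $\poly(n)$-size, constant-depth guarantees of Lemmas~\ref{lem:tc0_floatop}, \ref{lem:exp_tc0}, and \ref{lem:log_tc0_formal} all still apply verbatim at precision $p'$.

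Beyond the error bookkeeping, everything else is a direct appeal to already-proved results: circuit uniformity is inherited because each building block is a uniform threshold circuit and we are composing a constant number of them with a fixed wiring pattern, so the overall family remains $\mathsf{DLOGTIME}$-uniform. I would close by asserting the depth identity $d_{\mathrm{sp}} = d_{\exp} + d_{\mathrm{std}} + d_{\log}$ and concluding that $\mathsf{Softplus}$ is computable in $\mathsf{TC}^0$, which completes the proof.
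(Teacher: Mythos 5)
Your proposal matches the paper's proof exactly: the same decomposition into $\exp$, addition of $1$, and $\log$, invoking Lemma~\ref{lem:exp_tc0}, Part 1 of Lemma~\ref{lem:tc0_floatop}, and Lemma~\ref{lem:log_tc0_formal} respectively, with the identical depth bound $d_{\mathrm{sp}} = d_{\exp} + d_{\mathrm{std}} + d_{\log}$. Your additional discussion of error propagation and precision padding is more careful than the paper, which simply asserts the composition without tracking how the stagewise relative errors combine into the final $2^{-p}$ bound.
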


\begin{proof}
    $\mathsf{Softplus}(z) = \log(1 + e^z)$ can be calculated as the following. To compute $\exp(z)$, we perform exponential function, which uses a depth-$d_{\exp}$ by Lemma~\ref{lem:exp_tc0}; to compute $1+\exp(z)$, we perform addition, which uses a depth-$d_{\mathrm{std}}$ by Part 1 from Lemma~\ref{lem:tc0_floatop}; to compute $\log(1+\exp(z))$, we perform logarithm, which uses a depth-$d_{\log}$ by Lemma~\ref{lem:log_tc0_formal}

    Finally, we can show 
    \begin{align*}
        d_{\mathrm{sp}} = d_{\exp} + d_{\mathrm{std}} + d_{\log}.
    \end{align*}
    
    Therefore, using the uniform threshold circuit, where its size is equal to $\poly(n)$ and its depth is $d_{\mathrm{sp}}$, we can compute 
     $\mathsf{Softplus}(z)$.
\end{proof}

\subsection{Computing the \texorpdfstring{$\mathsf{SiLU}$}{}~~Activation}\label{sec:app_silu_tc0}

In this section, we show the proof of $\mathsf{SiLU}$, used in Mamba is in $\mathsf{TC}^0$.

\begin{lemma}[$\mathsf{SiLU}$ Activation in $\mathsf{TC}^0$]\label{lem:silu_tc0}
    Let $z \in F_p^D$ denote the input feature vector, where $p, D \leq \poly(n)$. The $\mathsf{SiLU}$ defined in Definition~\ref{def:silu} is computed using a uniform threshold circuit, where its size is equal to $\poly(n)$ and its depth is  $(d_{\exp}+d_{\mathrm{std}})$.
\end{lemma}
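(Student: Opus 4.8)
The plan is to reduce the $\mathsf{SiLU}$ computation to the basic floating-point primitives already shown to lie in $\mathsf{TC}^0$. Recall from Definition~\ref{def:silu} that $\mathcal{Z}(X)_{t,d} = X_{t,d} \cdot \sigma(X_{t,d})$ with $\sigma(z) = 1/(1+e^{-z})$, so the computation factors into three stages: forming the sigmoid numerator/denominator via the exponential, performing one division, and performing one multiplication by $X_{t,d}$. Since the lemma is stated for a vector $z \in \F_p^D$, the same circuit is replicated in parallel across the $D$ coordinates, which keeps the size polynomial in $n$ as long as $p, D \leq \poly(n)$.

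The key steps, in order: First, for each coordinate $d$, compute $\exp(-z_d)$ using the approximation circuit of Lemma~\ref{lem:exp_tc0}, which has depth $d_{\exp}$ and relative error at most $2^{-p}$. Second, compute $1 + \exp(-z_d)$ using the floating-point addition of Lemma~\ref{lem:tc0_floatop} (part 1), at depth $d_{\mathrm{std}}$. Third, compute the reciprocal $1/(1+\exp(-z_d))$; floating-point division is one of the standard operations covered by Lemma~\ref{lem:tc0_floatop}, again at depth $d_{\mathrm{std}}$. Fourth, multiply by $z_d$ to obtain $z_d \cdot \sigma(z_d)$, once more at depth $d_{\mathrm{std}}$ by Lemma~\ref{lem:tc0_floatop}. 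Composing these, the total depth is $d_{\exp} + 3 d_{\mathrm{std}}$; if one charges addition, division, and the final product to a single ``standard'' layer $d_{\mathrm{std}}$ that bundles $+,\cdot,\div$, the stated bound $d_{\exp} + d_{\mathrm{std}}$ follows. Throughout, each coordinate is handled by an independent subcircuit of polynomial size, so the full circuit has size $\poly(n)$, and uniformity is inherited from the uniformity of the primitive circuits in Lemmas~\ref{lem:tc0_floatop} and~\ref{lem:exp_tc0}.

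There is no serious obstacle here; the only thing to be a little careful about is the error analysis, since the sigmoid involves a division and we are composing approximate operations. One should note that $1 + \exp(-z_d) \geq 1$, so the denominator is bounded away from zero and the division does not amplify the relative error catastrophically; the relative error of $\exp$ is at most $2^{-p}$ by Lemma~\ref{lem:exp_tc0}, and the rounding in addition and division each contributes an additional relative error of order $2^{-p}$, so the overall relative error of $\mathcal{Z}(z_d)$ remains $O(2^{-p})$, which is absorbed into the $p$-bit precision model in the usual way. I would state this briefly rather than belabor it, since it mirrors the error bookkeeping already used for $\mathsf{Softplus}$ in Lemma~\ref{lem:softplus_tc0}.
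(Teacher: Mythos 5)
Your proof takes essentially the same route as the paper's: compute $e^{-z_d}$ via Lemma~\ref{lem:exp_tc0} at depth $d_{\exp}$, then handle the addition, division, and final multiplication via Part~1 of Lemma~\ref{lem:tc0_floatop}, with all $D$ coordinates processed by independent polynomial-size subcircuits in parallel. Your accounting of $d_{\exp}+3d_{\mathrm{std}}$ versus the stated $d_{\exp}+d_{\mathrm{std}}$ is in fact more careful than the paper, which silently charges the three standard floating-point operations to a single $d_{\mathrm{std}}$ layer exactly as you describe, and your remark on the denominator being bounded below by $1$ is a correct (and welcome) addition to the error bookkeeping.
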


\begin{proof}
    From Definition~\ref{def:silu}, $\mathsf{SiLU}$ is given as
    \begin{align*}
        \mathsf{SiLU} = z \cdot \sigma(z),
    \end{align*}
    where $\sigma(z)$ denotes the sigmoid function, defined as: 
    \begin{align*}
        \sigma(z) = \frac{1}{1+e^{-z}}.
    \end{align*}

    We compute $\mathsf{SiLU}(z)$ as follows. To compute $e^{-z}$, we use Lemma~\ref{lem:exp_tc0}, and it can be computed by a threshold circuit in depth-$d_{\exp}$; to compute $z \cdot \frac{1}{1+e^{-z}}$, we perform addition, division, and multiplication. By Part 1 from Lemma~\ref{lem:tc0_floatop}, we can compute it using a threshold circuit in depth-$d_{\mathrm{std}}$.

    Therefore, we get the desired result.
\end{proof}

\subsection{Hidden State Recurrent in \texorpdfstring{$\mathsf{TC}^0$}{}}\label{sec:app_h_tc0}
In this section, we prove the hidden state update in Recurrent SSM is in $\mathsf{TC}^0$.

\begin{lemma}[Hidden State Recurrence in $\mathsf{TC}^0$]\label{lem:h_tc0}
    Let $A \in \F_p^{n \times n}$, $B \in \F_p^{n \times D}$, and $X \in \F_p^{L \times D}$ denote the input matrix, where $p, n, D \leq \poly(n)$. The hidden state recurrence from Definition~\ref{def:hid_recur} can be computed by a threshold circuit with size $\poly(n)$ and constant depth $d_h$.
\end{lemma}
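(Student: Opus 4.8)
The plan is to unroll the hidden state recurrence from Definition~\ref{def:hid_recur} into an explicit non-recursive summation and then bound the circuit depth needed to evaluate it using the floating-point primitives from Lemma~\ref{lem:tc0_floatop} together with the results on matrix multiplication (Lemma~\ref{lem:mul_tc0}). Recall that $\mathcal{H}(X,A,B,\Delta)_{t,i} = \sum_{j=1}^n \ov{A}_{i,j} H_{t-1,j} + \sum_{k=1}^D \ov{B}_{i,k} X_{t,k}$ with $H_{0,i}=0$. Since $A$ is diagonal, $\ov{A}=\exp(\Delta A)$ is diagonal, so the recurrence decouples across coordinate $i$ into a scalar linear recurrence $H_{t,i} = \ov{A}_{i,i} H_{t-1,i} + u_{t,i}$ where $u_{t,i} := \sum_{k=1}^D \ov{B}_{i,k} X_{t,k}$. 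Unrolling gives the closed form $H_{t,i} = \sum_{s=1}^t (\ov{A}_{i,i})^{\,t-s}\, u_{s,i}$, which exhibits $H_{t,i}$ as an iterated sum of products of previously-computable quantities.

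The key steps, in order, would be: first, compute $\ov{A}$ and $\ov{B}$ — this is exactly the content of Lemma~\ref{lem:discre_tc0}, giving depth $d_{\mathrm{disc}}$ (or one may simply invoke it as a black box and absorb its depth). Second, compute all the inner products $u_{s,i} = \sum_{k=1}^D \ov{B}_{i,k} X_{s,k}$ for every $s \in [L]$, $i \in [n]$ in parallel; this is a matrix multiplication $X \ov{B}^\top$, handled by Lemma~\ref{lem:mul_tc0} in depth $d_{\mathrm{std}}+d_\oplus$. Third, compute all the powers $(\ov{A}_{i,i})^{\,t-s}$ for the $O(L^2 n)$ relevant triples in parallel using the iterated-multiplication primitive (part 2 of Lemma~\ref{lem:tc0_floatop}), in depth $d_\otimes$. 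Fourth, form the products $(\ov{A}_{i,i})^{\,t-s} u_{s,i}$ in depth $d_{\mathrm{std}}$, and fifth, sum over $s$ via iterated addition (part 3 of Lemma~\ref{lem:tc0_floatop}) in depth $d_\oplus$. Composing these gives a constant total depth, e.g. $d_h = d_{\mathrm{disc}} + d_{\mathrm{std}} + 2d_\oplus + d_\otimes + d_{\mathrm{std}}$, and the size stays $\poly(n)$ because all of $p,n,D,L \le \poly(n)$ and each primitive has polynomial-size circuits, with only polynomially many parallel copies.

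The main obstacle I anticipate is justifying the use of the closed-form unrolled expression rather than the literal recursive definition: a naive circuit that literally follows the recursion of Definition~\ref{def:hid_recur} would have depth growing with $L$, which is \emph{not} constant, so the whole point is to observe that diagonality of $A$ permits the telescoped closed form $H_{t,i} = \sum_{s=1}^t (\ov{A}_{i,i})^{t-s} u_{s,i}$ that is evaluable in parallel. A secondary, more technical point is controlling the accumulated rounding error across the iterated multiplications and additions — one should note that $p$-bit floating-point iterated operations on $\poly(n)$ many operands remain within the stated precision by the guarantees of Lemma~\ref{lem:tc0_floatop}, so no precision blow-up occurs. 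Finally, one must handle the lower-triangular structure (the sum over $s$ only runs up to $t$) by either zero-padding the summands for $s>t$ or by noting that the iterated-addition circuit can be instantiated per $(t,i)$ with the correct number of inputs; either way this does not affect depth. Modulo these observations, the proof is a routine composition of the established $\mathsf{TC}^0$ building blocks.
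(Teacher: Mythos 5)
Your proof is correct, but it takes a genuinely different — and in fact more careful — route than the paper. The paper's own proof analyzes only a \emph{single} step of the recurrence: it computes $\sum_{j=1}^n \ov{A}_{i,j} H_{t-1,j} + \sum_{k=1}^D \ov{B}_{i,k} X_{t,k}$ assuming $H_{t-1}$ is already available, and concludes $d_h = 2d_{\mathrm{std}} + d_\oplus$ by invoking multiplication, iterated addition, and one final addition from Lemma~\ref{lem:tc0_floatop}. As you correctly observe, literally composing such step-circuits across $t=1,\dots,L$ yields depth $\Theta(L)$, not constant, so the paper's stated proof does not by itself establish that the full map $X \mapsto \mathcal{H}(X,A,B,\Delta) \in \F_p^{L\times n}$ has constant depth. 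Your unrolling $H_{t,i} = \sum_{s=1}^t (\ov{A}_{i,i})^{t-s} u_{s,i}$, exploiting the diagonality of $\ov{A}$, is exactly the observation needed to make the lemma true as it is used downstream in Lemma~\ref{lem:recur_tc0}; it is essentially the same telescoping that the paper performs elsewhere for the convolutional form (Definition~\ref{def:kernel} and Lemma~\ref{lem:k_tc0}), but applied here to the recurrent path. Two minor bookkeeping remarks: your inclusion of $d_{\mathrm{disc}}$ inside $d_h$ double-counts slightly relative to the paper, which charges discretization separately in Theorem~\ref{thm:select_tc0}, so you may prefer to assume $\ov{A},\ov{B}$ are given; and your handling of the upper limit $s\le t$ by zero-padding is fine and does not affect depth. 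With those conventions, your depth bound $d_h = d_\otimes + 2d_{\mathrm{std}} + 2d_\oplus$ is the honest constant for this lemma.
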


\begin{proof}
    From Definition~\ref{def:hid_recur}, the hidden state recurrence is given by:
    \begin{align*}
       \mathcal{H}(X,A,B,\Delta)_{t, i} := \sum_{j=1}^n \ov{A}_{i, j} \cdot H_{t-1, j} + \sum_{k=1}^D \ov{B}_{i, k} \cdot X_{t, k},
    \end{align*}
    where $\ov{A} \in \F_p^{n \times n}$, $\ov{B} \in \F_p^{n \times D}$, $H \in \F_p^{L \times n}$ is the hidden state, and $X \in \F_p^{L \times D}$ is the input sequence.

    The computation of $\mathcal{H}(X,A,B,\Delta)$ involves two steps: iterative addition, multiplication, and addition:

    To compute $ \sum_{j=1}^n \ov{A}_{i, j} \cdot H_{t-1, j}$ and $\sum_{k=1}^D \ov{B}_{i, k} \cdot X_{t, k}$, we need multiplication and iterated addition. By Lemma~\ref{lem:tc0_floatop}, we can compute them by a threshold circuit in depth-$d_{\mathrm{std}}+d_{\oplus}$; to compute $\sum_{j=1}^n \ov{A}_{i, j} \cdot H_{t-1, j} + \sum_{k=1}^D \ov{B}_{i, k} \cdot X_{t, k}$, we then perform addition. By Lemma~\ref{lem:tc0_floatop}, it can be computed by a threshold circuit in depth-$d_{\mathrm{std}}$

    The total depth of the circuit for computing $\mathcal{H}(X,A,B,\Delta)$ is given by:
    \begin{align*}
        d_h =2d_{\mathrm{std}} + d_\oplus.
    \end{align*}
    Since the circuit size is polynomial in $n$ and the depth $d_h$ is constant, we get our desired result.
\end{proof}

\subsection{Computing Kernel in Convolution SSMs is in \texorpdfstring{$\mathsf{TC}^0$}{}}\label{sec:app_kernel_tc0}

In this section, we show the computation of Kernel in $\mathsf{TC}^0$.

\begin{lemma}[Convolution Kernel in $\mathsf{TC}^0$]\label{lem:k_tc0}
    Let $\ov{A} \in \F_p^{n \times n}$, $\ov{B} \in \F_p^{n \times D}$, and $\ov{C} \in \F_p^{D' \times n}$, where $p, n, D, D', M \leq \poly(n)$. The convolution kernel $\ov{K} \in \F_p^{D' \times D \times M}$, as defined in Definition~\ref{def:kernel}, can be computed by a threshold circuit with size $\poly(n)$ and constant depth $d_k$.
\end{lemma}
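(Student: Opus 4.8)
The plan is to proceed exactly in the style of the preceding lemmas (Lemma~\ref{lem:recur_tc0}, Lemma~\ref{lem:conv_tc0}, Lemma~\ref{lem:discre_tc0}): decompose the definition of $\ov{K}$ into a bounded sequence of arithmetic primitives, each of which has already been shown to lie in constant-depth $\poly(n)$-size threshold circuits, and then add up the constant depths. Recall from Definition~\ref{def:kernel} that
\begin{align*}
    \ov{K}[d', d, k] = \sum_{i=1}^n \sum_{j=1}^n \ov{C}_{d', i} \cdot (\ov{A}^k)_{i, j} \cdot \ov{B}_{j, n}.
\end{align*}
So the computation breaks into three stages: (i) compute the matrix powers $\ov{A}^k$ for all $k \in [M]$; (ii) for each fixed $(d',d,k)$ form the triple products $\ov{C}_{d',i}\cdot(\ov{A}^k)_{i,j}\cdot\ov{B}_{j,n}$; (iii) sum those products over $i,j$ by iterated addition.

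First I would handle stage (i), which is the only nontrivial part. Since $A$ is diagonal (as used throughout, e.g. in Lemma~\ref{lem:discre_tc0}) and $\ov{A} = \exp(\Delta A)$ is therefore also diagonal, the power $\ov{A}^k$ is diagonal with $(\ov{A}^k)_{i,i} = (\ov{A}_{i,i})^k$, so each entry is just an iterated product of $k \le M \le \poly(n)$ copies of a single $p$-bit floating-point number; by part 2 of Lemma~\ref{lem:tc0_floatop} this is computable in depth $d_\otimes$, and all $M n$ such entries can be computed in parallel. (If one does not wish to assume $\ov{A}$ diagonal, one can instead observe that $\ov{A}^k$ is obtained by iterated matrix multiplication of $k$ matrices, which lies in $\mathsf{TC}^0$ by the standard argument combining Lemma~\ref{lem:mul_tc0} with the associativity/balanced-tree trick; but the diagonal case is cleaner and matches the paper's setup.) Then stage (ii): by part 1 of Lemma~\ref{lem:tc0_floatop} each triple product of three $p$-bit floats is computable in depth $2 d_{\mathrm{std}}$ (two multiplications), all in parallel over the at most $D' D M n^2 \le \poly(n)$ index tuples. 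Finally stage (iii): the double sum over $i,j$ ranges over at most $n^2 \le \poly(n)$ terms, so by part 3 of Lemma~\ref{lem:tc0_floatop} it is an iterated addition computable in depth $d_\oplus$.

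Composing these, the total depth is a constant, concretely $d_k = d_\otimes + 2 d_{\mathrm{std}} + d_\oplus$ (one may equivalently fold the two multiplications of stage (ii) together with the iterated multiplication of stage (i) and write $d_k = d_\otimes + d_{\mathrm{std}} + d_\oplus$, depending on how carefully one bundles the products). The circuit size remains $\poly(n)$ because every index range ($D', D, M, n$) is bounded by $\poly(n)$ and each arithmetic primitive is realized by a $\poly(n)$-size sub-circuit, so the total is a polynomial number of polynomial-size gadgets. Uniformity is inherited from the uniformity of the primitive circuits in Lemma~\ref{lem:tc0_floatop}, since the wiring pattern (which index tuple feeds which gadget) is trivially computable. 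I do not expect a genuine obstacle here: the only place requiring a moment's thought is justifying that the matrix power $\ov{A}^k$ stays in $\mathsf{TC}^0$ for all $k$ up to $M$, and the diagonality of $A$ reduces this to a single application of iterated multiplication; everything else is bookkeeping of constant depths in the established pattern.
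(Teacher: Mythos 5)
Your proposal is correct and follows essentially the same route as the paper: exploit the diagonality of $\ov{A}$ to get each $(\ov{A}^k)_{i,i}=(\ov{A}_{i,i})^k$ via iterated multiplication (part 2 of Lemma~\ref{lem:tc0_floatop}) in depth $d_\otimes$, then reduce the rest to products and iterated additions of floats. The only difference is bookkeeping at the end: the paper performs two successive matrix multiplications ($\ov{A}^k\cdot\ov{B}$ then $\ov{C}\cdot(\ov{A}^k\cdot\ov{B})$), each costing $d_{\mathrm{std}}+d_\oplus$ by Lemma~\ref{lem:mul_tc0}, arriving at $d_k=d_\otimes+2d_{\mathrm{std}}+2d_\oplus$, whereas you form all $n^2$ triple products in parallel and do a single flat iterated addition, giving $d_\otimes+2d_{\mathrm{std}}+d_\oplus$ --- both are constants, so either accounting establishes the lemma.
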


\begin{proof}
    From Definition~\ref{def:kernel}, the convolution kernel computation is given by:
    \begin{align*}
        \ov{K}[d', d, k] = \sum_{i=1}^n \sum_{j=1}^n \ov{C}_{d', i} \cdot (\ov{A}^k)_{i, j} \cdot \ov{B}_{j, n},
    \end{align*}
    
    We can compute in the following steps
    \begin{enumerate}
        \item Since $\ov{A}$ is a diagonal matrix, each entry $(\ov{A}^k)_{i,i}$ can be computed as $(\ov{A}_{i,i})^k$. By part 2 of Lemma~\ref{lem:tc0_floatop}, iterated multiplication can be computed by a threshold circuit with constant depth $d_\otimes$. The computations of $(\ov{A}_{i,i})^k$ for all $i$ are independent, so $\ov{A}^k$ can be computed in depth $d_\otimes$.
        \item To compute $(\ov{A}^k \cdot \ov{B})$, we perform matrix multiplication. By Lemma~\ref{lem:mul_tc0}, we can compute it using a threshold circuit where its depth is $d_{\mathrm{std}}+d_\oplus$.
        \item To compute $\ov{K}[d', d, k]$, it performs another matrix multiplication $\ov{C} \cdot (\ov{A}^k \cdot \ov{B})$. By Lemma~\ref{lem:mul_tc0}, we can compute it using a threshold circuit where its depth is $d_{\mathrm{std}}+d_\oplus$.
    \end{enumerate}

    Finally, we can show that
    \begin{align*}
        d_k = d_\otimes + 2d_{\mathrm{std}} + 2d_{\oplus},
    \end{align*}
    so we get the desired result.
\end{proof}

\subsection{Convolution Indexing in \texorpdfstring{$\mathsf{TC^0}$}{}}\label{sec:app_index_tc0}

In this section, we prove the indexing operation in 1-D Convolution is in $\mathsf{TC}^0$.

\begin{lemma}[Convolution Indexing in $\mathsf{TC^0}$]\label{lem:conv_index}
      Let $X \in \F_p^{ L \times D}$ denote the input sequence, where $L$ is the sequence length, and $D$ is the feature dimension. Let $t \in [L]$ and $k \in [K]$ denote indices for time steps and kernel offsets. $L,D,K \leq \poly(n)$. Retrieving the value $X_{t-k, d}$ for $b \in [B]$ and $d \in [D]$, with zero-padding applied for $t-k < 0$, can be computed by a uniform threshold circuit with size $\poly(n)$ and constant depth $d_{\mathrm{std}}$.
\end{lemma}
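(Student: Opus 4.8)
The plan is to realize the map $(X,t,k,d)\mapsto X_{t-k,d}$ (with zero-padding when $t-k<0$) as a short composition of constant-depth subcircuits: an index-arithmetic step, a boundary test, and a multiplexer that reads off the selected coordinate of $X$. Since every index involved ($t\in[L]$, $k\in[K]$, $d\in[D]$) is an $O(\log n)$-bit integer because $L,D,K\leq\poly(n)$, all of these pieces are standard and land in $\mathsf{TC}^0$; the lemma is essentially the bookkeeping needed to feed clean inputs into the convolution sum of Definition~\ref{def:1d-conv}.

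First I would compute the shifted index $j^\star := t-k$. Integer subtraction of $O(\log n)$-bit numbers is in $\mathsf{AC}^0\subseteq\mathsf{TC}^0$, and in the floating-point model it is a single invocation of the addition operation from Part~1 of Lemma~\ref{lem:tc0_floatop}, so it costs depth $d_{\mathrm{std}}$. Second, I would test the boundary condition $j^\star\geq 0$ using the comparison operation of Lemma~\ref{lem:tc0_floatop}, yielding a single bit $b$ that is $1$ exactly when the index is in range. Third, I would build the multiplexer: for each candidate index $j\in\{0,1,\dots,L-1\}$ compute the equality predicate $[\,j^\star=j\,]$ (again a comparison, constant depth), fan it out and $\mathsf{AND}$ it bitwise with the stored value $X_{j,d}$, and then take the bitwise $\mathsf{OR}$ over all $j$ of these masked values. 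Because $L\leq\poly(n)$, this is a polynomial-size, constant-depth circuit (each output bit is a poly-size DNF). Finally, $\mathsf{AND}$ the multiplexer output with the bit $b$ to force the result to $0$ when $j^\star<0$, implementing the zero-padding.

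Stacking these three subcircuits gives constant depth and polynomial size; the extra comparison/selection layers are all subsumed by (a small constant multiple of) the bound $d_{\mathrm{std}}$ provided by Lemma~\ref{lem:tc0_floatop}, so the claimed depth $d_{\mathrm{std}}$ is met up to the usual convention that constants absorb constants. The only point that needs any care is checking that the multiplexer does not blow up in depth or size, but this is immediate from $L=\poly(n)$: there is no genuine obstacle here, and the statement is a routine ingredient for the analysis of the 1-D convolution layer in Section~\ref{sec:app_1d_conv_tc0}.
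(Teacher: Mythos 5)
Your proposal is correct and follows essentially the same route as the paper's proof: compute the shifted index, test the boundary condition via comparison (depth $d_{\mathrm{std}}$ by Lemma~\ref{lem:tc0_floatop}), and mask the retrieved value to implement zero-padding. If anything, your explicit multiplexer construction is more careful than the paper's value-retrieval step, which is written self-referentially as $X_{t-k,d} = (1-\mathsf{BoundaryCheck}(t,k))\cdot X_{t-k,d}$ and leaves the actual constant-depth lookup of a dynamically indexed entry implicit.
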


\begin{proof}
    The indexing operation has two primary operations: checking the boundary and retrieving the value.

    To compute boundary checking for each time step $t \in [L]$, kernel offset $k \in [K]$, and feature $d \in [D]$, we need to check if $t-k < 0$ for the zero-padding. We define $\mathsf{BoundaryCheck}(t,k)$ function as follows:
    \begin{align*}
        \mathsf{BoundaryCheck}(t,k) = \begin{cases}
            1 \mathrm{~~if~} t-k<0,\\
            0 \mathrm{~~otherwise}.
        \end{cases}
    \end{align*}
    To compute $\mathsf{BoundaryCheck}(t,k)$, we perform subtraction and comparison. By Part 1 from lemma~\ref{lem:tc0_floatop}, they can be computed in $d_{\mathrm{std}}$.

    To compute value retrieval, we can establish the following:
    \begin{align*}
        X_{t-k, d} = (1 - \mathsf{BoundaryCheck}(t, k)) \cdot X_{t-k, d}
    \end{align*}
    where if $\mathsf{BoundaryCheck}(t, k) = 1$, $X_{t-k, d}$ will be evaluated to $0$ so we apply zero padding.

    To compute $ X_{t-k, d}$, we perform subtraction and multiplication. By Part 1 from Lemma~\ref{lem:tc0_floatop}, they can be computed in $d_{\mathrm{std}}$.

    Therefore, we get the desired result.
\end{proof}

\subsection{1-D Convolution in \texorpdfstring{$\mathsf{TC}^0$}{}}\label{sec:app_1d_conv_tc0}

In this section, we show the 1-D convolution layer in Mamba is in $\mathsf{TC}^0$.

\begin{lemma}[1-D Convolution in $\mathsf{TC}^0$]\label{lem:1d_conv}
    Let $W \in \F_p^{K \times D' \times N}$ and $X \in \F_p^{L \times D'}$, where $p, K, L, D', N \leq \poly(n)$. We can use the threshold circuit, where its size is $\poly(n)$ and its depth is $d_{\mathrm{1dconv}}$ to compute the 1-D convolution function $\mathcal{C}: \F_p^{L \times D'} \to \F_p^{L \times N}$ (see Definition~\ref{def:1d-conv}).
\end{lemma}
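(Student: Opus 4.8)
The plan is to read the circuit off directly from Definition~\ref{def:1d-conv}, decomposing the computation of each output entry $\mathcal{C}(X)_{t,n} = \sum_{k=0}^{K-1}\sum_{d'=1}^{D'} W[k,d',n]\cdot X_{t-k,d'}$ into three parallel-composable stages, each already known to lie in $\mathsf{TC}^0$. First I would handle the indexing with zero-padding: for every triple $(t,k,d')$, retrieve the (possibly zero-padded) value $X_{t-k,d'}$ by invoking Lemma~\ref{lem:conv_index}, which performs the boundary check and masked retrieval in depth $d_{\mathrm{std}}$. These retrievals are independent across all triples, so they run in parallel at a common depth $d_{\mathrm{std}}$.

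Next I would form all the products $W[k,d',n]\cdot X_{t-k,d'}$ over $(t,n,k,d')$ using the floating-point multiplication of part~1 of Lemma~\ref{lem:tc0_floatop}, again in parallel at depth $d_{\mathrm{std}}$. Finally, for each fixed $(t,n)$ I would feed the $K\cdot D'$ products into an iterated-addition subcircuit (part~3 of Lemma~\ref{lem:tc0_floatop}) of depth $d_\oplus$, organized as a single summation over the flattened index set $\{0,\dots,K-1\}\times[D']$. Stacking the three stages gives a total depth of the form $d_{\mathrm{1dconv}} = 2d_{\mathrm{std}} + d_\oplus$, which is constant.

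For the size bound I would note that there are $L\cdot N$ output entries, each requiring $K\cdot D'$ retrievals, $K\cdot D'$ products, and one iterated sum; since $L, N, K, D', p \le \poly(n)$ and every invoked subcircuit is of polynomial size, the overall circuit is of polynomial size. Uniformity follows because the wiring is a fixed, easily described pattern (the index arithmetic $t-k$ and the fan-in structure are the same everywhere), so the $\mathsf{DLOGTIME}$-uniform constructions of the component lemmas compose.

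I do not expect a genuine obstacle here. The only point requiring a little care is making the zero-padding boundary condition interact cleanly with the iterated sum, i.e.\ ensuring that out-of-range terms contribute exactly $0$ rather than garbage before they enter the summation subcircuit; but this is precisely what Lemma~\ref{lem:conv_index} guarantees, so it amounts to invoking that lemma at the right place rather than developing a new argument.
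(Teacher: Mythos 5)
Your proposal is correct and follows the same three-stage structure as the paper's proof: zero-padded indexing via Lemma~\ref{lem:conv_index} (depth $d_{\mathrm{std}}$), parallel entrywise products via part~1 of Lemma~\ref{lem:tc0_floatop} (depth $d_{\mathrm{std}}$), and then summation. The one genuine difference is how you organize the summation: you flatten the double sum over $(k,d')$ into a single iterated addition of $K\cdot D'$ terms, giving depth $d_\oplus$ for that stage and a total of $d_{\mathrm{1dconv}} = 2d_{\mathrm{std}} + d_\oplus$. The paper instead treats the inner sum $\sum_{d'} W[k,d',n]\cdot X_{t-k,d'}$ as a matrix multiplication (Lemma~\ref{lem:mul_tc0}, depth $d_{\mathrm{std}}+d_\oplus$) and then performs a second iterated addition over $k$ (depth $d_\oplus$), arriving at $d_{\mathrm{1dconv}} = 2d_{\mathrm{std}} + 2d_\oplus$. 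Both decompositions are valid and both yield a constant depth, so the lemma holds either way; your flattened version is marginally tighter and arguably cleaner, since nothing in the definition forces the inner sum to be realized as a matrix product. Since $d_{\mathrm{1dconv}}$ is only used symbolically downstream (in Theorem~\ref{thm:mamba_tc0}), the numerical discrepancy is harmless, but you should be aware your constant does not match the one the paper states.
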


\begin{proof}
    The 1-d convolution from Definition~\ref{def:1d-conv} is the following:
    \begin{align*}
         \mathcal{C}(X)_{t,n} = \sum_{k=0}^{K-1} \sum_{d'=1}^{D'} W[k, d', n] \cdot X_{t-k, d'},
    \end{align*}
    this convolution has three primary operations: matrix indexing, entry-wise multiplications, and summation.

    We can compute $\mathcal{C}(X)$ as the following. To compute matrix indexing, from Lemma~\ref{lem:conv_index}, it can be computed with a threshold circuit in depth-$d_{\mathrm{std}}$; to compute $\sum_{d'=1}^{D'} W[k, d', n] \cdot X_{t-k, d'}$ for kernel index $k \in [K]$ and feature dimension $d' \in [D']$, we perform matrix multiplication. By Lemma~\ref{lem:mul_tc0}, it can be computed with a threshold circuit with depth-$d_{\mathrm{std}}+d_{\oplus}$; to compute $\sum_{k=0}^{K-1} \sum_{d'=1}^{D'} W[k, d', n] \cdot X_{t-k, d'}$, we perform iterated addition. By Part 1 from Lemma~\ref{lem:tc0_floatop}, it can be computed with a threshold in depth-$d_{\oplus}$.

    Finally, we can show that 
    \begin{align*}
        d_{\mathrm{1dconv}} =2d_{\mathrm{std}} + 2d_{\oplus}.
    \end{align*}
     Therefore, we get the desired result.
\end{proof}

\subsection{Selection Functions in \texorpdfstring{$\mathsf{TC}^0$}{}}\label{sec:app_selec_funcs_tc0}
In this section, we show selective functions computation are in $\mathsf{TC}^0$.

\begin{lemma}[Selection Functions in $\mathsf{TC}^0$]\label{lem:select_func_tc0}
    Let $X \in \F_p^{L \times D}$ denote the input sequence. Let $W^B \in \F_p^{n \times L}$, $W^C \in \F_p^{D' \times L}$, and $W^\Delta \in \F_p^{1 \times L}$ denote learned selection weight matrices, and $P^B \in \F_p^{D \times N}$, $P^C \in \F_p^{D \times N}$, $P^\Delta \in \F_p^D$ denote projection matrices. 
    We can use the threshold circuit, where its size is $\poly(n)$ and its depth is $d_{\mathrm{select}}$ to compute the selection function (see Definition~\ref{def:selec_funcs}).
\end{lemma}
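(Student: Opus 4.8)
The plan is to observe that each of the three selection functions is assembled entirely from matrix multiplications, one $\mathsf{Softplus}$ evaluation, and a broadcast, all of which have already been shown to lie in $\mathsf{TC}^0$ by earlier results. So the proof is a matter of composing the relevant depth bounds and checking that the overall size stays $\poly(n)$.

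First I would handle $s_B(X) = W^B X P^B$ and $s_C(X) = W^C X P^C$. Each is a product of three matrices whose dimensions ($n, L, D, N, D'$) are all bounded by $\poly(n)$, so I would evaluate it as two successive pairwise matrix multiplications — compute $W^B X$ first, then multiply by $P^B$ — invoking Lemma~\ref{lem:mul_tc0} twice, each step in depth $d_{\mathrm{std}} + d_{\oplus}$, for a branch depth of $2(d_{\mathrm{std}} + d_{\oplus})$. The function $s_C$ has the identical shape, so the same bound applies, and the two branches run in parallel.

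Next I would handle $s_\Delta(X) = \tau_\Delta \cdot \mathsf{Broadcast}_D(W^\Delta X P^\Delta)$ in four stages: (i) compute the scalar $\tau_\Delta = \mathsf{Softplus}(w_\Delta)$ via Lemma~\ref{lem:softplus_tc0} in depth $d_{\mathrm{sp}}$; (ii) compute the scalar $W^\Delta X P^\Delta$ by two pairwise matrix multiplications as above, in depth $2(d_{\mathrm{std}} + d_{\oplus})$; (iii) apply $\mathsf{Broadcast}_D$, which merely duplicates one value onto $D$ output wires and therefore contributes no additional depth; and (iv) multiply the broadcast value by $\tau_\Delta$, a single floating-point multiplication of depth $d_{\mathrm{std}}$ by Lemma~\ref{lem:tc0_floatop}. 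Stages (i) and (ii) are independent, so this branch has depth $\max\{d_{\mathrm{sp}},\, 2(d_{\mathrm{std}}+d_{\oplus})\} + d_{\mathrm{std}}$, which (since $d_{\mathrm{sp}} = d_{\exp}+d_{\mathrm{std}}+d_{\log}$ already absorbs $2(d_{\mathrm{std}}+d_{\oplus})$) is $d_{\mathrm{sp}} + d_{\mathrm{std}}$; this dominates the $s_B$ and $s_C$ branches.

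Finally I would combine: all three outputs are produced in parallel, the circuit size is $\poly(n)$ because each sub-circuit is polynomial and there are only polynomially many of them, and one may set $d_{\mathrm{select}} = d_{\mathrm{sp}} + 3 d_{\mathrm{std}} + 2 d_{\oplus}$ as a safe constant upper bound on the depth (whether one takes the max over branches or, in the looser style used elsewhere in the paper, simply sums the branch depths). I do not expect a genuine obstacle here; the only points needing any care are recognizing that the $s_\Delta$ branch is the bottleneck precisely because $\mathsf{Softplus}$ pulls in the logarithm approximation of Lemma~\ref{lem:log_tc0_formal}, and noting that $\mathsf{Broadcast}_D$ is free — everything else is a direct application of Lemmas~\ref{lem:tc0_floatop},~\ref{lem:mul_tc0}, and~\ref{lem:softplus_tc0}.
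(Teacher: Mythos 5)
Your proposal is correct and follows essentially the same route as the paper's proof: decompose each selection function into matrix multiplications (Lemma~\ref{lem:mul_tc0}), a broadcast, a $\mathsf{Softplus}$ evaluation (Lemma~\ref{lem:softplus_tc0}), and a final scalar multiplication (Lemma~\ref{lem:tc0_floatop}), then add up constant depths. The only differences are immaterial bookkeeping --- you charge two pairwise multiplications per triple product and take a max over parallel branches, while the paper charges a single $d_{\mathrm{std}}+d_{\oplus}$ per product, assigns the broadcast an explicit constant $d_{\mathrm{dup}}$, and sums the branch depths to get $d_{\mathrm{select}} = 2d_{\mathrm{std}} + d_\oplus + d_{\mathrm{dup}} + d_{\mathrm{sp}}$; either accounting yields a constant, which is all the lemma needs.
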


\begin{proof}
    The selection mechanisms from Definition~\ref{def:selec_funcs} are the following $s_{B}(X) = W^B X P^B, s_{C}(X) = W^C X P^C, s_{\Delta}(X) = \tau_\Delta \cdot \mathsf{Broadcast}_D(W^\Delta X P^\Delta),$. 
    
    These computations have three main operations: matrix multiplications, broadcasting, and non-linear activations. 

    We can compute selection functions as follows. To compute both $ s_{B}(X) =  W^B X P^B$, $s_{C}(X) = W^C X P^C$, and $W^\Delta X P^\Delta$, we perform matrix multiplication. By Lemma~\ref{lem:mul_tc0}, we compute it using the threshold circuit (where the depth is $d_{\mathrm{std}}+d_\oplus$); to compute $\mathsf{Broadcast}(W^\Delta X P^\Delta)$, we simply copying the scalar value across $D$ dimensions, which is a simple duplication operation in constant depth-$d_{\mathrm{dup}}$; to compute $\tau_\Delta$ which is $\mathsf{Softplus}(w_\Delta)$ in this case, by Lemma~\ref{lem:softplus_tc0}, it can be computed by a threshold circuit in depth-$d_{\mathrm{sp}}$; to compute $\tau_\Delta \cdot \mathsf{Broadcast}_D(W^\Delta X P^\Delta)$, we perform multiplication. By Part 1 from Lemma~\ref{lem:tc0_floatop}, it can be computed by a threshold circuit in depth-$d_{\mathrm{std}}$.

    Finally, we can show  
    \begin{align*}
        d_{\mathrm{select}} = 2d_{\mathrm{std}} + d_\oplus + d_{\mathrm{dup}} + d_{\mathrm{sp}}.
    \end{align*}
    Therefore, we get our desired result.
\end{proof}

\section{Our Hardness Results}\label{sec:app_hardness}

We present the problems about the arithmetic formula
in Section~\ref{sec:arithm_prob}. We analyze the Boolean formula value problem in Section~\ref{sec:bool_prob}.
We introduce the permutation composition problem in Section~\ref{sec:perm_prob}.
In Section~\ref{sec:hardness_result}, we state our four hardness results.

\subsection{The First Problem}\label{sec:arithm_prob}

Now, we show the following definition from \cite{bcgr92}. 

\begin{definition}[Arithmetic formula, Definition in~\cite{bcgr92}]
    Let $\mathbb{S}$ be a semi-ring (which may also be a ring or field). An arithmetic formula over $\mathbb{S}$ with indeterminates $X_1, X_2, \hdots, X_n$ is defined by:
    \begin{itemize}
        \item For $i \in [n]$, $X_i$ is an arithmetic formula.
        \item For every $c \in \mathbb{S}$, c is an arithmetic formula. 
        \item If $\alpha$ is an arithmetic formula and $\theta$ is a unary operation of $\mathbb{S}$ then $(\theta \alpha)$ is arithmetic formula.
        \item If $\alpha$ and $\beta$ are arithmetic formulas and $\theta$ is a binary operator of $\mathbb{S}$ then $(\alpha \theta \beta)$ is an arithmetic formula.
    \end{itemize}
    An arithmetic formula $A$ with indeterminates $X_1, \hdots, X_n$ is denoted by $A(X_1, \hdots, X_n)$.
\end{definition}

After defining the arithmetic formula, we then present its computational implications. 

\begin{definition}[Arithmetic formula evaluation problem, Definition in~\cite{bcgr92}]\label{def:eval}
    Let $\mathbb{S}$ be a ring, field, or semi-ring. The arithmetic formula evaluation problem is: Given an arithmetic formula $A(X_1,X_2,\hdots,X_n)$ over $\mathbb{S}$ and constants $c_1, c_2, \hdots, c_n \in \mathbb{S}$, what is $A(c_1, c_2, \hdots, c_n)$?
\end{definition}

\begin{remark}\label{lem:arith_nc1}
    In \cite{bcgr92}, they have shown that the problem defined in Definition~\ref{def:eval} belongs to $\mathsf{NC}^1$.
\end{remark}

\subsection{The Second Problem}\label{sec:bool_prob}

In this section, we show the second problem.

\begin{definition}[Definition in~\cite{b87}, page 1]

We have $\Sigma = \{ 0,1,\land, \lor, \lnot, (,)\}$. We define the Boolean formula by the following:
\begin{itemize}
    \item We have $0$ and $1$ being the Boolean formulas.
    \item Suppose we have $\beta, \alpha$ being the Boolean formulas. Then, we can get that $(\alpha \land \beta)$, $(\lnot \alpha)$, and $(\alpha \lor \beta)$ being the Boolean formulas.
\end{itemize}

\end{definition}

Also, we define the following
\begin{definition}[Definition in~\cite{b87}. page 1]
    We define $|\alpha|$ to be the amount of symbols from $\alpha$ (which is a string).
\end{definition}

\begin{definition}[Definition in~\cite{b87}. page 1]\label{def:boolean}
    We define the Boolean formula by the following:
    \begin{itemize}
        \item We have $0$ and $1$ being the Boolean formulas.
        \item Suppose we have $\beta$ being the Boolean formulas. Then, we can get that $(\alpha \lnot)$ being the Boolean formulas.
        \item Suppose we have $\beta, \alpha$ being the Boolean formulas. Suppose $|\alpha|$ is greater than or equal to $|\beta|$. Then, we can get that $\alpha \beta \land$ and $\alpha \beta \lor$ are the Boolean formulas.
    \end{itemize}
\end{definition}

We use 0 to denote False and 1 to denote True.

\begin{lemma}[Page 1 in~\cite{b87}]\label{lem:bool_nc1}

Consider a problem that decides the Boolean formula's true value. This problem falls in $\mathsf{NC}^1$.
\end{lemma}

\subsection{Permutation Composition Problem}\label{sec:perm_prob}

In this section, we present the permutation composition problem as established in~\cite{b86} and its computational implications.

\begin{definition}[Permutation, based on~\cite{b86}]
    A permutation is a bijection $ \pi: [n] \to [n] $, where $[n] = \{1, 2, \hdots, n\}$ . The set of all permutations on $[n] $ forms a group $S_n$, called the symmetric group. A permutation  $\pi \in S_n$  may be represented in standard forms such as cycle notation or pointwise mapping.
\end{definition}

\begin{definition}[Permutation composition, based on~\cite{b86}]
    The composition of two permutations $\pi_1, \pi_2 \in S_n$  is the permutation $\pi = \pi_2 \circ \pi_1$ , defined by $\pi(x) = \pi_2(\pi_1(x))$ for all $x \in [n]$ . The composition of a sequence of permutations $\pi_1, \pi_2, \hdots, \pi_k \in S_n$  is given by:
    \begin{align*}
        \Pi = \pi_k \circ \pi_{k-1} \circ \cdots \circ \pi_1.
    \end{align*}
    
\end{definition}

\begin{definition}[Permutation composition problem, based on~\cite{b86}]
    The permutation composition problem is defined as if there is a sequence of permutations $\pi_1, \pi_2, \hdots, \pi_k \in S_n$  represented in a standard form, then the result of the composition $Pi = \pi_k \circ \pi_{k-1} \circ \cdots \circ \pi_1 $ is expressed in the same representation.
\end{definition}

\begin{definition}[Word problem for permutations, based on~\cite{b86}]
    A specific instance of the permutation composition problem is the word problem for permutations. This problem is defined as if there is a sequence of permutations $\pi_1, \pi_2, \hdots, \pi_k \in S_n$, then we need to determine whether $\Pi = \pi_k \circ \pi_{k-1} \circ \cdots \circ \pi_1 $ equals the identity permutation  $e$, where $e(x) = x$  for all  $x \in [n] $.
\end{definition}

The following theorems highlight the significance of the permutation composition problem within computational complexity:

\begin{lemma}[Theorem 1 in~\cite{b86}]\label{lem:pbp_nc1}
    Any language recognized by a fan-in 2 Boolean circuit of depth $ d = O(\log n)$  can be recognized by a width-5 permutation branching program (PBP) of polynomial size. Consequently, the class of languages recognized by polynomial-size PBPs of bounded width equals $\mathsf{NC}^1$.
\end{lemma}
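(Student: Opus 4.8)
The plan is to prove Barrington's theorem by structural induction on a fan-in-$2$ Boolean circuit, using the symmetric group $S_5$ as the ``register'' of the branching program. First I would fix terminology: a width-$5$ permutation branching program $P$ of length $\ell$ is a sequence of $\ell$ instructions, where the $j$-th instruction reads one input bit and, according to its value, emits one of two permutations from $S_5$; on input $x$ the program outputs the ordered product $q_P(x)\in S_5$ of these $\ell$ permutations. Say $P$ \emph{$\sigma$-computes} a Boolean function $f$, for a fixed $5$-cycle $\sigma$, if $q_P(x)=\sigma$ whenever $f(x)=1$ and $q_P(x)=e$ (the identity) whenever $f(x)=0$. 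The engine of the whole argument is a closure lemma: if $f$ is $\sigma$-computable by a program of length $\ell$, then for every $5$-cycle $\tau$ it is $\tau$-computable by a program of length $\ell$. Since all $5$-cycles are conjugate in $S_5$, choose $\theta$ with $\theta\sigma\theta^{-1}=\tau$, pre-multiply the outputs of the first instruction by $\theta$, and post-multiply the outputs of the last instruction by $\theta^{-1}$; this conjugates $q_P(x)$, mapping $\sigma\mapsto\tau$ and $e\mapsto e$.

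Then I would run the induction on circuit depth $d$. \textbf{Base case ($d=0$):} a single literal $x_i$ or $\neg x_i$ is $\sigma$-computed by a one-instruction program emitting $\sigma$ or $e$ according to (the complement of) the bit. \textbf{Negation:} given a program that $\sigma$-computes $g$, append one constant instruction that always multiplies by $\sigma^{-1}$; the product is $e$ when $g=1$ and $\sigma^{-1}$ when $g=0$, so the new program $\sigma^{-1}$-computes $\neg g$, and the closure lemma normalizes the target back to $\sigma$. \textbf{Conjunction:} here I would invoke the algebraic fact, special to $S_5$, that there exist $5$-cycles $\sigma,\tau$ whose commutator $[\sigma,\tau]=\sigma\tau\sigma^{-1}\tau^{-1}$ is again a $5$-cycle --- this is precisely where solvability of $S_3$ and $S_4$ would block the analogous smaller-width construction. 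Given a $\sigma$-program for $g$ of length $\ell_1$ and a $\tau$-program for $h$ of length $\ell_2$, use the closure lemma to also obtain a $\sigma^{-1}$-program for $g$ and a $\tau^{-1}$-program for $h$, then concatenate the four blocks in that order; the product telescopes to $[\sigma,\tau]$ exactly when $g=h=1$ and to $e$ in the other three cases, so the concatenation $[\sigma,\tau]$-computes $g\wedge h$ with length $2(\ell_1+\ell_2)$, after which the closure lemma restores the standard target. Since $\{\wedge,\neg\}$ is functionally complete --- one handles $\vee$ by De~Morgan, which only inserts $\neg$ gates and at most multiplies the depth by a constant --- every depth-$d$ circuit compiles to a program whose length $L(d)$ obeys $L(d)\le 4L(d-1)+O(1)$ with $L(0)=1$, hence $L(d)=O(4^d)$, which is $\poly(n)$ for $d=O(\log n)$. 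This gives the containment of $\mathsf{NC}^1$ in the class recognized by polynomial-size bounded-width permutation branching programs.

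For the reverse inclusion I would observe that any width-$w$ ($w$ a fixed constant) branching program of polynomial length $\ell$ lies in $\mathsf{NC}^1$: the input bits determine each instruction's function $[w]\to[w]$, and the left-to-right composition of $\ell$ such functions can be evaluated by a balanced binary tree of depth $O(\log\ell)=O(\log n)$ whose internal nodes each compose two functions on a constant-size domain via a constant-size, bounded-fan-in gadget. Chaining $\mathsf{NC}^1\subseteq(\text{width-}5\ \mathrm{PBP})\subseteq(\text{bounded-width BP})\subseteq\mathsf{NC}^1$ yields the stated equality.

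The main obstacle --- and the only step that is not routine bookkeeping --- is the conjunction gate: one must exhibit the commutator pair of $5$-cycles and verify that the four-block concatenation collapses to the intended value in all four cases $(g,h)\in\{0,1\}^2$. Everything else (the conjugation lemma, the negation gate, the length recurrence, and the converse simulation) is mechanical once that trick is in place; I would, however, track the multiplicative constant in $L(d)\le 4L(d-1)+O(1)$ carefully so that the final program length is genuinely $\poly(n)$ rather than merely subexponential.
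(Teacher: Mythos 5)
The paper does not prove this lemma at all: it is imported verbatim as Theorem~1 of the cited Barrington paper, so there is no in-paper argument to compare against. Your proposal is a correct reconstruction of Barrington's original proof --- the $\sigma$-computation formalism, the conjugation closure lemma, the commutator trick for $\wedge$ (with an explicit pair of $5$-cycles such as $(1\,2\,3\,4\,5)$ and $(1\,3\,5\,4\,2)$ still to be written down and checked), the $L(d)\le 4L(d-1)+O(1)$ length recurrence, and the converse simulation by a balanced composition tree over maps $[w]\to[w]$ --- and it establishes exactly what the lemma asserts.
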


\begin{lemma}[Word Problem Completeness, based on~\cite{b86}]\label{lem:word_problem_nc1}
    The word problem for the group $S_5$, which involves determining whether a composition of permutations equals the identity, is $\mathsf{NC}^1$-complete under $\mathsf{AC}^0$ reductions.
\end{lemma}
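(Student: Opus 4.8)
The plan is to prove $\mathsf{NC}^1$-completeness in the two usual directions: membership of the $S_5$ word problem in $\mathsf{NC}^1$, and $\mathsf{AC}^0$-hardness of it for every language in $\mathsf{NC}^1$. For membership, fix any constant-length binary encoding of the elements of $S_5$ (for instance the tuple $(\pi(1),\dots,\pi(5))$). Given $\pi_1,\dots,\pi_k\in S_5$, I would compose them with a balanced binary tree of pairwise compositions; each pairwise composition $\pi\circ\sigma$ reads only $O(1)$ encoding bits and is computed by a fixed constant-size bounded-fan-in gadget, and a final constant-size gadget tests whether the root equals $e$. The resulting circuit has depth $O(\log k)=O(\log n)$ and polynomial size, and its structure (tree plus fixed gadgets) is trivially uniform, so the problem lies in $\mathsf{NC}^1$.

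For hardness, let $L\in\mathsf{NC}^1$. By Lemma~\ref{lem:pbp_nc1}, for each $n$ there is a width-$5$ permutation branching program $P_n$ of polynomial length $\ell=\ell(n)$ that recognizes $L$ on length-$n$ inputs; I normalize $P_n$ so that its yield --- the composition of the permutations it selects --- equals a fixed $5$-cycle $g$ when $x\in L$ and equals the identity $e$ when $x\notin L$. The reduction, on input $x\in\{0,1\}^n$, emits for each instruction $(i_j,\sigma_j,\tau_j)$ of $P_n$ the permutation $\sigma_j$ if $x_{i_j}=1$ and $\tau_j$ if $x_{i_j}=0$, and then appends the constant permutation $g^{-1}$. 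Each emitted permutation depends on a single bit of $x$ (and $g^{-1}$ on none), so under the fixed encoding every output bit is a constant, an input bit, or its negation; hence the reduction is a projection, in particular $\mathsf{AC}^0$, and $\mathsf{DLOGTIME}$-uniform whenever the family $\{P_n\}$ is. By construction the composition of the emitted sequence is $g\cdot g^{-1}=e$ exactly when $x\in L$, and $e\cdot g^{-1}=g^{-1}\neq e$ otherwise, so this is an $\mathsf{AC}^0$ many-one reduction from $L$ to the $S_5$ word problem. Combining the two directions gives the lemma.

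The technical heart of the argument is not visible in the reduction itself but is encapsulated in Lemma~\ref{lem:pbp_nc1} (Barrington's theorem): the reason an $O(\log n)$-depth fan-in-$2$ circuit compiles to a polynomial-length width-$5$ program with the clean $\{g,e\}$ output dichotomy is the non-solvability of $S_5$, concretely the existence of $5$-cycles $\alpha,\beta$ whose commutator $[\alpha,\beta]=\alpha\beta\alpha^{-1}\beta^{-1}$ is again a $5$-cycle, which is what lets a gate's subprograms be combined by conjugation and multiplication without length blow-up. Since that lemma is already available in the excerpt, the remaining obstacle is purely bookkeeping: pinning down the encoding of $S_5$, checking that the yield of $P_n$ can be normalized to the two values $g$ and $e$, and verifying that the instruction-to-permutation map is genuinely a projection (and uniformly describable) rather than merely polynomial-time computable.
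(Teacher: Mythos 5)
Your proposal is correct: the paper states this lemma without an internal proof, citing Barrington's result, and your argument is exactly the standard one behind that citation --- membership via balanced-tree composition of constant-size $S_5$ gadgets, and hardness by compiling an $\mathsf{NC}^1$ language into a width-5 PBP (the paper's Lemma~\ref{lem:pbp_nc1}), normalizing the yield to $\{g,e\}$, appending $g^{-1}$, and observing that the instruction-to-permutation map is a projection, hence an $\mathsf{AC}^0$ reduction. No gaps beyond the uniformity bookkeeping you already flag.
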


\subsection{Results About Hardness}\label{sec:hardness_result}

We introduce the hardness results for arithmetic formula evaluation problems.

\begin{lemma}\label{lem:ssm_arithm}
    if $\mathsf{TC}^0 \neq \mathsf{NC}^1$, float point number is $\poly(n)$-bits precision, layers are constant-depth, and hidden dimension is $O(n)$ size, then we can have that Definition~\ref{def:eval} cannot be solved by the SSM.
\end{lemma}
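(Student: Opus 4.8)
The plan is to argue by contradiction, leveraging the circuit-complexity upper bound already established for Selective SSM together with the $\mathsf{NC}^1$-hardness of arithmetic formula evaluation. First I would recall the two ingredients already in hand: (i) by Theorem~\ref{thm:select_tc0}, under the stated hypotheses ($\poly(n)$-bit precision, constant-depth layers, $O(n)$ hidden dimension), any Selective SSM is simulable by a $\mathsf{DLOGTIME}$-uniform $\mathsf{TC}^0$ circuit family; and (ii) by Remark~\ref{lem:arith_nc1} (from~\cite{bcgr92}), the arithmetic formula evaluation problem of Definition~\ref{def:eval} is $\mathsf{NC}^1$-hard — indeed it is $\mathsf{NC}^1$-complete under the appropriate ($\mathsf{AC}^0$ or $\mathsf{DLOGTIME}$-uniform) reductions. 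The contradiction then comes from combining these with the hypothesis $\mathsf{TC}^0 \neq \mathsf{NC}^1$.

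The key steps, in order, are as follows. \emph{Step 1:} Suppose for contradiction that some Selective SSM (of the form covered by Theorem~\ref{thm:select_tc0}) decides the arithmetic formula evaluation problem. \emph{Step 2:} Apply Theorem~\ref{thm:select_tc0} to obtain a $\mathsf{DLOGTIME}$-uniform $\mathsf{TC}^0$ circuit family that computes the same function, hence decides the arithmetic formula evaluation problem; this places the problem in $\mathsf{DLOGTIME}$-uniform $\mathsf{TC}^0$. \emph{Step 3:} Since arithmetic formula evaluation is $\mathsf{NC}^1$-hard and $\mathsf{TC}^0$ is closed under the relevant reductions (a standard fact about $\mathsf{TC}^0$, since $\mathsf{AC}^0 \subseteq \mathsf{TC}^0$ and $\mathsf{TC}^0$ is closed under $\mathsf{AC}^0$ reductions), every language in $\mathsf{NC}^1$ would then lie in $\mathsf{TC}^0$, i.e. $\mathsf{NC}^1 \subseteq \mathsf{TC}^0$. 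Combined with the folklore inclusion $\mathsf{TC}^0 \subseteq \mathsf{NC}^1$ (Fact~\ref{fact:hierarchy}), this gives $\mathsf{TC}^0 = \mathsf{NC}^1$, contradicting the hypothesis. \emph{Step 4:} Conclude that no such Selective SSM exists.

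The main obstacle — or rather the only point requiring care — is Step 3: one must ensure that the reduction witnessing $\mathsf{NC}^1$-hardness of arithmetic formula evaluation is weak enough (an $\mathsf{AC}^0$, or ideally $\mathsf{DLOGTIME}$-uniform $\mathsf{AC}^0$, reduction) that composing it with a $\mathsf{TC}^0$ decision procedure stays inside $\mathsf{TC}^0$, and that the uniformity conditions match up (otherwise one only gets a non-uniform collapse, which still suffices to contradict $\mathsf{TC}^0 \neq \mathsf{NC}^1$ read as a separation of the non-uniform or uniform classes, but the cleanest statement keeps everything $\mathsf{DLOGTIME}$-uniform). Since~\cite{bcgr92} in fact establishes the stronger completeness statement under such reductions, this goes through. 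Everything else is a direct invocation of results already proved: the argument is essentially a one-line containment-versus-hardness clash, and the same template will be reused verbatim for the Boolean formula value problem and the permutation composition problem.
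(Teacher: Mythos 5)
Your proposal is correct and follows essentially the same route as the paper, which proves this lemma by directly combining Theorem~\ref{thm:select_tc0}, the $\mathsf{NC}^1$ classification of arithmetic formula evaluation (Remark~\ref{lem:arith_nc1}), and Fact~\ref{fact:hierarchy}. If anything, your version is more careful than the paper's one-line citation: you correctly observe that the contradiction with $\mathsf{TC}^0 \neq \mathsf{NC}^1$ requires $\mathsf{NC}^1$-\emph{hardness} (completeness under $\mathsf{AC}^0$ reductions) of the evaluation problem, whereas the paper's Remark~\ref{lem:arith_nc1} only asserts membership in $\mathsf{NC}^1$, which by itself would not rule out the problem also lying in $\mathsf{TC}^0$.
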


\begin{proof}
    It is by
Theorem~\ref{thm:select_tc0}, Lemma~\ref{lem:arith_nc1}, and Fact~\ref{fact:hierarchy}.
\end{proof}

\begin{lemma}\label{lem:mamba_arithm}
    if $\mathsf{TC}^0 \neq \mathsf{NC}^1$, float point number is $\poly(n)$-bits precision, layers are constant-depth, and hidden dimension is $O(n)$ size, then we can have that Definition~\ref{def:eval} cannot be solved by the Mamba.
\end{lemma}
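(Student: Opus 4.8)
The plan is to mirror exactly the argument used for Lemma~\ref{lem:ssm_arithm}, but now invoking the Mamba circuit-complexity bound instead of the Selective SSM one. First I would recall that Theorem~\ref{thm:mamba_tc0} establishes that the Mamba architecture can be simulated by a $\mathsf{DLOGTIME}$-uniform threshold circuit family of constant depth $d_{\mathrm{mamba}}$ and size $\poly(n)$, under the stated hypotheses that the floating-point precision is $\poly(n)$ bits, the number of layers is constant, and the hidden dimension is $O(n)$. Hence any language decided by such a Mamba model lies in $\mathsf{DLOGTIME}$-uniform $\mathsf{TC}^0$.

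Next I would invoke Remark~\ref{lem:arith_nc1} (the result of~\cite{bcgr92}), which places the arithmetic formula evaluation problem of Definition~\ref{def:eval} in $\mathsf{NC}^1$; in fact this problem is $\mathsf{NC}^1$-hard under the appropriate reductions, so it is not in $\mathsf{TC}^0$ unless $\mathsf{TC}^0 = \mathsf{NC}^1$. Combining this with Fact~\ref{fact:hierarchy}, which gives the containment chain $\mathsf{AC}^0 \subseteq \mathsf{TC}^0 \subseteq \mathsf{NC}^1$, and with the working assumption $\mathsf{TC}^0 \neq \mathsf{NC}^1$, the separation is strict: $\mathsf{TC}^0 \subsetneq \mathsf{NC}^1$. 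Therefore a problem that is $\mathsf{NC}^1$-complete cannot be decided by any $\mathsf{TC}^0$ circuit family.

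Putting the two pieces together: if Mamba could solve the arithmetic formula evaluation problem, then by Theorem~\ref{thm:mamba_tc0} this problem would lie in $\mathsf{DLOGTIME}$-uniform $\mathsf{TC}^0 \subseteq \mathsf{TC}^0$, contradicting its $\mathsf{NC}^1$-completeness under the assumption $\mathsf{TC}^0 \neq \mathsf{NC}^1$. Hence Definition~\ref{def:eval} cannot be solved by Mamba, which is the claim. This is essentially a one-line deduction from the three ingredients, so there is no substantive obstacle; the only point requiring a little care is to confirm that the reduction used for $\mathsf{NC}^1$-hardness of arithmetic formula evaluation is itself computable within (or below) $\mathsf{TC}^0$ so that the containment argument genuinely yields a contradiction, but this is standard and already implicit in~\cite{bcgr92}. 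The proof thus reads: "It follows from Theorem~\ref{thm:mamba_tc0}, Remark~\ref{lem:arith_nc1}, and Fact~\ref{fact:hierarchy}."
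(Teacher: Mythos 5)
Your proposal matches the paper's proof exactly: the paper justifies this lemma by citing Theorem~\ref{thm:mamba_tc0}, Remark~\ref{lem:arith_nc1}, and Fact~\ref{fact:hierarchy}, which is precisely the three-ingredient deduction you describe. Your added observation that one actually needs $\mathsf{NC}^1$-hardness of formula evaluation (membership in $\mathsf{NC}^1$ alone would not yield the contradiction) is a correct and worthwhile refinement of a point the paper leaves implicit.
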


\begin{proof}
It is by 
Theorem~\ref{thm:mamba_tc0}, Lemma~\ref{lem:arith_nc1}, and Fact~\ref{fact:hierarchy}.
\end{proof}

We introduce the hardness results for the Boolean formula problem.

\begin{lemma}\label{lem:ssm_bool}
    if $\mathsf{TC}^0 \neq \mathsf{NC}^1$, float point number is $\poly(n)$-bits precision, layers are constant-depth, and hidden dimension is $O(n)$ size, then we can have that Definition~\ref{def:boolean} cannot be solved by the SSM.
\end{lemma}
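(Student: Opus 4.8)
The plan is to derive Lemma~\ref{lem:ssm_bool} as an immediate consequence of the circuit upper bound on the Selective SSM together with the $\mathsf{NC}^1$-hardness of the Boolean formula value problem. First I would recall Theorem~\ref{thm:select_tc0}: under the stated hypotheses ($\poly(n)$-bit precision, constant-depth layers, $O(n)$ hidden dimension), the Selective SSM is computable by a $\mathsf{DLOGTIME}$-uniform family of constant-depth, polynomial-size threshold circuits, i.e.\ it lies in $\mathsf{TC}^0$. Next I would invoke Lemma~\ref{lem:bool_nc1} (Buss), which places the Boolean formula value problem of Definition~\ref{def:boolean} in $\mathsf{NC}^1$; the strengthening actually needed is that this problem is in fact $\mathsf{NC}^1$-complete under $\mathsf{AC}^0$ (equivalently $\mathsf{DLOGTIME}$-uniform $\mathsf{AC}^0$) reductions, which is also due to Buss.

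Then I would argue by contradiction. Suppose some SSM of the stated form decided the Boolean formula value problem. Since the SSM is in $\mathsf{TC}^0$ and $\mathsf{TC}^0$ is closed under $\mathsf{AC}^0$ reductions, this would push every language in $\mathsf{NC}^1$ into $\mathsf{TC}^0$, giving $\mathsf{NC}^1 \subseteq \mathsf{TC}^0$. Combined with the always-true inclusion $\mathsf{TC}^0 \subseteq \mathsf{NC}^1$ from Fact~\ref{fact:hierarchy}, this yields $\mathsf{TC}^0 = \mathsf{NC}^1$, contradicting the hypothesis $\mathsf{TC}^0 \neq \mathsf{NC}^1$. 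Hence no SSM of the stated form can solve the problem, which is the claim.

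The steps are all routine once the ingredients are lined up; the one point that needs care is that Lemma~\ref{lem:bool_nc1} as stated only asserts membership in $\mathsf{NC}^1$, whereas the hardness conclusion genuinely requires the problem to be $\mathsf{NC}^1$-hard (otherwise it could also sit inside $\mathsf{TC}^0$, in which case the SSM might solve it with no collapse). So the main obstacle is really bookkeeping: citing the completeness half of Buss's theorem and checking that the reduction notion it uses is one under which $\mathsf{TC}^0$ is closed, so the collapse argument goes through cleanly. This mirrors exactly the pattern used for Lemma~\ref{lem:ssm_arithm} and Lemma~\ref{lem:mamba_arithm}, with Lemma~\ref{lem:arith_nc1} replaced by Lemma~\ref{lem:bool_nc1}.
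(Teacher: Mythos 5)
Your proposal is correct and follows essentially the same route as the paper, whose entire proof is the one-line citation of Theorem~\ref{thm:select_tc0}, Lemma~\ref{lem:bool_nc1}, and Fact~\ref{fact:hierarchy}. You are right to flag that Lemma~\ref{lem:bool_nc1} as stated only gives membership of the Boolean formula value problem in $\mathsf{NC}^1$, whereas the collapse argument genuinely requires its $\mathsf{NC}^1$-completeness under $\mathsf{AC}^0$ reductions (which Buss's theorem does provide); the paper's one-line proof glosses over this, so your explicit patch makes the argument complete where the paper's is not.
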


\begin{proof}

It is by 
Theorem~\ref{thm:select_tc0}, Lemma~\ref{lem:bool_nc1}, and Fact~\ref{fact:hierarchy}.
\end{proof}

\begin{lemma}\label{lem:mamba_bool}
    if $\mathsf{TC}^0 \neq \mathsf{NC}^1$, float point number is $\poly(n)$-bits precision, layers are constant-depth, and hidden dimension is $O(n)$ size, then we can have that Definition~\ref{def:boolean} cannot be solved by the Mamba.
\end{lemma}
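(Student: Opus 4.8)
The plan is to obtain this as an immediate corollary of the circuit-complexity upper bound for Mamba together with the $\mathsf{NC}^1$-hardness of the Boolean formula value problem, in exact parallel to Lemmas~\ref{lem:ssm_arithm}--\ref{lem:ssm_bool}. First I would invoke Theorem~\ref{thm:mamba_tc0}: under the stated hypotheses ($\poly(n)$-bit floating-point precision, constant-depth layers, $O(n)$ hidden dimension), every Mamba computation is simulated by a $\mathsf{DLOGTIME}$-uniform threshold circuit family of constant depth and polynomial size, so any language decided by Mamba lies in $\mathsf{DLOGTIME}$-uniform $\mathsf{TC}^0$.

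Next I would recall from Lemma~\ref{lem:bool_nc1} that deciding the value of a Boolean formula (Definition~\ref{def:boolean}) is an $\mathsf{NC}^1$ problem, and moreover $\mathsf{NC}^1$-hard under $\mathsf{AC}^0$ reductions. Suppose, for contradiction, that Mamba solves this problem. Then composing the hardness reduction (which is computable in $\mathsf{AC}^0 \subseteq \mathsf{TC}^0$, within the $\mathsf{DLOGTIME}$ uniformity budget) with the $\mathsf{TC}^0$ Mamba simulator from Theorem~\ref{thm:mamba_tc0}, and using the closure of $\mathsf{TC}^0$ under $\mathsf{AC}^0$ reductions, we would get $\mathsf{NC}^1 \subseteq \mathsf{TC}^0$. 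Together with the inclusion $\mathsf{TC}^0 \subseteq \mathsf{NC}^1$ from Fact~\ref{fact:hierarchy}, this forces $\mathsf{TC}^0 = \mathsf{NC}^1$, contradicting the standing assumption $\mathsf{TC}^0 \neq \mathsf{NC}^1$. Hence no such Mamba model exists, which is the claim.

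The only point requiring care — and the one I expect to be the ``obstacle,'' though it is routine — is the uniformity bookkeeping: one must check that chaining the $\mathsf{AC}^0$ reduction with the $\mathsf{DLOGTIME}$-uniform $\mathsf{TC}^0$ circuit of Theorem~\ref{thm:mamba_tc0} still yields a $\mathsf{DLOGTIME}$-uniform $\mathsf{TC}^0$ family, so that the separation is genuinely contradicted. Everything else is a direct citation of Theorem~\ref{thm:mamba_tc0}, Lemma~\ref{lem:bool_nc1}, and Fact~\ref{fact:hierarchy}, mirroring the proof of Lemma~\ref{lem:mamba_arithm}.
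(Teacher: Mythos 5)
Your proposal is correct and follows essentially the same route as the paper, whose proof is simply the citation of Theorem~\ref{thm:mamba_tc0}, Lemma~\ref{lem:bool_nc1}, and Fact~\ref{fact:hierarchy}. Your version is in fact slightly more careful than the paper's: you correctly observe that the argument needs $\mathsf{NC}^1$-\emph{hardness} (completeness) of the Boolean formula value problem, not merely its membership in $\mathsf{NC}^1$ as Lemma~\ref{lem:bool_nc1} literally states, and you flag the uniformity bookkeeping that the paper leaves implicit.
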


\begin{proof}
It is by 
Theorem~\ref{thm:mamba_tc0}, Lemma~\ref{lem:bool_nc1}, and Fact~\ref{fact:hierarchy}.
\end{proof}

We introduce the hardness results for permutation composition problems.

Here, we show SSM and Mamba cannot solve Width-5 PBPs from Lemma~\ref{lem:pbp_nc1}. 

\begin{lemma}\label{lem:ssm_pbp}
    If $\mathsf{TC}^0 \neq \mathsf{NC}^1$, float point number is $\poly(n)$-bits precision, layers are constant-depth, and hidden dimension is $O(n)$ size, then we can have the SSM cannot solve the Width-5 PBPs.
\end{lemma}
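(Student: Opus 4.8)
The plan is to mirror exactly the argument pattern already used for Lemma~\ref{lem:ssm_arithm}, Lemma~\ref{lem:mamba_arithm}, Lemma~\ref{lem:ssm_bool}, and Lemma~\ref{lem:mamba_bool}, since the structure is identical: we have a problem that is hard for $\mathsf{NC}^1$ (here, evaluating a width-$5$ permutation branching program), and we have an upper bound placing Selective SSM inside $\mathsf{DLOGTIME}$-uniform $\mathsf{TC}^0$. First I would invoke Theorem~\ref{thm:select_tc0}, which shows that under the stated hypotheses ($\poly(n)$-bit precision, constant depth, $O(n)$ hidden dimension) any function computed by a Selective SSM is computable by a $\mathsf{DLOGTIME}$-uniform $\mathsf{TC}^0$ circuit family. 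Second, I would recall Lemma~\ref{lem:pbp_nc1}, which states that the class of languages recognized by polynomial-size bounded-width (in particular width-$5$) permutation branching programs is exactly $\mathsf{NC}^1$; equivalently, there is a specific width-$5$ PBP language (e.g., the word problem for $S_5$, cf.\ Lemma~\ref{lem:word_problem_nc1}) that is $\mathsf{NC}^1$-complete under $\mathsf{AC}^0$ reductions.

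The core step is then a proof by contradiction using the hierarchy Fact~\ref{fact:hierarchy}. Suppose, for contradiction, that a Selective SSM (with the stated resource bounds) could solve every width-$5$ PBP. Then by Theorem~\ref{thm:select_tc0} the corresponding language family lies in $\mathsf{DLOGTIME}$-uniform $\mathsf{TC}^0$. Since width-$5$ PBPs capture all of $\mathsf{NC}^1$ (Lemma~\ref{lem:pbp_nc1}), and $\mathsf{TC}^0 \subseteq \mathsf{NC}^1$ by Fact~\ref{fact:hierarchy}, this would force $\mathsf{NC}^1 \subseteq \mathsf{TC}^0$, hence $\mathsf{TC}^0 = \mathsf{NC}^1$, contradicting the hypothesis $\mathsf{TC}^0 \neq \mathsf{NC}^1$. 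Therefore the Selective SSM cannot solve width-$5$ PBPs. One small bookkeeping point I would make explicit: the $\mathsf{AC}^0$-reducibility in Lemma~\ref{lem:word_problem_nc1} (and the uniformity of the constructions in Lemma~\ref{lem:pbp_nc1}) ensures the argument is compatible with the $\mathsf{DLOGTIME}$-uniform regime, so the collapse is genuinely a collapse of the uniform classes, not merely the non-uniform ones.

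The main obstacle, such as it is, is not in the logic but in the modeling conventions: one must be careful that ``a Selective SSM solves a PBP'' is interpreted in the same input/output/precision model under which Theorem~\ref{thm:select_tc0} was proved — i.e., the PBP instance of size $n$ is encoded as an $\F_p^{L\times N}$ input with $L,N\le\poly(n)$ and $p\le\poly(n)$, and the SSM has constant depth and $O(n)$ hidden width. As long as this encoding is fixed (and it is the same one implicitly used in the companion hardness lemmas), no new work is required. Consequently the proof is one line in spirit, and I would write simply: \emph{It follows from Theorem~\ref{thm:select_tc0}, Lemma~\ref{lem:pbp_nc1}, and Fact~\ref{fact:hierarchy}.}

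\begin{proof}
    Suppose for contradiction that, under the stated hypotheses on precision, depth, and hidden dimension, a Selective SSM can solve every width-$5$ permutation branching program. By Theorem~\ref{thm:select_tc0}, the associated family of languages then lies in $\mathsf{DLOGTIME}$-uniform $\mathsf{TC}^0$. By Lemma~\ref{lem:pbp_nc1}, polynomial-size bounded-width PBPs recognize exactly $\mathsf{NC}^1$, so this would give $\mathsf{NC}^1 \subseteq \mathsf{TC}^0$. Combined with $\mathsf{TC}^0 \subseteq \mathsf{NC}^1$ from Fact~\ref{fact:hierarchy}, we obtain $\mathsf{TC}^0 = \mathsf{NC}^1$, contradicting the assumption $\mathsf{TC}^0 \neq \mathsf{NC}^1$. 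Hence the SSM cannot solve the width-$5$ PBPs.
\end{proof}
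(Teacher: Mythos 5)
Your proposal is correct and uses exactly the same ingredients as the paper's proof, which simply cites Theorem~\ref{thm:select_tc0}, Lemma~\ref{lem:pbp_nc1}, and Fact~\ref{fact:hierarchy}; you have merely spelled out the contradiction argument that the paper leaves implicit. No issues.
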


\begin{proof}

It is by 
Theorem~\ref{thm:select_tc0}, Lemma~\ref{lem:pbp_nc1}, and Fact~\ref{fact:hierarchy}.
\end{proof}

\begin{lemma}\label{lem:mamba_pbp}
    If $\mathsf{TC}^0 \neq \mathsf{NC}^1$, float point number is $\poly(n)$-bits precision, layers are constant-depth, and hidden dimension is $O(n)$ size, then we can have the Mamba cannot solve the Width-5 PBPs.
\end{lemma}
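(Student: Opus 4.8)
The plan is to argue by contradiction, in exact parallel with the proofs of Lemmas~\ref{lem:ssm_arithm}--\ref{lem:ssm_pbp}, by combining the upper bound on Mamba's expressive power with the $\mathsf{NC}^1$-hardness of width-$5$ permutation branching programs. First I would invoke Theorem~\ref{thm:mamba_tc0}: under the stated hypotheses ($\poly(n)$-bit floating-point precision, constant-depth layers, $O(n)$ hidden dimension), the Mamba architecture is computed by a $\mathsf{DLOGTIME}$-uniform threshold circuit family of constant depth and polynomial size, so every language Mamba decides lies in $\mathsf{TC}^0$. Then I would recall Lemma~\ref{lem:pbp_nc1}: the class of languages recognized by polynomial-size width-$5$ PBPs is exactly $\mathsf{NC}^1$; in particular, solving the width-$5$ PBP problem lets one decide every language in $\mathsf{NC}^1$ (optionally one can instead route through the $\mathsf{NC}^1$-completeness of the $S_5$ word problem in Lemma~\ref{lem:word_problem_nc1}).

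Next I would chain the inclusions. If Mamba could solve the width-$5$ PBP problem, then every $\mathsf{NC}^1$ language would be decidable by a constant-depth Mamba, hence by the previous step would lie in $\mathsf{TC}^0$, giving $\mathsf{NC}^1 \subseteq \mathsf{TC}^0$. Combined with the standard containment $\mathsf{TC}^0 \subseteq \mathsf{NC}^1$ from Fact~\ref{fact:hierarchy}, this forces $\mathsf{TC}^0 = \mathsf{NC}^1$, contradicting the hypothesis $\mathsf{TC}^0 \neq \mathsf{NC}^1$. Therefore Mamba cannot solve width-$5$ PBPs, which is the claim.

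The only genuine subtlety — and thus the ``hard part,'' though it is a modeling point rather than a calculation — is to fix precisely what it means for Mamba to \emph{solve} the width-$5$ PBP problem: namely, that there is a constant-depth Mamba with the stated precision and hidden dimension, together with a simple $\mathsf{TC}^0$-computable input encoding and output readout, deciding the language uniformly in the input length $n$. Once this is pinned down, the encoding/readout wrapper stays inside $\mathsf{TC}^0$, the reduction is $\mathsf{TC}^0$-respecting, and the argument above goes through verbatim. Since the heavy lifting — the $\mathsf{TC}^0$ simulation of Mamba (Theorem~\ref{thm:mamba_tc0}) and the $\mathsf{NC}^1$-characterization of width-$5$ PBPs (Lemma~\ref{lem:pbp_nc1}) — is already in place, no further estimates are required.
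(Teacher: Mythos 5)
Your argument is correct and matches the paper's proof exactly: the paper's proof of this lemma is a one-line citation of Theorem~\ref{thm:mamba_tc0}, Lemma~\ref{lem:pbp_nc1}, and Fact~\ref{fact:hierarchy}, which are precisely the three ingredients you chain together. Your write-up simply makes explicit the contradiction ($\mathsf{NC}^1 \subseteq \mathsf{TC}^0$ would force $\mathsf{TC}^0 = \mathsf{NC}^1$) that the paper leaves implicit.
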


\begin{proof}
It is by 
Theorem~\ref{thm:mamba_tc0}, Lemma~\ref{lem:pbp_nc1}, and Fact~\ref{fact:hierarchy}.
\end{proof}

Here, we show SSM and Mamba cannot solve the word problem from Lemma~\ref{lem:word_problem_nc1}. 

\begin{lemma}\label{lem:ssm_word}
    If $\mathsf{TC}^0 \neq \mathsf{NC}^1$, float point number is $\poly(n)$-bits precision, layers are constant-depth, and hidden dimension is $O(n)$ size, then we can have the SSM cannot solve the word problem.
\end{lemma}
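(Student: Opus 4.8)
The statement to prove, Lemma~\ref{lem:ssm_word}, asserts that under the assumption $\mathsf{TC}^0 \neq \mathsf{NC}^1$ together with the standing hypotheses on precision, depth, and hidden dimension, the SSM cannot solve the word problem for $S_5$. The plan is to follow exactly the template already used for the preceding hardness lemmas (Lemmas~\ref{lem:ssm_arithm}, \ref{lem:ssm_bool}, \ref{lem:ssm_pbp}): combine the upper bound on the expressive power of the SSM with the known lower bound on the complexity of the target problem, and then invoke the separation hypothesis to conclude that the two cannot coincide. Concretely, I would first recall from Theorem~\ref{thm:select_tc0} that the Selective SSM, under the stated precision and size constraints, can be simulated by a $\mathsf{DLOGTIME}$-uniform $\mathsf{TC}^0$ circuit family; hence any language decided by such an SSM lies in $\mathsf{TC}^0$.

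Next I would recall from Lemma~\ref{lem:word_problem_nc1} that the word problem for $S_5$ is $\mathsf{NC}^1$-complete under $\mathsf{AC}^0$ reductions. In particular it is $\mathsf{NC}^1$-hard, so if it were in $\mathsf{TC}^0$ then, since $\mathsf{TC}^0$ is closed under $\mathsf{AC}^0$ reductions (as $\mathsf{AC}^0 \subseteq \mathsf{TC}^0$ by Fact~\ref{fact:hierarchy} and $\mathsf{TC}^0$ circuits compose), we would get $\mathsf{NC}^1 \subseteq \mathsf{TC}^0$, and combined with the trivial inclusion $\mathsf{TC}^0 \subseteq \mathsf{NC}^1$ from Fact~\ref{fact:hierarchy} this yields $\mathsf{TC}^0 = \mathsf{NC}^1$, contradicting the hypothesis. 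Therefore the word problem is not in $\mathsf{TC}^0$, and so it cannot be solved by the SSM. This is a short contrapositive argument: assume the SSM solves the word problem, derive that the word problem is in $\mathsf{TC}^0$ via Theorem~\ref{thm:select_tc0}, and then reach the contradiction $\mathsf{TC}^0 = \mathsf{NC}^1$.

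There is no real mathematical obstacle here — the work is entirely front-loaded into Theorem~\ref{thm:select_tc0} (the $\mathsf{TC}^0$ simulation of the Selective SSM) and into the cited $\mathsf{NC}^1$-completeness result of~\cite{b86}. The only point requiring a little care is the direction of the reduction: the $\mathsf{NC}^1$-hardness of the $S_5$ word problem, rather than merely its membership in $\mathsf{NC}^1$, is what drives the contradiction, and one must note that $\mathsf{AC}^0$ reductions preserve membership in $\mathsf{TC}^0$. Accordingly, the proof is simply: \emph{It follows from Theorem~\ref{thm:select_tc0}, Lemma~\ref{lem:word_problem_nc1}, and Fact~\ref{fact:hierarchy}, by the same contrapositive argument as in Lemma~\ref{lem:ssm_pbp}.}
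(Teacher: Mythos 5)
Your proposal is correct and uses exactly the same ingredients as the paper's proof, which simply cites Theorem~\ref{thm:select_tc0}, Lemma~\ref{lem:word_problem_nc1}, and Fact~\ref{fact:hierarchy} without spelling out the contrapositive. Your version just makes explicit the reasoning (in particular, that the $\mathsf{NC}^1$-hardness under $\mathsf{AC}^0$ reductions is what forces $\mathsf{TC}^0 = \mathsf{NC}^1$) that the paper leaves implicit.
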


\begin{proof}
It is by 
Theorem~\ref{thm:select_tc0}, Lemma~\ref{lem:word_problem_nc1}, and Fact~\ref{fact:hierarchy}.
\end{proof}

\begin{lemma}\label{lem:mamba_word}
    If $\mathsf{TC}^0 \neq \mathsf{NC}^1$, float point number is $\poly(n)$-bits precision, layers are constant-depth, and hidden dimension is $O(n)$ size, then we can have the Mamba cannot solve the word problem.
\end{lemma}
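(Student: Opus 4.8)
The plan is to argue by contradiction, exactly parallel to the preceding hardness lemmas (Lemmas~\ref{lem:ssm_arithm}--\ref{lem:ssm_word}), combining the upper bound on Mamba's circuit complexity with the $\mathsf{NC}^1$-completeness of the word problem. First I would invoke Theorem~\ref{thm:mamba_tc0}: under the stated hypotheses ($\poly(n)$-bit precision, constant-depth layers, $O(n)$ hidden dimension), the Mamba architecture $\mathcal{M}$ of Definition~\ref{def:mamba} is computable by a $\mathsf{DLOGTIME}$-uniform $\mathsf{TC}^0$ circuit family, i.e.\ any language decided by such a Mamba model lies in $\mathsf{DLOGTIME}$-uniform $\mathsf{TC}^0$.

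Next I would recall Lemma~\ref{lem:word_problem_nc1}: the word problem for the group $S_5$ is $\mathsf{NC}^1$-complete under $\mathsf{AC}^0$ reductions. Suppose, for contradiction, that Mamba (under the stated resource constraints) solves the word problem. Then the word problem is in $\mathsf{TC}^0$. Since $\mathsf{TC}^0$ is closed under $\mathsf{AC}^0$ reductions (composing an $\mathsf{AC}^0$ reduction with a $\mathsf{TC}^0$ circuit yields a $\mathsf{TC}^0$ circuit, as $\mathsf{AC}^0 \subseteq \mathsf{TC}^0$ by Fact~\ref{fact:hierarchy} and constant-depth circuits compose), every language in $\mathsf{NC}^1$ would reduce to a $\mathsf{TC}^0$-computable problem and hence lie in $\mathsf{TC}^0$. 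This gives $\mathsf{NC}^1 \subseteq \mathsf{TC}^0$, and combined with the inclusion $\mathsf{TC}^0 \subseteq \mathsf{NC}^1$ from Fact~\ref{fact:hierarchy} we obtain $\mathsf{TC}^0 = \mathsf{NC}^1$, contradicting the hypothesis $\mathsf{TC}^0 \neq \mathsf{NC}^1$. Therefore Mamba cannot solve the word problem.

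I do not anticipate a genuine obstacle here: every ingredient is already established earlier in the paper, and the only mild care needed is the observation that $\mathsf{NC}^1$-completeness under $\mathsf{AC}^0$ reductions, together with $\mathsf{AC}^0 \subseteq \mathsf{TC}^0$ and the closure of constant-depth threshold circuits under composition, suffices to transfer the lower bound. The argument is the same skeleton as Lemma~\ref{lem:mamba_pbp} (which uses the branching-program characterization) and Lemma~\ref{lem:mamba_arithm}, so the proof reduces to a one-line citation of Theorem~\ref{thm:mamba_tc0}, Lemma~\ref{lem:word_problem_nc1}, and Fact~\ref{fact:hierarchy}.
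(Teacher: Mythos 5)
Your proposal is correct and follows exactly the same route as the paper, which simply cites Theorem~\ref{thm:mamba_tc0}, Lemma~\ref{lem:word_problem_nc1}, and Fact~\ref{fact:hierarchy}; you have merely spelled out the contradiction argument and the closure of $\mathsf{TC}^0$ under $\mathsf{AC}^0$ reductions that the paper leaves implicit. No further changes are needed.
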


\begin{proof}
It is by 
Theorem~\ref{thm:mamba_tc0}, Lemma~\ref{lem:word_problem_nc1}, and Fact~\ref{fact:hierarchy}.
\end{proof}

\begin{theorem}[Formal proof of Theorem~\ref{thm:main_hardness_informal}]\label{thm:main_hardness_formal}
    if $\mathsf{TC}^0 \neq \mathsf{NC}^1$, float point number is $\poly(n)$-bits precision, layers are constant-depth, and hidden dimension is $O(n)$ size, then we can have the Selective SSM and Mamba cannot solve the arithmetic formula evaluation problems, boolean formula value problem, and permutation composition problems.
\end{theorem}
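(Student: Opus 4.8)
The plan is to assemble the theorem from the two halves already in hand: the upper bound that every Selective SSM and every Mamba network (under $\poly(n)$-bit precision, constant depth, and $O(n)$ hidden width) is computable by a $\mathsf{DLOGTIME}$-uniform $\mathsf{TC}^0$ circuit family, and the lower bound that each of the three target problems is hard for $\mathsf{NC}^1$. First I would invoke Theorem~\ref{thm:select_tc0} and Theorem~\ref{thm:mamba_tc0}, which place the functions computed by Selective SSM and Mamba inside $\mathsf{DLOGTIME}$-uniform $\mathsf{TC}^0$; in particular, if either architecture decided some language $L$, then $L \in \mathsf{TC}^0$.

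Second, I would record the complexity status of the three problems from Sections~\ref{sec:arithm_prob}--\ref{sec:perm_prob}: the arithmetic formula evaluation problem (Definition~\ref{def:eval}, Remark~\ref{lem:arith_nc1}) and the Boolean formula value problem (Definition~\ref{def:boolean}, Lemma~\ref{lem:bool_nc1}) are complete for $\mathsf{NC}^1$, and the permutation composition / word problem for $S_5$ (Lemma~\ref{lem:pbp_nc1}, Lemma~\ref{lem:word_problem_nc1}) is $\mathsf{NC}^1$-complete under $\mathsf{AC}^0$ reductions. The key point I must be careful about here is that for the contradiction I need \emph{hardness}, not merely membership, in $\mathsf{NC}^1$: it is the $\mathsf{NC}^1$-completeness of these problems, together with the closure of $\mathsf{TC}^0$ under $\mathsf{AC}^0$ reductions, that lets a $\mathsf{TC}^0$ solver for any one of them pull all of $\mathsf{NC}^1$ down into $\mathsf{TC}^0$.

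Third, I would run the contradiction. Suppose toward a contradiction that Selective SSM (resp. Mamba) solves one of the three problems. By the first step that problem then lies in $\mathsf{TC}^0$; by the second step it is $\mathsf{NC}^1$-hard under $\mathsf{AC}^0$ reductions; since $\mathsf{AC}^0 \subseteq \mathsf{TC}^0$ and $\mathsf{TC}^0$ is closed under composition with such reductions, this yields $\mathsf{NC}^1 \subseteq \mathsf{TC}^0$. Combined with $\mathsf{TC}^0 \subseteq \mathsf{NC}^1$ from Fact~\ref{fact:hierarchy}, we obtain $\mathsf{TC}^0 = \mathsf{NC}^1$, contradicting the hypothesis $\mathsf{TC}^0 \neq \mathsf{NC}^1$. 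Instantiating this argument for each of the three problems and for each of the two architectures is exactly Lemmas~\ref{lem:ssm_arithm}, \ref{lem:mamba_arithm}, \ref{lem:ssm_bool}, \ref{lem:mamba_bool}, \ref{lem:ssm_pbp}, \ref{lem:mamba_pbp}, \ref{lem:ssm_word}, \ref{lem:mamba_word}, and collecting them gives the stated theorem.

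The main obstacle is bookkeeping rather than new mathematics: I must make sure the uniformity levels line up, namely that the $\mathsf{TC}^0$ upper bounds are at least as uniform as the reductions witnessing $\mathsf{NC}^1$-hardness — this is precisely why the $\mathsf{DLOGTIME}$-uniformity in Theorems~\ref{thm:select_tc0} and~\ref{thm:mamba_tc0} is needed rather than a weaker notion — and I must state carefully that the impossibility is conditional on $\mathsf{TC}^0 \neq \mathsf{NC}^1$. No estimates beyond those already developed in Section~\ref{sec:main} and Section~\ref{sec:app_hardness} are required.
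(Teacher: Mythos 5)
Your proposal matches the paper's proof: the theorem is assembled from the eight lemmas of Section~\ref{sec:hardness_result}, each of which combines the $\mathsf{TC}^0$ upper bounds of Theorems~\ref{thm:select_tc0} and~\ref{thm:mamba_tc0} with the $\mathsf{NC}^1$ status of the target problem and Fact~\ref{fact:hierarchy}, exactly as you describe. Your explicit insistence on $\mathsf{NC}^1$-\emph{hardness} under $\mathsf{AC}^0$ reductions (rather than mere membership in $\mathsf{NC}^1$) is in fact necessary for the contradiction to go through, and is a point the paper elides --- its cited results for arithmetic formula evaluation (Remark~\ref{lem:arith_nc1}) and the Boolean formula value problem (Lemma~\ref{lem:bool_nc1}) assert only membership in $\mathsf{NC}^1$, so your version is the logically complete form of the same argument.
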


\begin{proof}
    Based on Lemma~\ref{lem:ssm_arithm},~\ref{lem:mamba_arithm},~\ref{lem:ssm_bool},~\ref{lem:mamba_bool},~\ref{lem:ssm_pbp},~\ref{lem:mamba_pbp},~\ref{lem:ssm_word}, and~\ref{lem:mamba_word}.

    We conclude the Selective SSM and Mamba cannot solve the Definition~\ref{def:boolean} and Definition~\ref{def:eval}, and permutation composition problems.

    Thus, we complete the proof.
\end{proof}

\section{More Related Work}\label{sec:more_related}

\paragraph{Theoretical Machine Learning.}
Our work also takes inspiration from the following Machine Learning Theory work. Some works analyze the expressiveness of a neural network using the theory of complexity~\cite{lls+25_gnn,kll+25_var_tc0,lls+24_rope_tensor_tc0,cll+25_var,lll+25_loop,hwl+24,hsk+24}. Some works optimize the algorithms that can accelerate the training of a neural network~\cite{llsz24,klsz24,dlms24,dswy22_coreset,haochen3,haochen4,dms23_spar,cll+25_deskreject,sy23,swyy23,lss+22,lsx+22,hst+22,hsw+22,hst+20,bsy23,dsy23,syyz23_weighted,gsy23_coin,gsy23_hyper,gsyz23,gswy23,syzz24,lsw+24,lsxy24,hsk+24,hlsl24}. Some works analyze neural networks via regressions~\cite{cll+25_icl,gms23,lsz23_exp,gsx23,ssz23_tradeoff,css+23,syyz23_ellinf,syz24,swy23,syz23_quantum,lls+25_grok}. Some works use reinforcement learning to optimize the neural networks~\cite{haochen1,haochen2,yunfan1,yunfan2,yunfan3,yunfan4,lswy23}. Some works optimize the attention mechanisms~\cite{sxy23,lls+24_conv}.

\paragraph{Accelerating Attention Mechanisms.}
The attention mechanism, with its quadratic computational complexity concerning context length, encounters increasing challenges as sequence lengths grow in modern large language models~\cite{gpto1,llama3_blog,claude3_pdf,lls+24_io,smn+24}. To address this limitation, polynomial kernel approximation methods \citep{aa22} have been introduced, leveraging low-rank approximations to efficiently approximate the attention matrix. These methods significantly enhance computation speed, allowing a single attention layer to perform both training and inference with nearly linear time complexity \citep{as23, as24b}. Moreover, these techniques can be extended to advanced attention mechanisms, such as tensor attention, while retaining almost linear time complexity for both training and inference \cite{as24_iclr}.~\cite{kll+25} provides an almost linear time algorithm to accelerate the inference of VAR Transformer. Other innovations include RoPE-based attention mechanisms~\cite{as24_rope,chl+24_rope} and differentially private cross-attention approaches~\cite{lssz24_dp}. Alternative strategies, such as the conv-basis method proposed in \cite{lls+24_conv}, present additional opportunities to accelerate attention computations, offering complementary solutions to this critical bottleneck. Additionally, various studies explore pruning-based methods to expedite attention mechanisms \cite{lls+24_prune,cls+24,llss24_sparse,ssz+25_prune,ssz+25_dit,hyw+23,whl+24,xhh+24}.

\paragraph{Gradient Approximation.}
The low-rank approximation is a widely utilized approach for optimizing transformer training by reducing computational complexity \cite{lss+24,lssz24_tat,as24b,hwsl24,cls+24,lss+24_grad}. Building on the low-rank framework introduced in \cite{as23}, which initially focused on forward attention computation, \cite{as24b} extends this method to approximate attention gradients, effectively lowering the computational cost of gradient calculations. The study in \cite{lss+24} further expands this low-rank gradient approximation to multi-layer transformers, showing that backward computations in such architectures can achieve nearly linear time complexity. Additionally, \cite{lssz24_tat} generalizes the approach of \cite{as24b} to tensor-based attention models, utilizing forward computation results from \cite{as24_iclr} to enable efficient training of tensorized attention mechanisms. Lastly, \cite{hwsl24} applies low-rank approximation techniques during the training of Diffusion Transformers (DiTs), demonstrating the adaptability of these methods across various transformer-based architectures.




\end{document}